\newtheorem{theorem}{Theorem}
\newtheorem{lemma}[theorem]{Lemma}     % share counter with theorem
\newtheorem{corollary}[theorem]{Corollary}
\newtheorem{proposition}[theorem]{Proposition}
\newtheorem{claim}[theorem]{Claim}
\title{Two-Variable Logic for Hierarchically Partitioned and Ordered Data}
\author{%
Oskar Fiuk$^1$\and
Emanuel Kiero\'nski$^1$\and
Vincent Michielini$^2$
\affiliations
$^1$Institute of Computer Science, University of Wroc\l{}aw,\\
$^2$Faculty of Mathematics, Informatics, and Mechanics, Warsaw University\\
\emails
307023@uwr.edu.pl,
emanuel.kieronski@cs.uni.wroc.pl,
michielini@mimuw.edu.pl
}
\begin{document}

\maketitle

\hyphenation{e-xists}
\hyphenation{mis-sing}
\hyphenation{pa-ra-do-xi-cal}

\newcommand{\cD}{\mathcal{D}}
\newcommand{\cE}{\mathcal{E}}
\newcommand{\cP}{\mathcal{P}}
\newcommand{\cF}{\mathcal{F}}
\newcommand{\cQ}{\mathcal{Q}}
\newcommand{\cO}{\mathcal{O}}
\newcommand{\cI}{\mathcal{I}}
\newcommand{\cC}{\mathcal{C}}
\newcommand{\cR}{\mathcal{R}}
\newcommand{\cU}{\mathcal{U}}
\newcommand{\cS}{\mathcal{S}}
\newcommand{\cH}{\mathcal{H}}
\newcommand{\cN}{\mathcal{N}}
\newcommand{\cV}{\mathcal{V}}

\newcommand{\cB}{\mathcal{B}}
\newcommand{\cT}{\mathcal{T}}
\newcommand{\bC}{\mathbf{C}}
\newcommand{\cK}{\mathcal{K}}
\newcommand{\cG}{\mathcal{G}}
\newcommand{\cL}{\mathcal{L}}
\newcommand{\bbP}{\mathbb{P}}
\newcommand{\fA}{\mathfrak{A}}
\newcommand{\fB}{\mathfrak{B}}
\newcommand{\fC}{\mathfrak{C}}
\newcommand{\fD}{\mathfrak{D}}
\newcommand{\fG}{\mathfrak{G}}
\newcommand{\ff}{\mathfrak{f}}
\newcommand{\fg}{\mathfrak{g}}

\renewcommand{\phi}{\varphi} % Nicer-looking phi
\newcommand{\eps}{\varepsilon} % Nicer-looking epsilon

\newcommand{\AAA}{\mbox{\large \boldmath $\alpha$}}
\newcommand{\AAAp}{\mbox{\large \boldmath $\BBB'_{*}$}}
\newcommand{\BBB}{\mbox{\large \boldmath $\beta$}}

\newcommand{\Sat}{\ensuremath{\textit{Sat}}}
\newcommand{\FinSat}{\ensuremath{\textit{FinSat}}}

%Logics
\newcommand{\FO}{\mbox{\rm FO}}
\newcommand{\FOt}{\mbox{$\mbox{\rm FO}^2$}}
\newcommand{\Ct}{\mbox{$\mathcal{C}^2$}}
\newcommand{\UOF}{\mbox{$\mbox{\rm UF}_1$}}
\newcommand{\ODF}{\mbox{$\mbox{\rm UF}_1$}}
\newcommand{\SUOF}{\mbox{$\mbox{\rm SUF}_1$}}
\newcommand{\RUOF}{\mbox{$\mbox{\rm RUF}_1$}}
\newcommand{\GFt}{\mbox{$\mbox{\rm GF}^2$}}
\newcommand{\GF}{\mbox{$\mbox{\rm GF}$}}
\newcommand{\GFTG}{\mbox{$\mbox{\rm GF+TG}$}}
\newcommand{\GFU}{\mbox{$\mbox{\rm GFU}$}}
\newcommand{\FF}{\mbox{$\mbox{\rm FF}$}}
\newcommand{\ALC}{$\cal ALC$}
\newcommand{\UF}{\mbox{$\mbox{\rm UF}_1$}}
\newcommand{\FAUF}{\mbox{$\forall\mbox{\rm -UF}$}}
\newcommand{\MK}{\mbox{$\mbox{\rm K}$}} % K
\newcommand{\MDK}{\mbox{$\mbox{\rm DK}$}} % DK
\newcommand{\DMDK}{\mbox{$\overline{\mbox{\rm DK}}$}} % \bar{DK}
\newcommand{\UNFO}{\mbox{$\mbox{\rm UNFO}$}}
\newcommand{\GNFO}{\mbox{$\mbox{\rm GNFO}$}}
\newcommand{\TGF}{\mbox{$\mbox{\rm TGF}$}}

\newcommand{\GC}{\mbox{$\mbox{\rm GC}$}}
\newcommand{\SC}{\mbox{$\mbox{\rm SC}$}}
\newcommand{\USC}{\mbox{$\mbox{\rm USC}$}}
\newcommand{\GSC}{\mbox{$\mbox{\rm GSC}$}}
\newcommand{\DMK}{\mbox{$\overline{\mbox{\rm K}}$}} % \bar{K}

\newcommand{\CGC}{\mbox{$\bigwedge\mbox{\rm GC}$}}
\newcommand{\CSC}{\mbox{$\bigwedge\mbox{\rm SC}$}}
\newcommand{\CUSC}{\mbox{$\bigwedge\mbox{\rm USC}$}}
\newcommand{\CGSC}{\mbox{$\bigwedge\mbox{\rm GSC}$}}
\newcommand{\CDMK}{\mbox{$\bigwedge{\mbox{\rm MC}}\overline{\mbox{\rm K}}$}} % \bar{K}

\newcommand{\NLogSpace}{\textsc{NLogSpace}}
\newcommand{\NP}{\textsc{NPTime}}
\newcommand{\PTime}{\textsc{PTime}}
\newcommand{\PSpace}{\textsc{PSpace}}
\newcommand{\ExpTime}{\textsc{ExpTime}}
\newcommand{\ExpSpace}{\textsc{ExpSpace}}
\newcommand{\NExpTime}{\textsc{NExpTime}}
\newcommand{\TwoExpTime}{2\textsc{-ExpTime}}
\newcommand{\TwoNExpTime}{2\textsc{-NExpTime}}
\newcommand{\ThreeNExpTime}{3\textsc{-NExpTime}}
\newcommand{\APSpace}{\textsc{APSpace}}
\newcommand{\TOWER}{\textsc{Tower}}

\newcommand{\str}[1]{{\mathfrak{#1}}}
\newcommand{\restr}{\!\!\restriction\!\!}

\newcommand{\N}{{\mathbb N}}   % Natural numbers
\newcommand{\Q}{{\mathbb Q}}   % rationals
\newcommand{\Z}{{\mathbb Z}}   % integers

\newcommand{\sss}{\scriptscriptstyle}

\newcommand{\tp}{{\rm tp}}
\newcommand{\type}[2]{{\rm tp}^{{#1}}[{#2}]}
\newcommand{\absclass}[3]{E^{{#1}}_{{#2}}[{#3}]}
\newcommand{\tet}[2]{{\mathfrak{t}({#1},{#2})}}

\newcommand{\ax}{{\rm Ax}}
\newcommand{\tr}{{\rm Tr}}

\newcommand{\rs}{\bar{r}}
\newcommand{\vs}{\bar{v}}
\newcommand{\xs}{\bar{x}}
\newcommand{\ys}{\bar{y}}
\newcommand{\zs}{\bar{z}}
\newcommand{\cs}{\bar{c}}

\newcommand{\grade}{\textrm{grade}}
\newcommand{\atoms}{\textrm{atoms}}
\newcommand{\freevars}{\textrm{freevar}}
\newcommand{\Cons}{{\rm{Cons}}}
\newcommand{\Rels}{{\rm{Rels}}}
\newcommand{\Vars}{\rm{Vars}}
\newcommand{\arity}{\textrm{ar}}
\newcommand{\mex}{\textrm{mex}}
\newcommand{\proj}{\textrm{proj}}
\newcommand{\rrightarrow}{\mathrel{\mathrlap{\rightarrow}\mkern1mu\rightarrow}}
\newcommand{\SAT}{\rm{SAT}}
\newcommand{\VER}{\rm{VER}}
\newcommand{\GAME}{\textrm{G}}
\newcommand{\POS}{\textrm{POS}}
\newcommand{\chc}{\mathfrak{chc}}
\newcommand{\pat}{\mathfrak{pat}}
\newcommand{\hull}{\textrm{hull}}

\newcommand{\bK}{K}
\newcommand{\bM}{M}

\newcommand{\eqdef}{:=}

\newcommand{\F}{\mathbb{F}}
\newcommand{\Field}{\textrm{GF}}

\newcommand{\poly}{\textrm{poly}}

\newcommand{\Perms}{\textrm{Perms}}
\newcommand{\tuple}[1]{\langle{#1}\rangle}

\newcommand{\arc}{\rightarrow}
\newcommand{\CColour}{\textrm{Control}_{\mu}}

\newcommand{\CRole}{\textrm{Control}_{\mu}}
\newcommand{\CBit}{\textrm{Control}_{\textrm{bit}}}
\newcommand{\Class}{\textrm{Class}}
\newcommand{\mask}{\textrm{repr}}
\newcommand{\select}{\textrm{select}}

\newcommand{\Cell}{\textrm{Cell}}

\newcommand{\bit}{\textrm{bit}}

\newcommand{\omegav}{\omega_{\textrm{V}}}
\newcommand{\omegas}{\omega_{\textrm{S}}}

\let\oldemptyset\emptyset
\let\emptyset\varnothing

\newcommand{\eqsiema}{\stackrel{\text{def}}{=\joinrel=}}

\newcommand{\extfun}{\mathfrak{ext}}

%n with subscripts
\newcommand{\neqv}{n_{eq}}
\newcommand{\nvars}{n_{vars}}
\newcommand{\nconj}{n_{conj}}

\newcommand{\RN}[1]{%
  \textup{\uppercase\expandafter{\romannumeral#1}}%
}

\newcommand{\num}{\mathfrak{n}}

\newcommand{\precEQ}{\ensuremath{{\preceq}^{\scaleto{\subseteq}{5pt}}}}
\newcommand{\succEQ}{\ensuremath{{\preceq}_{\rm{succ}}^{\scaleto{\subseteq}{5pt}}}}

\newcommand{\EQ}{\ensuremath{{\mathcal{EQ}}^{\scaleto{\subseteq}{5pt}}}}
\newcommand{\KEQ}[1]{\ensuremath{{{#1}\text{-}\mathcal{EQ}}^{\scaleto{\subseteq}{5pt}}}}

\begin{abstract}%this is a perfection
  We study Two-Variable First-Order Logic, \FOt{}, under semantic constraints that model hierarchically structured data.
  Our first logic extends \FOt{} with a linear order $<$ and a chain of increasingly coarser equivalence relations $E_1 \subseteq E_2 \subseteq \ldots$.
  We show that its finite satisfiability problem is \NExpTime-complete.
  We also demonstrate that a weaker variant of this logic without the linear order enjoys the exponential model property.
  Our second logic extends \FOt{} with a chain of nested total preorders $\preceq_1 \subseteq \preceq_2 \subseteq \ldots$.
  We prove that its finite satisfiability problem is also \NExpTime-complete.
  However, we show that the complexity increases to \ExpSpace-complete once access to the successor relations of the preorders is allowed.
  Our last result is the undecidability of \FOt{} with two independent chains of nested equivalence relations.
\end{abstract}

\section{Introduction}\label{sec:introduction}

Hierarchically partitioned data are pervasive in modern computer systems.
For example, Geographical Information Services often organise geospatial information using progressively more detailed fields: country, region, state, and city.
Similar hierarchies appear in numerous contexts: Data Storage (organised into folders, subfolders, and files), Network Management (addressing schemes like IPv4/IPv6 with subnet hierarchies), Dependency Maintenance (tools for tracking dependencies between modules, libraries, and services).

To model such hierarchical data, we consider domains in which elements are annotated with data values drawn from potentially infinite or very large domains. A key aspect is that these data values can be tested for equality at multiple levels of precision. This is naturally captured by a family of increasingly coarser equivalence relations: two elements are related by the $k$-th equivalence relation if the $k$-th level equality test holds between their associated data values.

In this work, we establish results on the decidability and complexity of satisfiability problems for several variants of the Two-Variable Fragment of First-Order Logic, \FOt, extended to support such increasingly coarser equivalence relations. 
Our goal is to provide a logical framework that is expressive enough to model complex multi-level relationships while retaining desirable computational properties for reasoning. As most of the investigated logics do not enjoy the finite model property, their general and finite satisfiability problems differ. Our primary focus is on finite satisfiability, while the case of general satisfiability is left for future work.

The motivation for studying \FOt{} stems from its good algorithmic and model-theoretic properties. \FOt{} combines an \NExpTime-complete satisfiability problem and the exponential model property~\cite{GKV97} with a reasonable expressive power. In particular, it embeds (via the so-called \emph{standard translation}) many modal, temporal, and description logics (up to $\cal{ALCIOH^{\cap,\neq}}$).
Also, it is the maximal, in terms of the number of variables, fragment of First-Order Logic with decidable satisfiability problem, as already the Three-Variable Fragment is undecidable \cite{KMW62}. 
In the last few decades, \FOt{} together with its variations have been extensively studied, and plenty of results have been obtained in various scenarios (cf.~Subsection \ref{s:related} on related work). All of this makes \FOt{} often the first-choice option for various reasoning tasks.

In the following subsections of this introduction, we define our logics and present the obtained results (Subsection~\ref{s:contrib}), discuss related work (Subsection~\ref{s:related}), compare the expressive power of our logics (Subsection~\ref{s:applications}), and outline the technical sections that follow (Subsection~\ref{s:outline}).

\subsection{Logics of Interest and Our Results} \label{s:contrib}

Our underlying formalism is the Two-Variable Fragment of First-Order Logic, \FOt, whose formulas may use only the variables $x$ and $y$; any number of unary and binary \emph{common} relation symbols (i.e., with unconstrained interpretations); the equality symbol; and constant symbols, but no function symbols of positive arity. Extensions of \FOt{} are denoted by listing the \emph{special} symbols (i.e., those with constrained interpretations) in brackets, e.g., \FOt$[<,\EQ]$. This notation makes it explicit which additional semantic constraints are imposed on top of the base \FOt{} syntax.

\smallskip
\noindent
{\bf Order on domain elements.}
Let $\EQ$ denote the family of special symbols $E_1,E_2,\dots$ whose interpretations are constrained to \emph{nested} equivalence relations, that is for every $k \in \N$:
(i) the interpretation of $E_{k}$ is an equivalence relation, and
(ii) the interpretation of $E_{k+1}$ is \emph{coarser} than that of $E_{k}$ (i.e., $x E_k y \rightarrow x E_{k+1} y$).
Let $<$ be a special symbol interpreted as a \emph{strict} linear order on domain elements.\footnote{Throughout the paper, we will also use the derived non-strict linear order $\le$, and their analogues for other orders, e.g., $\prec$ and $\preceq$. Notice that in \FOt{} $\le$ is definable from $<$ and vice-versa.}

Our first considered logic is $\FOt[<,\EQ]$, supporting both a linear order on elements and hierarchical data values.

Potential applications of $\FOt[<,\EQ]$ can be found in, e.g., temporal verification of multiprocess systems.
We provide a motivational example of enforcing isolation policy in an environment of processes, containers, and events.

System execution is modeled as a linear sequence of events ordered by time using the relation $<$.
Events are generated by processes, and processes are grouped into containers.
The equivalence relations $E_1$ and $E_2$ capture this hierarchy: $x E_1 y$ holds when events $x$ and $y$ come from the same process, and $x E_2 y$ when they come from possibly distinct processes yet running in the same container.
These are naturally nested: $x E_1 y \rightarrow x E_2 y$.

Importantly, the domain of our model consists only of events: processes are implicitly represented as equivalence classes of the relation $E_1$.
That is, all events belonging to the same process form a single equivalence class of $E_1$. (We assume that every process reports at least one event, e.g., \emph{spawn event}, to ensure that it has a non-empty class.) Likewise, containers are implicitly represented as equivalence classes of the relation $E_2$. (We assume that in every container at least one process is running, e.g., \emph{root process}.)

Process-level properties (e.g., \emph{sandboxed}, \emph{privileged}) are expressed via class-wise unary predicates. For example:
\begingroup
\setlength{\abovedisplayskip}{0.4em}
\setlength{\belowdisplayskip}{0.4em}
\[
  \forall x, y.~x E_1 y \rightarrow \big(\mathrm{sandboxed}(x) \leftrightarrow \mathrm{sandboxed}(y)\big)
\]
\endgroup
This ensures that all events from the same process share the same \emph{sandboxed} or \emph{not sandboxed} label, even though the logic only quantifies over events.

Using the above described encoding, we can impose the following isolation policy:
\emph{``A sandboxed process must not communicate with events outside its container unless an explicit grant was made before from a privileged process.''}

This policy can be formalised with a sentence:
\begin{align*}
  &\forall x_1.~\big(\mathrm{sandboxed}(x_1) \land \phi_{\mathrm{cross\text{-}container\text{-}message}}(x_1)\big) \rightarrow \\
  &~~~~\big(\exists y_1.~y_1<x_1 \land x_1{E_1}y_1 \wedge \phi_{\mathrm{permission\text{-}grant}}(y_1) \big),
\end{align*}
where $\phi_{\mathrm{cross\text{-}container\text{-}message}}(x_1)$ stands for the formula
\begingroup
\setlength{\abovedisplayskip}{0.4em}
\setlength{\belowdisplayskip}{0.4em}
\[\exists y_2.~x_1 < y_2 \land \neg x_1 E_2 y_2 \land \mathrm{message}(x_1,y_2),\]
\endgroup
and $\phi_{\mathrm{permission\text{-}grant}}(y_1)$ for
\begingroup
\setlength{\abovedisplayskip}{0.4em}
\setlength{\belowdisplayskip}{0.4em}
\[\exists x_3.~\mathrm{privileged}(x_3) \land x_3 < y_1 \land \mathrm{grant}(x_3,y_1).\]
\endgroup

In the above formulas, we annotated variables as $x_1$, $y_1$, $y_2$, $x_3$ for readability to reflect their roles in different processes.
However, only two variables $x$ and $y$, with reuse across quantifiers, are sufficient to express this property.

As stated above, all variables refer to events: $x_1$ is a \emph{dispatch message} event from Process~1 (the \emph{sender}) that initiates the communication, and $y_2$ is a \emph{deliver message} event from Process~2 (the \emph{receiver}) that receives the message from Process~1, as indicated by $\mathrm{message}(x_1, y_2)$. The event $y_1$ is an earlier event from the same process as $x_1$, representing the \emph{grant acknowledge} event, marking the point at which Process~1 becomes aware of and is authorised to act on the granted permission. Finally, $x_3$ is a \emph{grant authorise} event from Process~3 (a \emph{privileged admin}) that issues the permission and notifies Process~1, as indicated by $\mathrm{grant}(x_3, y_1)$.

The intended temporal order of events $x_3{<}y_1{<}x_1{<}y_2$ is enforced:
the permission is issued before it is acknowledged ($x_3{<}y_1$);
the message is sent after the permission being acknowledged ($y_1{<}x_1$) and before being delivered ($x_1{<}y_2$).

Notice also that, to detect that Process~1 and Process~2 are running in distinct containers, we refer to their representative events using $\neg x E_2 y$. Since $E_1$ is nested within $E_2$, it follows that if two events are not $E_2$-related, then their respective $E_1$-classes (i.e., processes) must also belong to distinct $E_2$-classes (i.e., containers).

\smallskip
Our first main contribution is establishing the complexity of the finite satisfiability problem for \FOt{}$[<,\EQ]$:

\begin{restatable}{theorem}{secondTheorem}\label{t:two}
  Finitely satisfiable sentences of \FOt$[<,\EQ]$ admit models of exponential size.
  The finite satisfiability problem for \FOt$[<,\EQ]$ is \NExpTime-complete.
\end{restatable}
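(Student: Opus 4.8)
The lower bound is free: plain \FOt{} already has an \NExpTime-hard (finite) satisfiability problem~\cite{GKV97}, and a sentence over the pure signature is finitely satisfiable iff it is so over the extended signature of \FOt$[<,\EQ]$ (interpret $<$ as any linear order and every $E_k$ as equality, which is trivially nested; conversely forget $<$ and the $E_k$). So the real content is the exponential-model property, which then yields the \NExpTime{} upper bound in the standard way: guess a structure of exponential size together with the interpretations of all symbols, and verify in time polynomial in its size both that it satisfies $\varphi$ and that $<$ and the $E_k$ meet their semantic constraints. I will therefore concentrate on the model bound.

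\textbf{Normal form and terminology.} First I would pass to Scott normal form: in polynomial time, over an extended signature, produce an equisatisfiable sentence $\forall x\forall y\,\chi_0 \wedge \bigwedge_{i=1}^{n}\forall x\exists y\,\chi_i$ with quantifier-free $\chi_0,\chi_i$ mentioning only $E_1,\dots,E_m$ for some $m\le|\varphi|$. Writing $E_0$ for equality and $E_{m+1}$ for the full relation, nestedness makes the $E_k$-classes a forest of depth $m+1$ whose leaves are singletons; for $a\neq b$ there is a well-defined \emph{level} $\delta(a,b)\in\{1,\dots,m+1\}$, the least $k$ with $a E_k b$, i.e.\ the depth of their least common ancestor. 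The quantifier-free description of a pair $(a,b)$ is then fixed by the $1$-type of $a$, the $1$-type of $b$, whether $a<b$ or $b<a$, the level $\delta(a,b)$, and the common binary atoms on $(a,b)$ (together with the treatment of the finitely many constants, which I fold in as distinguished elements); there are $2^{\poly(|\varphi|)}$ such pair-descriptions. The conjunct $\chi_0$ selects the admissible pair-descriptions, and each $\chi_i$ is a \emph{requirement}: every element must see a witness realising one of a prescribed set of pair-descriptions.

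\textbf{Shrinking the model.} Given a finite $\mathfrak{A}\models\varphi$, I would contract it by recursion on $m$, decomposing the domain into its top-level ($E_m$-)classes. The key observation separating the levels is that two elements lying in different top-level classes have level $m+1$, so their description depends only on the two $1$-types, the order between them, and their common binary atoms — distinct top-level classes ``see'' each other only through $<$ and their $1$-type content, a blurred \FOt$[<]$ picture. This yields two reductions. First, the \emph{number} of top-level classes can be cut to $2^{\poly(|\varphi|)}$: keep, for each $1$-type and each relevant pattern of its occurrences along $<$, only boundedly many classes witnessing it, delete the rest, and re-route into the survivors the requirements that pointed into deleted classes, using the standard linear-order fact that a $<$-directed requirement, once witnessed anywhere, is witnessed by an extremal realiser. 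Second, each surviving top-level class — with $E_m$ now total on it, hence a \FOt$[<,E_1,\dots,E_{m-1}]$-structure, carrying in addition the level-$(m{+}1)$ (so $1$-type-only) obligations it owes to and is owed by the other classes — can be shrunk recursively. The base case $m=0$ is \FOt$[<]$, whose exponential-model property I would establish by the usual analysis (bounding essentially the number of maximal $<$-intervals of constant $1$-type behaviour that must be retained). Since there are $m+1\le|\varphi|+1$ levels and each contributes a factor $2^{\poly(|\varphi|)}$, the composed bound is still $2^{\poly(|\varphi|)}$.

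\textbf{Main obstacle.} The delicate part is carrying out these deletions, duplications and re-insertions while keeping everything globally coherent: $<$ must remain a linear order and the $E_k$ must remain genuinely nested equivalence relations, so each modification of the forest or of the order is constrained at all levels simultaneously, and no pair may acquire a description inadmissible for $\chi_0$ while every requirement $\chi_i$ is restored for every surviving element. Because a single element lies at a definite position of both the order and the forest and participates in many requirements at once, local surgery has global repercussions; choosing where re-inserted witnesses land in $<$ and which class they join at each level — i.e.\ getting the interaction between the linear order and the nesting under control — is where the real work is. Once the exponential-model property is secured, \NExpTime{}-completeness follows as described above.
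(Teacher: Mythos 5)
Your outline---derive \NExpTime-membership from an exponential-model property, prove that property by shrinking the model level by level, observe that distinct top-level classes see each other only through $1$-types, the order, and the common binary atoms, and keep only $<$-extremal realisers of each $1$-type---is essentially the paper's strategy, and your lower-bound and guess-and-check remarks are standard and fine. But the proposal has a genuine gap exactly where you yourself locate ``the real work'': you never say how the deletions and re-routings are carried out so that $<$ remains a linear order, the $E_k$ remain nested equivalences, and no pair acquires a description forbidden by $\chi_0$ while every existential requirement is restored. Announcing this obstacle is not resolving it, and resolving it is the actual content of the proof.

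The paper closes the gap with a replacement lemma that avoids all re-insertion and reordering: for each $k$ and each $E_{k+1}$-class $\cC$, the smaller model lives on the domain $(A\setminus\cC)\cup\cD$ with $\cD\subseteq\cC$ a union of \emph{whole} $E_k$-classes, so $<$ and every $E_j$ are simply inherited from the original model and global coherence is automatic; only the common binary atoms of selected pairs are redefined, and each redefined $2$-type is copied from a pair of the original model whose order- and equivalence-atoms agree with the inherited ones (the ``same $a$-configuration'' condition). Two concrete devices you are missing make this work. First, the kept set $W_{\RN{1}}$ consists, for each $1$-type $\alpha$, of the $2M$ $<$-extremal elements of the set $V_\alpha$ of the $M$ $<$-extremal $\alpha$-realisers taken \emph{in every $E_k$-class of $\cC$ separately}; the per-class selection together with the doubling from $M$ to $2M$ is what guarantees that an element can always find the required witnesses outside its own $E_k$-class, a constraint your ``extremal realiser'' remark ignores. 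Second, witnesses for the kept elements themselves are supplied by the three-layer circular scheme $W_{\RN{1}},W_{\RN{2}},W_{\RN{3}}$ \`a la Gr\"adel--Kolaitis--Vardi, not by your unspecified recursion into each class. With this lemma in hand, a minimal finite model has at most $12\cdot M^3\cdot|\AAA|$ many $E_k$-classes per $E_{k+1}$-class, and the exponential bound follows by multiplying over the $K+1$ levels, as in the last step of your sketch.
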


Notice that $\FOt{}[<,\EQ]$ does not include the induced successor of $<$.\footnote{Notice that the induced successor is definable in $\FO$, yet not in $\FOt{}$, by the formula $\phi(x,y) \eqdef x < y \wedge \forall z.~(x < z \rightarrow y \le z)$.}
\cite{BB07} show that adding it leads to undecidability (see: Subsection~\ref{s:related}).

\smallskip
\noindent
{\bf Order on data values.}
We consider now a different way to incorporate a linear order into our scenario. We trade an order on domain elements for a family of nested linear orders on data values, i.e., on equivalence classes of $E_1,E_2,\dots$.

Let $\precEQ$ denote the family of special symbols $\preceq_1,\preceq_2,\dots$ whose interpretations are constrained to \emph{nested} total preorders, that is for each $k \in \N$: (i) the interpretation of $\preceq_{k}$ is a total preorder\footnote{A \emph{total preorder} is a transitive relation $\preceq$ such that, for every $x,y$, either $x \preceq y$ or $y \preceq x$ holds (in particular, $x \preceq x$ holds).}, and (ii) the interpretation of $\preceq_{k}$ is a subrelation of the interpretation of $\preceq_{k+1}$ (i.e. $x{\preceq_k}y \rightarrow x{\preceq_{k+1}}y$).

For example, interpretations over $\N$ defined by $n \preceq_k m$ iff $\lfloor \frac{n}{10^k} \rfloor \le \lfloor \frac{m}{10^k} \rfloor$ satisfy the above requirements.

Our second logic is $\FOt[\precEQ]$, supporting hierarchical data values that can be compared at multiple levels of granularity using less-than, equal, and greater-than comparison tests.

In \FOt{}$[\precEQ]$, we naturally keep nested equivalence relations $E_1,E_2,\dots$ Their interpretation is now given by the $\preceq_k$-equivalent elements: $x E_k y \leftrightarrow x \preceq_k y \wedge y \preceq_k x$.

Natural applications of $\FOt[\precEQ]$ arise in temporal reasoning tasks that involve increasingly fine-grained notions of time.
A representative case study is a system supporting atomic transactions.
Here, the elements of the structure represent low-level operations.
The finest preorder $\preceq_1$ models the underlying timeline, where $\preceq_1$-equivalent elements are treated as occurring in parallel.
Transactions are modeled implicitly as $E_2$-equivalence classes, ordered chronologically by $\preceq_2$.
Coarser preorders can be used to represent higher-level structures such as $\preceq_3$ for commit order, $\preceq_4$ for versioning, and so on.
With this encoding, we can express that, e.g., ``\emph{raised exceptions must be handled in the future, yet within the same transaction}'' as follows:
\begin{align*}
    \forall x.~\mathrm{exception}(x) \rightarrow \exists y.~x \prec_1 y \land x E_2 y \land \mathrm{handles}(y, x)
\end{align*}

For \FOt$[\precEQ]$, we establish the following theorem:

\begin{theorem}\label{t:three}
  Finitely satisfiable sentences of \FOt$[\precEQ]$ admit models of exponential size.
  The finite satisfiability problem for \FOt$[\precEQ]$ is \NExpTime-complete.
\end{theorem}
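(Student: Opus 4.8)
The \NExpTime lower bound is inherited: plain \FOt{} (a sentence mentioning no special symbols) already lies in \FOt$[\precEQ]$ and its (finite) satisfiability problem is \NExpTime-hard~\cite{GKV97}. For the matching upper bound it suffices to establish the exponential model property, since then one guesses a structure of size $2^{\poly(|\phi|)}$ and verifies in deterministic exponential time that $\preceq_1,\dots,\preceq_m$ are nested total preorders, that the induced relations $E_k$ are as prescribed, and that the Scott-normalised sentence holds. So the content of the theorem is the exponential model property.

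The plan is to reduce to Theorem~\ref{t:two}. Put $\phi$ in Scott normal form; after renaming, the preorder symbols it mentions are $\preceq_1,\dots,\preceq_m$ with $m\le|\phi|$ (unmentioned intermediate preorders may be identified with the next mentioned one). Given a finite $\mathfrak{B}\models\phi$, refine $\preceq_1^{\mathfrak{B}}$ to a strict linear order $<$ on the domain, breaking ties inside each $E_1$-class arbitrarily. Two facts drive the reduction. First, nestedness forces every $E_k$-class to be $\preceq_1$-convex, hence $<$-convex: if $a\,E_k\,a'$ and $a\preceq_1 b\preceq_1 a'$ then $a\preceq_k b\preceq_k a'$ and $a'\preceq_k a$, so $b\,E_k\,a$. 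Second, in any structure where the $E_k$ form a nested chain of equivalences, $<$ is linear, and each $E_k$-class is a $<$-interval, we have $x\prec_k y\leftrightarrow(x<y\wedge\neg x E_k y)$ and $x\preceq_k y\leftrightarrow(x<y\vee x E_k y)$. Substituting these definitions for the preorder atoms in $\phi$ yields a sentence $\phi'$ of \FOt$[<,\EQ]$ such that $\mathfrak{B}$, read as an \FOt$[<,\EQ]$-structure, satisfies $\phi'$ and has $<$-convex $E_k$-classes; conversely, from any finite model of $\phi'$ whose $E_k$-classes are $<$-intervals, reading off the derived $\preceq_k\eqdef\{(x,y):x<y\}\cup E_k$ recovers a model of $\phi$.

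The \emph{main obstacle} is that an exponential model of $\phi'$ delivered by Theorem~\ref{t:two} need not have $<$-convex $E_k$-classes, and without convexity the formulas above no longer define nested preorders, so the inverse translation breaks. I would resolve this in one of two ways. The clean route is to inspect the model-shrinking construction underlying Theorem~\ref{t:two} and verify that it preserves $<$-convexity of the $E_k$-classes --- I expect its contractions to be ``interval surgeries'' that cut and glue the linear order along class boundaries, so convex classes stay convex. The self-contained route is to run the contraction directly on $\mathfrak{B}$ over the depth-$m$ hierarchy of $E_k$-classes, bottom-up: first bound each $E_1$-class by keeping only polynomially many elements of each realised $1$-type (there are $2^{\poly(|\phi|)}$ of them); then, level by level, bound the number of sibling $E_k$-blocks inside each $E_{k+1}$-class by keeping, for each ``block type'' (the $1$-types occurring in the block plus the witnessing data it exposes to the outside) and each $\forall\exists$-conjunct, only a constant number of consecutive blocks and merging the rest, so the size multiplies to $(2^{\poly(|\phi|)})^m=2^{\poly(|\phi|)}$. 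In either case the delicate point is the witness bookkeeping across levels: a witness demanded by some $\forall x\exists y\,\psi_i$ may sit in the same $E_1$-class, in an earlier or later sibling block at some level, or in a cousin block at a coarser level, and the surgery must leave, after every retained demanding element, a retained block that still provides it --- all while keeping blocks convex and without disturbing the universal conjunct, which survives automatically in induced substructures. Combined with the guess-and-check argument above, this yields the claimed \NExpTime-completeness.
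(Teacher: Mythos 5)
Your proposal is correct and follows essentially the same route as the paper: reduce to \FOt$[<,\EQ]$ by breaking ties in $\preceq_1$, observe that \FOt$[\precEQ]$-models correspond exactly to \FOt$[<,\EQ]$-models whose $E_k$-classes are $<$-intervals (with $\preceq_k$ definable as $x\le y\vee xE_ky$), and verify that the small-model surgery of Lemma~\ref{l:replacement} preserves this interval property. Your ``clean route'' is precisely what the paper does: since that construction only passes to a substructure (removing whole $E_k$-classes) and redefines some $2$-types while leaving $<$ and all $E_k$ inherited (Item~\ref{smallclaimfo2item2} of Claim~\ref{l:smallclaimfo2}), and the interval property survives restriction to a subdomain, convexity is indeed preserved.
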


\smallskip
\noindent
{\bf Adding successors on data values.}
We consider now a logic \(\FOt[\succEQ]\) that enriches the syntax of \(\FOt[\precEQ]\) by adding, for each \(\preceq_k\)-symbol, its induced successor predicate \(\cS_k\), defined as
\(
  \cS_k(x,y) \eqdef x{\prec_k}y \land \forall z.~\big(x{\prec_k}z \rightarrow y{\preceq_k}z\big).
\)

Applications of \(\FOt[\succEQ]\) naturally extend those of \(\FOt[\precEQ]\); we continue with the transaction-based system scenario.
For example, we can express the property:  
\emph{``If a transaction fails, then in the immediate \(\preceq_1\)-successor time-slot a rollback must occur, using the last available snapshot from the immediate \(\preceq_2\)-predecessor transaction.''} Formally:
\begin{align*}
  &\forall x.~\mathrm{fail}(x) \rightarrow 
    \big(\exists y.~\mathrm{rollback}(y) \land \cS_1(x,y) \land x E_2 y \ \land \\
  &\quad\exists x.~\mathrm{last\text{-}snapshot}(x) \land \cS_2(x,y) \land \mathrm{restore\text{-}to}(y,x)\big)
\end{align*}
Here, \(\cS_1(x,y) \land x E_2 y\) ensures that the rollback follows the failure in the very next \(\preceq_1\)-time-slot within the same transaction,  
while \(\cS_2(x,y)\) ensures that the snapshot belongs to the transaction immediately preceding the one containing the rollback.  
In particular, this guarantees that the snapshot was created before the failure.
Additional axioms can be imposed to refine the scenario further: for instance, we can define \(\mathrm{last\text{-}snapshot}\) as the \(\preceq_1\)-maximum snapshot within each \(E_2\)-class,  
or require that after a rollback the system proceeds directly to the \(\preceq_2\)-successor transaction.

Now we state the second main contribution of this paper:

\begin{restatable}{theorem}{fourthTheorem}\label{t:four}
  Finitely satisfiable sentences of \FOt$[\succEQ]$ admit models of doubly exponential size.
  The finite satisfiability problem for \FOt$[\succEQ]$ is \ExpSpace-complete.
\end{restatable}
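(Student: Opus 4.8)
The plan is to extend the toolkit developed for Theorem~\ref{t:three}: normalise, describe the tree shape of finite models, run a small‑model surgery, and read off the complexity — but now the successor symbols $\cS_1,\dots,\cS_m$ are the source of all the extra difficulty, and controlling them is where the work is. First I would put a sentence $\phi$ (using $\preceq_1,\dots,\preceq_m$, hence $E_1,\dots,E_m$ and $\cS_1,\dots,\cS_m$) into Scott normal form $\forall x\forall y\,\chi \wedge \bigwedge_i \forall x\exists y\,\psi_i$, and record the observation that every finite model splits into a tree of depth $m{+}1$: level‑$k$ nodes are the $E_k$‑classes, the children of a level‑$(k{+}1)$ node are the level‑$k$ nodes inside it — ordered by $\preceq_k$, which is consistent with all coarser orders precisely because the chain is nested — the leaves are the domain elements, and a virtual root tops the $\preceq_m$‑ordered list of $E_m$‑classes. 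Here $xE_ky$ is ``same level‑$k$ ancestor'', $x\prec_ky$ the induced order, and $\cS_k(x,y)$ says $y$'s level‑$k$ node is the immediate $\prec_k$‑successor of $x$'s — the next sibling, unless $x$'s node is the last child of its parent, in which case it is the first level‑$k$ node of the next parent, and so on upward: an odometer‑style carry at block boundaries. I would then attach to each level‑$k$ subtree a \emph{$k$‑type} recording exactly what the rest of the model must know: the set of $1$‑types it realises; for each requirement $\forall x\exists y\,\psi_i$ and each realised $1$‑type, whether it is already served internally or needs an outside witness and where that witness sits relative to the subtree; and — the ingredient the successors force — its \emph{boundary profile}: recursively, the set of $1$‑types realised in its $\prec_1$‑least and $\prec_1$‑greatest level‑$k'$ descendant for every $k'\le k$, which is what one needs to find a $\cS_{k'}$‑witness just across a block boundary and to evaluate $\chi$ on the boundary pairs that the successors force close.

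Given this, the small‑model argument is surgery on a finite model: at an internal node the child sequence is a word over $(k{-}1)$‑types, and if it is long one cuts out a factor lying strictly between two occurrences of the same type — doing so away from the two endpoints so the boundary profile survives, and keeping, for every still‑pending $\cS$‑adjacency demand, an adjacent pair that meets it. Unlike the successor‑free setting of Theorem~\ref{t:three}, where equal children can be merged and each internal node collapsed to exponentially few children, here the successors make the children rigid, so a node may genuinely need doubly‑exponentially many distinct ones (a $(k{-}1)$‑type carries a \emph{set} of the exponentially many $1$‑types, so there are doubly‑exponentially many of them); still, a doubly‑exponentially long sequence suffices, each $k$‑type has only exponential \emph{size}, and multiplying over the $m\le|\phi|$ levels and the at‑most‑exponential leaves yields a model of size $2^{2^{O(|\phi|)}}$.

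For the decision procedure a doubly‑exponential model is too big to guess, so I would instead check realisability recursively and nondeterministically: $\textsf{verify}(k,t)$ walks a guessed sequence of $(k{-}1)$‑type children — its length capped by the surgery bound and tracked by an exponential‑bit counter — recursively calls $\textsf{verify}(k{-}1,\cdot)$ on each, verifies on the fly the $\cS$‑adjacency demands and the forced $\chi$‑pairs between consecutive children, and at the end checks that the accumulated $1$‑type set, boundary profile, and outstanding requirements equal $t$; the top call handles the virtual root, and $\textsf{verify}(1,\cdot)$ just checks a single $E_1$‑class (pairwise $\chi$ on its $1$‑types, internal requirements served). Since each $k$‑type is only exponentially large and at most $m{+}1$ of them plus the counters are live on the recursion stack at once, this runs in nondeterministic exponential space, hence — by Savitch's theorem — in $\ExpSpace$. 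The matching lower bound comes from a reduction from $2^n$‑corridor tiling (equivalently, acceptance of an exponential‑space Turing machine): cells are $E_1$‑classes, rows are $E_2$‑classes, $\cS_1$ gives horizontal adjacency with the carry at a row's end detected by the $\FOt$‑definable ``first/last $E_1$‑class of its $E_2$‑class'', $\cS_2$ steps to the next row, and an $n$‑bit unary counter both pins every row to width exactly $2^n$ and lets the formula name the cell one row down in the same column via $\forall x\exists y\,(\cS_2(x,y)\wedge\bigwedge_j(b_j(x)\leftrightarrow b_j(y)))$, so the tiling and transition constraints become ordinary $\FOt$ axioms over the pairs linked by $\cS_1$ and by $\cS_2$‑with‑matching‑counter. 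I expect the main obstacle to be the surgery step: pinning down the right invariant — the recursively nested boundary profile — so that removing a factor from a child sequence changes neither how the node looks to its parent nor the existence of any successor‑witness, which forces one to track the carry behaviour of the $\cS_k$ at every block boundary with care; once that invariant is in place, both the doubly‑exponential bound and the exponential‑space procedure follow.
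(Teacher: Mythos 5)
Your proposal is correct in outline but takes a genuinely different route from the paper's. The paper proves the small‑model property (Lemma~\ref{l:pumping}) by direct surgery on a concrete finite model: inside a fixed $E_{k+1}$‑class it separates $\cC$‑witnesses into \emph{remote} ones (handled by a three‑stage $W^{\mathrm{rem}}$ construction, as in the successor‑free case) and \emph{local} ones (handled by per‑pair sets $W^{\mathrm{loc}}[i,i{+}1]$ attached to adjacent $E_k$‑classes), then pigeonholes on the \emph{canonical isomorphism type} of the $W^{\mathrm{loc}}$ blocks to find a reducible segment $[p,q]$ to excise. Crucially, a canonical isomorphism carries the full induced substructure together with the witness‑function pattern, which is exactly what lets the paper glue $\cE_{p-1}$ to $\cE_{q+1}$, copy $2$‑types from $V_p$ to $V_{q+1}$, and reroute all dangling witnesses without ever having to invent new $2$‑types from scratch. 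Your plan instead abstracts each $E_k$‑subtree to a ``$k$‑type'' (realised $1$‑types, witness‑coverage status, recursively nested boundary profile) and pumps the child word at each internal node. That is a clean compositional/automata‑theoretic reformulation, and it does buy a tidy recursive $\textsf{verify}$ procedure for the upper bound where the paper instead uses a single forward scan with a compressed counting function $F_i$. However, your boundary profile as stated records only sets of $1$‑types at the $\prec_{k'}$‑extremal descendants, which by itself does not certify that, after the cut, the newly $\cS_{k'}$‑adjacent boundary elements can be given mutually consistent $2$‑types satisfying $\psi_0$ and preserving the needed witness edges; you gesture at this being ``the main obstacle,'' and the paper's fix is precisely to strengthen ``same type'' to ``canonically isomorphic induced substructure with matching $\ff_m$'' so that $2$‑types and witnesses can be transported rather than reinvented. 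If you enrich your $k$‑type analogously (e.g.\ by including the realised boundary $2$‑type library and witness assignments, not just the $1$‑type sets), the argument should go through. Your lower bound is also correct but uses two preorders ($\cS_1$ for horizontal and $\cS_2$ for vertical adjacency, with $E_1$‑classes as cells); the paper's reduction makes domain elements the cells, encodes the horizontal coordinate in $n$ bits, uses $E_1$‑classes as rows and $\cS_1$ for vertical stepping, and thus shows \ExpSpace‑hardness already with a \emph{single} preorder and its successor, a strictly sharper statement.
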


\smallskip\noindent
{\bf Absence of orders.}
Both \FOt$[<,\EQ]$ and \FOt$[\precEQ]$ admit sentences that enforce infinite models (e.g., $\forall x.~\exists y.~x{<}y$ and $\forall x.~\exists y.~x{\prec_1}y$).
This contrasts with pure \FOt{} which enjoys the \emph{finite model property} (i.e., every satisfiable sentence has a finite model).
A natural question arises: Can nested equivalence relations alone enforce infinite models, or is this phenomenon solely due to the presence of linear orders?

Let \FOt$[\EQ]$ denote \FOt{} extended with the family of nested equivalence relations $E_1,E_2,\dots$.
We answer that \FOt$[\EQ]$ does indeed enjoy the finite model property:

\begin{theorem}\label{t:one}
  Satisfiable sentences of \FOt$[\EQ]$ admit models of exponential size.
  The satisfiability and finite satisfiability problems for \FOt$[\EQ]$ coincide and are \NExpTime-complete.
\end{theorem}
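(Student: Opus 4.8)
The plan is to reduce \FOt$[\EQ]$ to pure \FOt{} (or to a well-understood decidable extension) by a ``finite truncation'' argument on the chain of equivalence relations, and then invoke the exponential model property and \NExpTime{} upper bound of \FOt{}~\cite{GKV97}. First I would observe that only finitely many of the symbols $E_1, E_2, \dots$ actually occur in a given sentence $\phi$; say they are $E_{i_1}, \dots, E_{i_m}$ with $i_1 < \dots < i_m$. Because the relations are nested and the sentence cannot ``see'' the indices, the only constraints \FOt$[\EQ]$ imposes beyond \FOt{} are: each $E_{i_j}$ is an equivalence relation, and $E_{i_1} \subseteq E_{i_2} \subseteq \dots \subseteq E_{i_m}$. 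All of these are expressible by a fixed \FOt{} sentence $\psi_{\mathrm{ax}}$ (reflexivity, symmetry, transitivity, and the inclusions $x E_{i_j} y \to x E_{i_{j+1}} y$ are all two-variable). Hence $\phi$ is satisfiable over nested equivalence relations iff $\phi \wedge \psi_{\mathrm{ax}}$ is satisfiable in the ordinary \FOt{} sense, and conversely any model of $\phi \wedge \psi_{\mathrm{ax}}$ extends to a model of the full infinite chain by setting $E_k := \mathrm{id}$ for $k < i_1$, $E_k := E_{i_j}$ for $i_j \le k < i_{j+1}$, and $E_k := D \times D$ (the full relation) for $k \ge i_m$.

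The lower bound is immediate, since pure \FOt{} already has an \NExpTime-hard satisfiability problem~\cite{GKV97} and \FOt$[\EQ]$ syntactically contains it. The exponential model bound and the matching upper bound follow from the \FOt{} results applied to $\phi \wedge \psi_{\mathrm{ax}}$: that sentence has length linear in $|\phi|$, so a model of size $2^{O(|\phi \wedge \psi_{\mathrm{ax}}|)} = 2^{O(|\phi|)}$ exists whenever it is satisfiable, and this is exactly a model of $\phi$ over nested equivalence relations after the re-indexing above. Finite and unrestricted satisfiability coincide for the same reason: \FOt{} has the finite model property, and the truncation construction produces a finite model of \FOt$[\EQ]$ from any (finite) \FOt{}-model.

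The only genuinely delicate point is making sure the transfer is exactly an equivalence, i.e.\ that ``nestedness of an infinite chain'' adds nothing over ``nestedness of the finitely many occurring relations''. This is where the observation that \FOt{} sentences are invariant under the specific re-indexing $E_k \mapsto E_{i_j}$ (with the degenerate endpoints $\mathrm{id}$ and $D\times D$) does the work: one checks that the degenerate relations $\mathrm{id}$ and $D \times D$ are themselves equivalence relations, that $\mathrm{id} \subseteq E_{i_1}$ and $E_{i_m} \subseteq D\times D$ so the whole infinite chain is nested, and that inserting or duplicating relations not named in $\phi$ cannot change the truth value of $\phi$. I expect the bulk of the write-up to be this routine but slightly fiddly verification; there is no combinatorial obstacle comparable to the order-based cases, which is precisely why \FOt$[\EQ]$ retains the finite model property that \FOt$[<,\EQ]$ and \FOt$[\precEQ]$ lose.
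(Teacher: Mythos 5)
Your reduction collapses at its very first step: the constraint that each $E_{i_j}$ is an equivalence relation is \emph{not} expressible in \FOt{}. Reflexivity, symmetry, and the inclusions $x E_{i_j} y \to x E_{i_{j+1}} y$ are indeed two-variable, but transitivity is $\forall x,y,z.\,(xEy \wedge yEz) \to xEz$, which genuinely needs three variables and has no \FOt{} equivalent. So the sentence $\psi_{\mathrm{ax}}$ you propose does not exist, and $\phi \wedge \psi_{\mathrm{ax}}$ being satisfiable as a plain \FOt{} sentence would say nothing about satisfiability over structures where the $E_k$ are actually equivalences. There is also a quick sanity check that should have flagged the problem: if equivalence-ness were \FOt{}-axiomatisable, then \FOt{} with any number of (not necessarily nested) equivalence relations would inherit the \NExpTime{} bound and the finite model property of \FOt{} --- but, as the paper's related-work discussion recalls, \FOt{} with two equivalence relations already lacks the finite model property and is \TwoNExpTime-complete, and with three it is undecidable. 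The decidability of \FOt$[\EQ]$ is a consequence of \emph{nestedness}, not of a syntactic embedding into \FOt{}.

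The paper's actual proof is combinatorial and mirrors the argument for Theorem~\ref{t:two}: one starts from an arbitrary (possibly infinite) model and repeatedly applies a replacement lemma in the style of Lemma~\ref{l:replacement}, shrinking one $E_{k+1}$-class at a time to a union of boundedly many $E_k$-classes. The small witness sets $W_{\RN{1}}, W_{\RN{2}}, W_{\RN{3}}$ are chosen so that they supply $\cC$-witnesses for one another in a circular fashion, and the $2$-types between the remaining elements and $W_{\RN{1}}$ are redefined to reroute the lost witnesses; the absence of the order only simplifies matters (extremal elements are replaced by arbitrary ones), while the possible infinity of the starting model is handled by passing to a limit structure. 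An induction over the levels $k = 0, \dots, K$ then yields the exponential bound $|\AAA|^{O(K)} \cdot \mathrm{poly}(M)^{O(K)}$, which simultaneously gives the finite model property, the coincidence of the two satisfiability problems, and the \NExpTime{} upper bound. Only your lower-bound remark (hardness inherited from pure \FOt{}) is correct as stated.
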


\smallskip\noindent
{\bf Undecidability.}
A natural candidate to explore next is a logic supporting two independent families of nested equivalence relations $E_1,E_2,\dots$ and $F_1,F_2,\dots$.
We prove that reasoning in such a logic is undecidable---even in a very restricted setting---when each family has length~$2$ and the vocabulary, except the four special equivalence symbols, is composed of unary predicates only.

\begin{restatable}{theorem}{undecTheorem}\label{t:und}
  The satisfiability and finite satisfiability problems are undecidable for the constant-free, equality-free, monadic fragment of \FOt{}
  extended with four special symbols $E_1, E_2, F_1, F_2$,
  interpreted as equivalence relations such that $E_2$ is coarser than $E_1$ and $F_2$ is coarser than $F_1$. 
\end{restatable}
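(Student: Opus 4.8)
The plan is to reduce from an undecidable tiling-type problem, most naturally the halting problem for two-counter (Minsky) machines or, equivalently, to encode an $\N\times\N$ grid. The classical obstacle in two-variable logic is that one cannot directly talk about a grid; the standard trick (as in the undecidability proofs for \FOt{} with two equivalence relations or with two linear orders) is to use the two equivalence families to simulate the two ``successor'' directions of the grid. Concretely, I would let the domain elements be the grid cells, use the $E$-family to organise cells into \emph{columns} and the $F$-family to organise them into \emph{rows}: an $E_1$-class will be a single cell, an $E_2$-class will be a whole column, an $F_1$-class a single cell, and an $F_2$-class a whole row. Thus each cell is pinned down as the unique intersection of its $E_2$-class and its $F_2$-class, and the nesting $E_1\subseteq E_2$, $F_1\subseteq F_2$ is exactly what lets a single \FOt{} formula assert ``stay in the same column'' or ``same row'' while moving to a genuinely different cell.

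The key steps, in order, are as follows. First, fix unary predicates to impose a ``coordinate grid'' colouring: a constant number of unary symbols marking, say, the parity of the row- and column-index (a few colours in each direction suffice, in the spirit of the usual $\FOt$ grid encodings), plus predicates for a distinguished origin cell and for the tile types. Second, write $\FOt$ axioms saying that every cell has a horizontal neighbour in the next column (same $F_2$-row, adjacent column-colour pattern, linked by a ``right'' relation that must be realised within two variables) and a vertical neighbour in the next row, and that these two moves \emph{commute} — this confluence condition is what forces the grid structure and is where the nested equivalences do the real work, since ``same column'' $=$ ``$E_2$-related'' and ``same cell'' $=$ ``$E_1$-related'' let us identify the common corner of a little square. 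Third, add the local tiling constraints (horizontally and vertically compatible tile types) as universal $\FOt$ sentences over the ``right'' and ``up'' relations. Fourth, encode the two-counter machine's configurations along, say, the rows, with counter values represented by positions of marks within a row, and the transition function as a relation between consecutive rows; conclude that the machine halts iff the constructed sentence is (finitely) satisfiable. Since everything built is a genuine finite grid when the machine halts, the same reduction yields both undecidability of satisfiability and of finite satisfiability.

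The main obstacle I expect is the standard ``grid confluence'' difficulty in two variables: ensuring that the horizontal and vertical successor relations are \emph{functional} and that $\mathrm{right}\circ\mathrm{up}=\mathrm{up}\circ\mathrm{right}$ as \emph{partial functions on cells}, using only $x,y$ and the four equivalence symbols. The delicate point is that from a cell $c$ we can reach its right-neighbour $c_r$ (same $F_2$-class, different $E_2$-class with the ``next'' colour) and its up-neighbour $c_u$ (same $E_2$-class, different $F_2$-class); to close the square we must see, from $c_r$, an up-neighbour $d$, and from $c_u$, a right-neighbour $d'$, and assert $d=d'$. With only two variables this identification has to be routed through the equivalence classes: $d$ and $d'$ agree on their $E_2$-class (both in the column of $c_r$, i.e. the column ``to the right'' of $c$'s) and on their $F_2$-class (both in the row of $c_u$, i.e. ``above'' $c$'s), and since a cell is the unique element in the intersection of an $E_2$-class and an $F_2$-class, that forces $d=d'$ — but making ``unique intersection'' itself expressible requires care, typically by enforcing with a $\forall\exists$ sentence that within any $F_2$-row each column-colour pattern occurs at most once, which in turn needs enough column-colours that consecutive columns are distinguishable. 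Getting this bookkeeping to fit the two-variable, monadic-plus-four-equivalences budget is the crux; once the grid is secured, overlaying the counter-machine simulation is routine.
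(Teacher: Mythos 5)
Your high-level target (reduction from two-counter machines, using intersections of classes from the two families as the basic units) is the right one, but the central mechanism of your plan --- building an $\N\times\N$ grid with definable \emph{right} and \emph{up} successors --- has a gap that I do not think can be closed in this signature. With only unary predicates and the four equivalence symbols (no order, no successor, no non-special binary relations), ``the next column'' is not definable: if you colour columns cyclically with a constant number $k$ of colour patterns, then a cell in row $R$ and a column of colour $i$ may see arbitrarily many cells of $R$ lying in columns of colour $i{+}1 \bmod k$, so the right-neighbour is not unique and confluence cannot be forced. Your proposed repair --- requiring that within any $F_2$-row each column-colour pattern occurs at most once --- makes the neighbour unique but caps the number of columns meeting any row by the number of colour patterns, i.e.\ by a constant, which destroys the unbounded grid. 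The same problem infects your counter encoding (``positions of marks within a row'' presupposes an ordering of the cells of a row). So the crux you correctly flag as the main obstacle is, as written, not resolved.

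The paper sidesteps the grid entirely. It builds a \emph{linear chain} of configurations, each configuration being a $G_2$-class ($E_2\cap F_2$-class) tagged with one of two parities $d_E,d_F$; axioms force every $E_2$-class and every $F_2$-class to contain at most one configuration of each parity, so ``the successor configuration'' is definable as \emph{the other-parity configuration in my $E_2$-class} (resp.\ $F_2$-class), alternating --- a staircase rather than a grid, which needs no colour arithmetic. Crucially, the nesting is then used where your plan uses it only trivially: a counter's value is the number of $E_1$-subclasses of the current configuration marked $c_i$, and increments/decrements are realised by whether those subclasses extend via $E_1$ or $F_1$ into the neighbouring configuration, with extra axioms guaranteeing the count changes by at most one per step. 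If you want to salvage your approach, you would need either such a chain construction or some other device replacing the undefinable column-successor; as it stands the grid-confluence step fails.
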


\subsection{Related Work} \label{s:related}

\smallskip\noindent
\textbf{\FOt{} with equivalence relations.} 
\FOt{} with a single equivalence relation has the exponential model property and an \NExpTime-complete satisfiability problem~\cite{KO12}. 
With two equivalence relations (not necessarily nested), the finite model property is lost; both satisfiability and finite satisfiability are \TwoNExpTime-complete~\cite{KMP-HT14}. 
With three equivalence relations, \FOt{} is undecidable~\cite{KO12}.

Since \FOt{} can express containment between equivalence relations, the work of~\cite{KMP-HT14} established decidability for \FOt{} over hierarchical partitions of depth two (with no ordering). 
However, it does not yield optimal complexity bounds even in this restricted setting and leave the question of the finite model property unanswered.

\smallskip\noindent
\textbf{\FOt{} with linear orders.} 
The satisfiability and finite satisfiability problems for $\FOt$ with a single linear order are \NExpTime-complete~\cite{Ott01}. 
With three linear orders, both problems become undecidable~\cite{Kie11}. 
The case of two linear orders was studied in~\cite{HZ16}, where finite satisfiability is shown to be decidable in \TwoNExpTime\ when one linear order is accessible via both the order and successor predicates, and the other only via one of them. Decidability of the general satisfiability for \FOt{} with two linear orders follows from \cite{TZ20} (up to our knowledge, the complexity of this problem has not been studied.)

\smallskip\noindent
\textbf{Data Words.} 
\FOt{} with a linear order and an equivalence relation has been studied extensively in the context of \emph{data words}. 
These are structures interpreting unary predicates, an equivalence relation, a linear order, and its induced successor relation. 
The satisfiability problem for \FOt{} on data words is decidable but non-elementary-hard~\cite{BDM11}. 
When the successor relation is omitted, the problem becomes \NExpTime-complete.

Quite close to our setting are data words with nested equivalences, that where considered in~\cite{BB07}.
It is shown there that satisfiability is decidable when the linear order is accessible only via its successor relation,
and becomes undecidable as soon as both the order and its successor are accessible.
The variant closest to \FOt$[<,\EQ]$, where only the linear order (and not its successor) is available, was not explored in that work. 
Notice that data words restrict the common part of the signature to unary symbols, whereas we allow full \FOt{} with arbitrary binary relations that may freely interact with the equivalences. 
%In addition, our decidability proofs rely on a model-theoretic approach based on careful model surgeries, contrasting the automata-theoretic techniques used in~\cite{BB07}. 

\smallskip\noindent
\textbf{\FOt{} with total preorders.} 
The \ExpSpace-completeness of the finite satisfiability problem for \FOt{} over structures with one total preorder, its induced successor relation, a linear order, and additional unary relations was established in~\cite{SZ12}. 
As in the case of data words, no common binary relations are permitted. 
When two independent total preorders are available, satisfiability becomes undecidable. 
To the best of our knowledge, the case of a single total preorder combined with arbitrary binary relations, as well as settings with nested total preorders, have not been investigated so far.

\smallskip\noindent
\textbf{\FOt{} over trees.}
An alternative perspective on nested equivalence relations is to interpret them as trees.
If only $k$ nested equivalence symbols $E_1,\dots,E_k$ are considered, structures can be viewed as modeling the leaves of unranked trees of fixed depth $k$, where $E_i(a,b)$ holds if $a$ and $b$ share a common ancestor at depth $k{-}i$.
Importantly, only leaf nodes constitute the domain; internal tree nodes serve an auxiliary role and are not part of the universe.
This interpretation is different from standard two-variable logics over trees, cf.~\cite{BMS09,CW16b,BBC16}, where all tree nodes belong to the domain and the structure is accessed via navigational predicates such as \emph{parent}, \emph{child}, \emph{descendant}, etc.

\smallskip\noindent
\textbf{Other logics.}
Nested equivalence relations can be simulated in description logics such as $\mathcal{SHI}$ and $\mathcal{SHOI}$, which include transitive roles ($\mathcal{S}$), inverse roles ($\mathcal{I}$), role hierarchies ($\mathcal{H}$), and, possibly, nominals ($\mathcal{O}$). 
These logics have \ExpTime-complete satisfiability problems. 
However, interactions between binary relations (roles) are limited to role hierarchies, meaning one can only express containment between relations (these can be equivalences or common relations). Moreover, these logics do not include linear orderings.

Among first-order fragments, the Unary Negation Fragment, \UNFO{}~\cite{StC13}, is particularly worth to mention.
This logic restricts negation to subformulas with at most one free variable. 
Its extension capturing $\mathcal{SHOI}$, denoted \UNFO+$\mathcal{SOH}$, is decidable and \TwoExpTime-complete~\cite{DK19}.
Remarkably, \UNFO{}+$\mathcal{SOH}$ enables reasoning over arbitrarily many independent families of nested equivalence relations.

\subsection{Expressivity of \FOt$[<,\EQ]$ and \FOt$[\precEQ]$} \label{s:applications}

Naturally, $\FOt[\succEQ]$ is more expressive than $\FOt[\precEQ]$, yet both formalisms are incomparable with $\FOt[<,\EQ]$.
In $\FOt[\succEQ]$ and $\FOt[\precEQ]$, every $E_k$-class necessarily forms an interval with respect to a linear order induced from the total preorder $\preceq_1$ by resolving ties arbitrarily.
In contrast, this interval property cannot be expressed in $\FOt[<,\EQ]$. 
On the other hand, $\FOt[<,\EQ]$ can enforce that certain classes do not form intervals:
\begingroup
\setlength{\abovedisplayskip}{0.4em}
\setlength{\belowdisplayskip}{0.4em}
\begin{align*}
	&\forall x.~\exists y.~ x<y \wedge \neg x E_1 y \wedge \big(P(x) \leftrightarrow \neg P(y)\big)\\
	&\forall x,y.~\big(P(x)\wedge P(y)\big)\to x E_1 y
\end{align*}
\endgroup

\subsection{Outline of Technical Sections} \label{s:outline}

Section~\ref{sec:preliminaries} introduces the necessary notions and definitions.  
Section~\ref{s:A} proves Theorem~\ref{t:two}.
Section~\ref{s:sketch} sketches the proofs of Theorems~\ref{t:three} and~\ref{t:one}, as these are similar to that of Theorem~\ref{t:two}; more details are in  Appendix~\ref{s:BC}.
Theorem~\ref{t:four} is established in Section~\ref{s:D}, with certain technical details deferred to Appendix~\ref{appendix:D}
and Appendix~\ref{appendix:lower}.
Section~\ref{sec:undecidability} presents the undecidability result of Theorem~\ref{t:und}.
Finally, Section~\ref{s:conclusion} concludes the paper.

\section{Preliminaries}\label{sec:preliminaries}
\subsection{Notation and Conventions}

We denote the set of natural numbers including $0$ by $\N$. For $k \in \N$, the notation $[k]$ stands for the set $\{1, \dots, k\}$, with the convention that $[0] = \emptyset$. More generally, we use interval notation $[a, b] \subseteq \N$ to denote the set $\{a, a+1, \dots, b\}$ whenever $a \leq b$, and the empty set $\emptyset$ whenever $a > b$.
If $E$ is an equivalence relation on a set $A$, then $\absclass{}{}{a}$ denotes the equivalence class of an element $a \in A$, and $B/E$ denotes the quotient set via $E$ of a subset $B \subseteq A$.

A \emph{signature} $\sigma$ is a finite set of symbols, partitioned as $\sigma = \Cons \cup \Rels$, where $\Cons$ is the set of constant symbols and $\Rels$ is the set of relation symbols (including special symbols such as $<$, $E_1$, $E_2$, etc.). Every relation symbol has associated arity. We do not allow function symbols of positive arity. The \emph{signature of a formula} is the finite set of relation and constant symbols that appear in the formula.

The \emph{size} (or \emph{length}) of a formula $\phi$, denoted $|\phi|$, is defined as the total number of symbols it contains, where each occurrence of a symbol---be it a variable, relation symbol, or constant---contributes $1$ to the count.

We use Fraktur letters such as $\str{A}, \str{B}, \dots$ to denote structures, and the corresponding Roman letters $A, B, \dots$ for their domains.
A structure $\str{A}$ over a signature $\sigma$ interprets the symbols in $\sigma$: a relation symbol $R$ as a relation $R^{\str{A}}\subseteq A^k$ with $k$ denoting the arity of $R$; and a constant symbol $c$ as an element $c^{\str{A}} \in A$.
If $B \subseteq A$, we write $\str{A} \restr B$ for the \emph{restriction} of $\str{A}$ to the subdomain $B$. The \emph{size} of a structure is the cardinality of its domain.
Elements of structures are typically denoted by $a, b, \dots$; variables by $x, y$, possibly with decorations. 
We write $\phi(\xs)$ to indicate that all free variables of the formula $\phi$ are contained in the tuple $\xs$.

An (atomic) \emph{$1$-type} over a signature $\sigma$ is a maximal consistent set of literals involving only the variable $x$.
Similarly, an (atomic) \emph{$2$-type} over $\sigma$ is a maximal consistent set of literals over the variables $x$ and $y$; in particular, a $2$-type naturally determines the $1$-types of both $x$ and $y$.
We use the symbol $\alpha$ (possibly with decorations) to denote $1$-types.

Let $\str{A}$ be a structure. For any $a \in A$, we write $\type{\str{A}}{a}$ to denote the unique $1$-type \emph{realised} by $a$ in $\str{A}$,
that is the $1$-type $\alpha$ such that $\str{A} \models \alpha(a)$.
Similarly, for distinct elements $a, b \in A$, the notation $\type{\str{A}}{a,b}$ denotes the unique $2$-type \emph{realised} in $\str{A}$ by the pair $(a,b)$.

\subsection{Scott Normal Form}
\label{s:snf}

A sentence $\phi$ is in \emph{Scott Normal Form}~\cite{Sco62} if it is in the shape of $\forall x, y.~\psi_0(x, y) \wedge \bigwedge_{m=1}^{M}{\forall x.~\exists y.~\psi_m(x, y)}$, where $M$ denotes the number of Skolem conjuncts.
For $m \in [0, M]$, the formula $\psi_m(x, y)$ is quantifier-free and constant-free, and uses only relation symbols of arity $1$ and $2$.

There is a standard polynomial-time procedure transforming any $\FOt$ sentence into Scott Normal Form, preserving satisfiability over the same domains; see e.g.,~\cite{PH23}.
It is readily verified that it is sound for the extensions of \FOt{} introduced in Subsection~\ref{s:contrib}.

We standardise the special relation symbols in $\phi$ by assuming they are the consecutive symbols $E_1, \dots, E_K$ for some $K \in \N$.
We do the same in the case of $\preceq_k$-symbols.

For convenience, we interpret additional symbol $E_0$ as the identity relation, and 
assume that $E_{K+1}$, the first symbol not used in $\phi$, is interpreted as the universal relation. Note that this preserves nestedness: $E_0\subseteq E_1$ and $E_K\subseteq E_{K+1}$.

\section{Proof of Theorem~\ref{t:two}}\label{s:A}

In this section, we establish our first main result.

\secondTheorem*

For this whole section, we fix a sentence $\phi$ from the logic $\FOt[<,\EQ]$.
Assuming that $\phi$ is finitely satisfiable, we aim to prove that it is actually satisfied by a model of exponential size.
As discussed in Subsection~\ref{s:snf}, we may assume that $\phi$ is in Scott Normal Form, i.e.,
in the shape of \(\forall x,y.~\psi_0(x,y) \wedge \bigwedge_{m=1}^{M}{\forall x.~\exists y.~\psi_m(x,y)} \),
where $M$ denotes the number of Skolem conjuncts.
By convention, $\phi$ uses the first $K$ nested equivalence symbols $E_1,\dots,E_K$; in addtion, we interpret $E_0$ and $E_{K+1}$ as the identity and universal relations, respectively.
Let $\AAA$ denote the set of all $1$-types over the signature of $\phi$.

The main ingredient of our proof is the following lemma, which allows us to replace a single equivalence class with its small counterpart, without touching the rest of the structure.

\begin{lemma}\label{l:replacement}
Let $\str{A}$ be a finite model of $\phi$.
Fix $k \in [0,K]$, and let $\cC$ be an $E_{k+1}$-class in $A/E^{\str{A}}_{k+1}$.
Then there exists a subset $\cD\subseteq\cC$ such that:
\begin{enumerate}
	\item $\cD$ is the union of at most $12 \cdot M^3 \cdot |\AAA|$ many $E_k$-classes from $\cC/E_k^{\str{A}}$.
  \item $\phi$ has a model $\str{B}$ over the domain $B = (A\setminus\cC)\cup\cD$.
\end{enumerate}
\end{lemma}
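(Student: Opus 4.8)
### Proof Proposal for Lemma~\ref{l:replacement}

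\textbf{Overall strategy.} The plan is to build $\cD$ by keeping only a carefully chosen bounded number of $E_k$-classes from $\cC$, and then to turn $\str{A}\restr B$ into a genuine model $\str{B}$ of $\phi$ by repairing the $\forall\exists$-conjuncts that may have been broken when the discarded classes were removed. The universal conjunct $\forall x,y.~\psi_0(x,y)$ is automatically preserved under taking substructures, so the only concern is that some element may have lost all of its $\psi_m$-witnesses. Crucially, the linear order $<$ and all the $E_i$ with $i \neq k$ will be left untouched on the surviving domain: we only delete whole $E_k$-classes, so for $i > k$ the class structure is merely thinned, and for $i < k$ we delete whole unions of $E_i$-classes as well; the restriction of $<$ to $B$ is still a linear order. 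Hence the structural constraints of $\FOt[<,\EQ]$ hold in $\str{B}$ for free, and we are free to add or modify \emph{common} binary facts between surviving elements to restore witnesses, as long as we do not create a violation of $\psi_0$.

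\textbf{Which classes to keep.} The key counting step is to ensure that for every element $a$ that will survive (i.e. every $a \in (A\setminus\cC)\cup\cD$) and every $m \in [M]$, there is a witness for $\forall x.\exists y.\psi_m(x,y)$ \emph{inside $B$}. For witnesses $b$ that lie outside $\cC$ nothing changes. For witnesses $b$ that $a$ had inside $\cC$, the position of $b$ relative to $a$ matters only through: (i) the $1$-type of $b$; (ii) whether $a < b$, $a = b$, or $b < a$; (iii) for each $i \in [0,K]$, whether $a\mathrel{E_i} b$ --- but since we delete whole $E_k$-classes and $a$'s membership in an $E_i$-class for $i>k$ is unaffected by which $E_k$-subclasses of $\cC$ we keep (as long as we keep at least one inside $a$'s own $E_{k+1}\cap\cdots$ structure), the only genuinely new parameter when $a \notin \cC$ is ``$b$ is in $\cC$ but not $E_k$-related to anything we keep near $a$'' --- actually for $a\notin\cC$, $a$ is $E_{k+1}$-unrelated\footnote{Or $E_{k+1}$-related via a different $E_{k+1}$-class; either way the coarser relations are determined by $a$ alone.} to all of $\cC$, so the ``profile'' of a witness $b\in\cC$ seen from $a$ is just its $1$-type together with its $<$-position relative to $a$. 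So I want to select, for the outside elements collectively, a bounded set of $E_k$-classes of $\cC$ that between them realise every needed $(1\text{-type},\ \text{side-of-}a)$ combination; and additionally, for inside elements (those in the kept classes), their witnesses' finer relations ($E_k$, and exact position) also need to be preserved, which is handled by keeping each retained $E_k$-class \emph{intact}. A Helly-type / greedy argument gives the bound: roughly, for each of the $|\AAA|$ possible $1$-types and each of the $3$ possible $<$-sides, keep the $<$-minimal and $<$-maximal class realising it (to cover ``$<$'' and ``$>$'' demands robustly), together with enough classes to handle the $M$ existential conjuncts for the inside elements; multiplying the relevant constants yields at most $12\cdot M^3\cdot|\AAA|$ classes. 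I would make this selection precise by: first, mark for each $1$-type $\alpha$ and each Skolem index $m$ one ``globally good'' class witnessing the demand from outside; second, close off under the requirement that inside elements have inside witnesses, which because each kept $E_k$-class is whole only needs a bounded amount of extra marking (a standard ``king/court''-style argument within the $E_{k+1}$-class, cf.\ the treatment of a single equivalence relation in \cite{KO12}).

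\textbf{Repairing the model.} After fixing $\cD$ as the union of the marked classes, set $B=(A\setminus\cC)\cup\cD$ and $\str{B}_0 = \str{A}\restr B$. For $\psi_0$ we are done. For each $m$ and each $a\in B$ lacking a $\psi_m$-witness in $\str{B}_0$: by the selection, among the kept classes there is some $b$ whose $1$-type and $<$-relationship to $a$ match those of a witness $a$ had in $\str{A}$; I redefine the $2$-type of $(a,b)$ on the \emph{common} binary symbols to equal the $2$-type of the old pair $(a,b')$, leaving $<$ and all $E_i$ (which are already correct by construction) fixed. This might in principle collide with $\psi_0$, so the witness classes must be chosen so that the ``transplanted'' $2$-type is consistent with $\psi_0$ --- this is automatic because that $2$-type was realised in $\str{A}\models\psi_0$ and $\psi_0$ is a property of pairs only. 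A mild subtlety is simultaneity: changing $(a,b)$ could destroy a witness previously held by $b$ or by some other element via $b$; the standard fix is to route all freshly-needed witnesses to the distinguished ``king'' classes and never alter pairs among kings or pairs that already served as witnesses, again as in the one-equivalence-relation case. Collecting these modifications yields $\str{B}\models\phi$ over domain $B$, which is what Clause~2 asks; Clause~1 is the counting bound established above.

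\textbf{Main obstacle.} The delicate point is the second part: ensuring that restoring witnesses for the surviving elements does not cascade into new defects, while simultaneously respecting that $<$ and the $E_i$'s are \emph{frozen} on $B$ and only common binary relations may be retuned. The counting bound $12\cdot M^3\cdot|\AAA|$ has to be exactly the budget needed so that a single pass of ``king''-based witness assignment succeeds; getting the bookkeeping of inside-vs-outside demands, and of the interaction between the $E_k$-level and the $E_{k+1}$-level, to fit within that budget is where the real work lies.
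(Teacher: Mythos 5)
Your overall blueprint matches the paper's: $\psi_0$ survives restriction to a subdomain, so the task reduces to picking a bounded union of $E_k$-classes and retuning some \emph{common} $2$-types so that the $\forall\exists$-conjuncts get re-witnessed. You also correctly isolate the relevant invariant---the profile of a $\cC$-witness $b$ as seen from $a$ is, once $a$ is fixed, determined by $\type{\str{A}}{b}$ and the $<$-side of $b$, with all the $E_j$-facts forced. So the skeleton is right.

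There is, however, a genuine gap at the selection step, and it is the crux of the lemma. You propose marking whole $E_k$-classes (``for each $1$-type $\alpha$ and each Skolem index $m$ one globally good class\dots keep the $<$-minimal and $<$-maximal class realising it''). This is not robust to the case $a \in \cD$: a $\cC$-witness of such an $a$ must lie outside $a$'s own $E_k$-class, and if the $<$-minimal class realising $\alpha$ on the relevant side of $a$ is $a$'s own class, that marked class is useless for $a$---and nothing controls how many of your marked classes are disqualified this way. The paper's construction is built precisely to defeat this: it first takes the $M$-extremal $\alpha$-realisations \emph{within each} $E_k$-class (obtaining $V_\alpha$) and then takes the $2M$-extremal elements of $V_\alpha$; because each $E_k$-class contributes at most $M$ elements to the low end of $V_\alpha$, discarding the up-to-$M$ ``bad'' elements from $a$'s own class still leaves $M$ usable ones (this is exactly Claim~\ref{l:configurations}). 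Selection must therefore be done at the \emph{element} level with a two-tier extremal pass, and only afterwards closed under $E_k$; marking good classes outright does not yield the claimed bound or even correctness.

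Secondarily, your repair step appeals to a ``single pass'' king/court assignment ``as in the one-equivalence-relation case,'' but the cascade here is genuinely circular: elements of the selected pool themselves need $\cC$-witnesses, and \emph{their} witnesses need witnesses in turn. The paper resolves this with three rounds, $W_{\RN{1}}$ the extremal pool, $W_{\RN{2}} = \bigcup_{m} \ff_m[W_{\RN{1}}]$, $W_{\RN{3}} = \bigcup_m \ff_m[W_{\RN{2}}]$, and then redirects $W_{\RN{3}}\setminus(W_{\RN{1}}\cup W_{\RN{2}})$ back into $W_{\RN{1}}$ (the GKV circular pattern you allude to); this is where the $M^3$ in the bound comes from. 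Your sketch neither sets up this circularity nor accounts for its contribution to the constant, so it does not yet establish the lemma as stated.
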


We will prove Lemma~\ref{l:replacement} in a moment. Let us first demonstrate how to derive Theorem~\ref{t:two} from it.

\begin{proof}[Proof of Theorem~\ref{t:two}]
  Assume that $\varphi$ is finitely satisfiable, and consider a finite model $\str{A}$ of minimal size.
  
  Because of this minimality, for each $k\in[0,K]$, every $E_{k+1}$-class of $\str{A}$ is partitioned by at most $12\cdot M^3\cdot|\AAA|$ many $E_k$-classes. As otherwise, Lemma~\ref{l:replacement} would give us a model of strictly smaller size. Since each $E_0$-class of $\str{A}$ is of size $1$, an immediate induction tells us that, for each $k \in [K{+}1]$, the $E_{k}$-classes of $\str{A}$ are of size at most $12^k \cdot M^{3k} \cdot |\AAA|^k$.
  
  Yet, $E^{\str{A}}_{K{+}1}$ having only one equivalence class, the size of $\str{A}$ is therefore bounded by $12^{K+1} \cdot M^{3K+3} \cdot |\AAA|^{K+1}$. Clearly, $K$ and $M$ are both $\cO(|\phi|)$.
  As $\phi$ does not use constants, the number of $1$-types in $\AAA$ is $2^{|\Rels|} = 2^{\cO(|\phi|)}$.
  
  We conclude that the size of $\str{A}$ is indeed $2^{\cO(|\phi|^2)}$.
\end{proof}

We now prove Lemma~\ref{l:replacement}.

Suppose that $\str{A}$ is a finite model of $\phi$, $k$ is a number between $0$ and $K$, and $\cC$ is any $E_{k+1}$-class of $A/E^{\str{A}}_{k+1}$.

We assume w.l.o.g.~that the domain $A$ is a finite subset of~$\N$ and that the order $<^{\str{A}}$ of $\str{A}$ coincides with the usual order on $\N$, allowing us to write just $<$ instead of $<^{\str{A}}$.

We use the following terminology:
if $m \in [M]$ and $a,b$ are elements of $A$, then we call $b$ a \emph{witness} of $a$ whenever $\str{A} \models \psi_m(a,b)$.
In addition, $b$ is an \emph{internal} witness if $(a,b) \in E^{\str{A}}_k$; an \emph{external} witness if $b \in A \setminus \cC$ and $(a,b) \not\in E^{\str{A}}_k$; and a \emph{$\cC$-witness} if $b \in \cC$ and $(a,b) \not \in E^{\str{A}}_k$.

We fix \emph{witness functions} $\ff_m\colon A \rightarrow A$, i.e., such that for each $m \in [M]$ and every $a \in A$, $\str{A}\models \psi_m(a,\ff_m(a))$.

\smallskip
Before delving into the details, let us outline our proof strategy.
We construct the set $\cD \subseteq \cC$, to replace the selected $E_{k+1}$-class $\cC$, in three steps:
First, we define a subset $W_{\RN{1}} \subseteq \cC$ that will serve as witnesses for elements outside $\cC$.  
Next, we set $W_{\RN{2}} \eqdef \bigcup_{m \in [M]} \ff_m[W_{\RN{1}}]$,
which will provide witnesses for elements in $W_{\RN{1}}$.  
Similarly, we define  
$W_{\RN{3}} \eqdef \bigcup_{m \in [M]} \ff_m[W_{\RN{2}}]$,  
to serve as witnesses for elements in $W_{\RN{2}}$.
It will be enough to define $\cD$ as the $E_k$-closure of $(W_{\RN{1}} \cup W_{\RN{2}} \cup W_{\RN{3}}) \cap \cC$. Finally, the structure $\str{B}$ will be defined as $\str{A}~\restr~[(A\setminus\cC)\cup\cD]$ but with certain $2$-types redefined.

\smallskip
In order to replace the original $\cC$-witnesses by other elements, we use the following notion of \emph{configurations}.

Let $\ell \in [M]$ and $a \in A$.
Assume that $\langle b_1, \dots, b_\ell \rangle$ and $\langle c_1, \dots, c_\ell \rangle$ are tuples of pairwise distinct elements of $A$.
We say that they realise the same \emph{$a$-configuration} if the following conditions hold:
\begin{enumerate}[label = (\roman*)]
  \item\label{configurationitems1} For every $i \in [\ell]$, $\type{\str{A}}{b_i} = \type{\str{A}}{c_i}$.
  \item\label{configurationitems2} For every $i \in [\ell]$, $a \bowtie b_i$ if and only if $a \bowtie c_i$, where $\bowtie$ is any of the symbols ``$<$'', ``$=$'' or ``$>$''.
  \item\label{configurationitems3} For every $i \in [\ell]$ and every $j \in [K]$, $(a,b_i)\in E^{\str{A}}_{j}$ if and only if $(a,c_i)\in E^{\str{A}}_{j}$.
\end{enumerate}

We define now the set $W_{\RN{1}} \subseteq \cC$.
For this, we introduce the notion of the \emph{$r$-extremal} subset of a finite subset $S\subseteq\N$, of $\ell$ elements $a_1< a_2<\ldots< a_{\ell}$: if $\ell\leq2\cdot r$, then this subset, denoted by ${\rm extr}_r(S)$, is $S$ itself; if $\ell> 2\cdot r$, then it is $\{ a_1,\dots,a_{r},a_{\ell-r+1},\dots,a_\ell \}\subset S$.

For each $1$-type $\alpha$, we take the set $V_\alpha$ of the $M$ smallest and $M$ largest realisations of $\alpha$ from every $E_k$-class in $\cC$: \[V_\alpha\eqdef\bigcup_{\cE \in \cC/E^{\str{A}}_k}{{\rm extr}_M(\{ a \in \cE \mid \type{\str{A}}{a} = \alpha \})}.\]

We then define $W_{\RN{1}}$ as the set of $2  M$ smallest and $2  M$ largest elements of each $V_\alpha$: \[W_{\RN{1}}\eqdef\bigcup_{\alpha \in \AAA}{{\rm extr}_{2 \cdot M}(V_\alpha)}.\]

In the following claim, we justify that any configuration of $\cC$-witnesses can be realised within $W_{\RN{1}}$.
\begin{claim}\label{l:configurations}
  Let $\ell \in [M]$, and let $a \in A \setminus W_{\RN{1}}$. Suppose that $b_1,\dots,b_\ell$ are distinct elements of $\cC \setminus \absclass{\str{A}}{k}{a}$.
  Then there exist distinct elements $c_1,\dots,c_\ell$ in $W_{\RN{1}}$ such that $\langle b_1,\dots,b_\ell \rangle$ and $\langle c_1,\dots,c_\ell \rangle$ realise the same $a$-configuration.
\end{claim}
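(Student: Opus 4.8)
The plan is to choose the $c_i$ greedily, one index $i \in [\ell]$ at a time, respecting the three conditions (i)--(iii) that define an $a$-configuration. Fix $i$ and consider $b_i \in \cC$. It lives in some $E_k$-class $\cE = \absclass{\str{A}}{k}{b_i} \neq \absclass{\str{A}}{k}{a}$ and has some $1$-type $\alpha = \type{\str{A}}{b_i}$. To match condition (iii) it suffices to pick $c_i$ in the same $E_k$-class $\cE$ (all $E_j$ with $j \le k$ then agree because $c_i \mathrel{E_k} b_i$, hence $a \mathrel{E_j} b_i \iff a \mathrel{E_j} c_i$ is automatic for $j \le k$ since $a \notin \cE$ forces both sides false; and for $j > k$ both sides are true since $a, b_i, c_i$ all lie in the common $E_{k+1}$-class $\cC$). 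To match condition (i) I pick $c_i$ with $\type{\str{A}}{c_i} = \alpha$. For condition (ii), the value of ``$a \bowtie b_i$'' is one of three cases: $a < b_i$, $a = b_i$, or $a > b_i$. The case $a = b_i$ cannot occur (the $b_i$ are in $\cC \setminus \absclass{\str{A}}{k}{a}$ but — more to the point — if $a$ equalled some $b_i$ we would not need a configuration; in any case we set $c_i = b_i$ is impossible since $b_i \notin W_{\RN{1}}$ is not guaranteed, so the honest statement is: $a \notin \cC$ or $a \in \cC$ but then $a$ in a different $E_k$-class, so $a \ne b_i$). Thus the relevant sub-case is whether $a$ is below or above $b_i$ in the order.

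The core counting argument is then: within the $E_k$-class $\cE$, among realisations of $\alpha$ that are $<$-below $a$ (resp.\ $>$-above $a$), there are ``enough'' candidates inside $W_{\RN{1}}$. Concretely, I would argue that $V_\alpha \cap \cE$ contains the $M$ smallest and $M$ largest $\alpha$-realisations of $\cE$, and that $W_{\RN{1}}$ retains, within $V_\alpha$ globally, the $2M$ smallest and $2M$ largest elements. Since $a \notin W_{\RN{1}}$, if $a$ sits $<$-below some $\alpha$-realisation of $\cE$ that we need to dominate, then all $M$ (or all, if fewer) largest $\alpha$-realisations of $\cE$ lie above $a$ as well, and enough of them survive into $W_{\RN{1}}$: indeed $a \notin W_{\RN{1}}$ means $a$ is not among the $2M$ extreme elements of $V_\alpha$, so there are at least $M$ elements of $V_\alpha$ on whichever side of $a$ we need, and — crucially — these $M$ elements can be taken inside a single $E_k$-class only if that class contributed them; here we must be slightly more careful and take the count per class. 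The clean bookkeeping: ``$M$ smallest / $M$ largest per $E_k$-class'' in $V_\alpha$ guarantees that whenever $a$ is strictly below the topmost $\alpha$-realisation of $\cE$, at least one — and by iterating, at least $M \ge \ell$ — of the top-$M$ $\alpha$-realisations of $\cE$ stored in $V_\alpha$ lie strictly above $a$; symmetrically below; and the outer ${\rm extr}_{2M}$ cannot remove all of them because $a$ itself is outside $W_{\RN{1}}$, so on the side of $a$ we care about, the $2M$-extremal filter keeps at least $M$ of the relevant elements. Since at most $\ell \le M$ indices $i$ are processed, and each step forbids at most $\ell - 1 < M$ previously chosen $c_j$'s (distinctness), the pigeonhole margin of $M$ per $(\alpha, \cE, \text{side})$ triple is exactly enough to always find a fresh $c_i$.

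The step I expect to be the main obstacle is the precise inequality chasing that certifies ``enough survivors'': one has nested applications of ${\rm extr}$ (first ${\rm extr}_M$ per class to build $V_\alpha$, then ${\rm extr}_{2M}$ to build $W_{\RN{1}}$), and one must verify that for a given $a \notin W_{\RN{1}}$ and a required side (say, above $a$), the set $\{c \in W_{\RN{1}} : \type{\str{A}}{c} = \alpha,\ c \in \cE,\ a < c\}$ has size $\ge \ell$. The subtlety is that $a$ might itself lie below many elements of $V_\alpha$ from \emph{other} classes, so the global ${\rm extr}_{2M}$ could in principle discard $\cE$'s contribution in favour of another class's. This is handled by the ``$2M$'' slack: $W_{\RN{1}}$ keeps the $2M$ lowest and $2M$ highest of $V_\alpha$, and since $a$ is excluded from both extremal blocks, on the side of $a$ relevant to us at least $M$ elements of $V_\alpha$ survive, and they can be matched to $\cE$ by re-choosing representatives within $\cE$ of the same $1$-type on the correct side — using that $\cE$ itself contributed its $M$ extreme $\alpha$-realisations to $V_\alpha$. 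I would also note the boundary case $\ell \le 2r$ of ${\rm extr}_r$ (where it returns $S$ unchanged) is harmless since it only makes more elements available.

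Finally, having chosen $c_1, \dots, c_\ell \in W_{\RN{1}}$ pairwise distinct and satisfying (i)--(iii), the claim is proved; the order in which I carried out the steps is: (1) reduce to a per-index choice, (2) fix the target $E_k$-class and $1$-type from $b_i$ to get (i) and (iii) for free, (3) split on the order-relation to $a$ to handle (ii), (4) run the survivor count with the $M$-per-class / $2M$-global double extremal filter, (5) invoke pigeonhole against the at most $\ell - 1$ earlier choices to keep the $c_i$ distinct.
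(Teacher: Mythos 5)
Your plan goes wrong at step (2), where you insist that $c_i$ be chosen inside the same $E_k$-class $\cE=\absclass{\str{A}}{k}{b_i}$ as $b_i$. This requirement is neither needed nor, in general, satisfiable. It is not needed because condition~(iii) of an $a$-configuration only constrains the $E_j$-relations of $c_i$ \emph{to $a$}: for $j\le k$ both $(a,b_i)\notin E_j^{\str{A}}$ and $(a,c_i)\notin E_j^{\str{A}}$ hold for \emph{any} $c_i\in\cC\setminus\absclass{\str{A}}{k}{a}$, and for $j\ge k+1$ all elements of $\cC$ lie in one $E_j$-class, so the relation to $a$ again does not depend on which element of $\cC$ is picked. (Note also that $a$ need not belong to $\cC$ at all, so your justification ``both sides are true since $a,b_i,c_i$ all lie in $\cC$'' for $j>k$ is incorrect, even though the conclusion stands.) It is not satisfiable because the outer filter ${\rm extr}_{2\cdot M}(V_\alpha)$ defining $W_{\RN{1}}$ is \emph{global} over all of $\cC$: a class $\cE$ sitting in the middle of the order may contribute nothing to $W_{\RN{1}}$, so the set $\{c\in W_{\RN{1}} : \type{\str{A}}{c}=\alpha,\ c\in\cE,\ a<c\}$ that you want to have size $\ge\ell$ can be empty. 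You notice this obstacle, but your patch (``re-choosing representatives within $\cE$'') does not yield elements of $W_{\RN{1}}$. Relatedly, the inference ``$a\notin W_{\RN{1}}$ implies $a$ is not among the $2M$ extreme elements of $V_\alpha$, hence at least $M$ elements of $V_\alpha$ lie on the needed side of $a$'' is invalid: $a$ may fail to be in $W_{\RN{1}}$ simply because it is not in $V_\alpha$ (wrong type, not extremal in its class, or outside $\cC$) while lying above or below all of $V_\alpha$.

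The paper's proof drops the same-class requirement entirely. After reducing to a single $1$-type $\alpha$ and a single side (say $b_i<a$ for all $i$), it takes $c_1,\dots,c_\ell$ to be the $\ell$ minimal elements of $S\setminus\absclass{\str{A}}{k}{a}$, where $S=\{c\in\cC \mid c<a \text{ and } \type{\str{A}}{c}=\alpha\}$, irrespective of which $E_k$-classes they come from; any such choice realises the same $a$-configuration as the $b_i$'s. It then checks that each $c_i$ is among the $M$ smallest $\alpha$-realisations of its own class (hence lies in $V_\alpha$), and that the only elements of $V_\alpha$ that can precede $c_\ell$ are the earlier $c_j$'s together with the ``bad'' elements of $V_\alpha\cap S\cap\absclass{\str{A}}{k}{a}$, so the $c_i$ survive the outer ${\rm extr}_{2\cdot M}$ filter. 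This decoupling of $c_i$ from $b_i$'s class is the idea your proposal is missing, and without it the counting in step (4) targets a false sub-claim.
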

\begin{proof}
  We can partition the $b_i$'s into several subtuples by the $1$-types that these elements realise.
  Considering these tuples independently, we obtain several tuples of the $c_i$'s to be glued together into a final tuple.
  Hence, w.l.o.g.~assume that all the $b_i$'s realise the same $1$-type $\alpha$, for some $\alpha \in \AAA$.

  Similarly, we can moreover assume that the $b_i$'s are related to $a$ in the same way, say $b_i < a$ holds for every $i \in [\ell]$.

  Consider the set $S \eqdef \{ c \in \cC \mid c < a \text{ and } \type{\str{A}}{c}=\alpha \}$.
  Notice that $\{ b_1,\dots,b_\ell \} \subseteq S \setminus \absclass{\str{A}}{k}{a}$.
  Moreover, any $\ell$-tuple of distinct element from $S \setminus \absclass{\str{A}}{k}{a}$ realises the same $a$-configuration as the $b_i$'s.
  Hence, select $\ell$ minimal elements $c_1,\dots,c_\ell$ from the set $S \setminus \absclass{\str{A}}{k}{a}$.
  Because $\ell \le M$, we have that $\{ c_1,\dots,c_\ell\} \subseteq V_\alpha$.
  Now, crucially among the $2\cdot M$ minimal elements of the set $V_\alpha$, in addition to the elements $c_1,\dots,c_\ell$, there are at most $M$ additional ``\emph{bad}'' elements from the set $S \cap \absclass{\str{A}}{k}{a}$.
  In consequence, we have that $\{ c_1,\dots,c_\ell \} \subseteq {\rm extr}_{2\cdot M}(V_\alpha)$, which is a subset of $W_{\RN{1}}$.
\end{proof}

Naturally, elements of $W_{\RN{1}}$ require witnesses themselves, which is why, as announced above, we define  
$W_{\RN{2}} \eqdef \bigcup_{m \in [M]}\ff_m[W_{\RN{1}}]$.
Similarly, elements of $W_{\RN{2}}$ need witnesses, so we define  
$W_{\RN{3}} \eqdef \bigcup_{m \in [M]}\ff_m[W_{\RN{2}}]$.

We do not introduce further sets $W_{\RN{4}}, W_{\RN{5}},\dots$, as elements of $W_{\RN{3}} \setminus (W_{\RN{1}} \cup W_{\RN{2}})$ can be assigned $\cC$-witnesses again in $W_{\RN{1}}$.  
This step resembles the construction of the exponential model property for \FOt{} by Gr\"adel, Kolaitis, and Vardi~\cite{GKV97}, where three sets of elements provide witnesses for each other in a circular fashion.  
Our setting is slightly more intricate, but the general idea remains: $W_{\RN{2}}$ provides $\cC$-witnesses for $W_{\RN{1}}$, $W_{\RN{3}}$ for $W_{\RN{2}}$, and, to close the cycle, we adjust the $2$-types so that $W_{\RN{1}}$ provides $\cC$-witnesses for $W_{\RN{3}} \setminus (W_{\RN{1}} \cup W_{\RN{2}})$.  
Moreover, $W_{\RN{1}}$ needs to provide $\cC$-witnesses for elements outside $W_{\RN{1}} \cup W_{\RN{2}} \cup W_{\RN{3}}$.  
We ensure all of this in the subsequent steps of the proof.

We define the set $\cD$ (from the statement of Lemma~\ref{l:replacement}) as:
\[ \cD:=\bigcup_{a \in (W_{\RN{1}} \cup W_{\RN{2}} \cup W_{\RN{3}}) \cap \cC} \absclass{\str{A}}{k}{a}. \]

This choice of $\cD$ satisfies the first item of Lemma~\ref{l:replacement}:
\begin{claim}\label{l:sizeanalysis}
	The set $\cD$ is a union of at most $12 \cdot M^3 \cdot |\AAA|$ many $E_k$-classes of $\cC/E^{\str{A}}_k$.
\end{claim}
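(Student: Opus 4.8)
The plan is to bound the number of $E_k$-classes touched by each of the three sets $W_{\RN{1}}$, $W_{\RN{2}}$, $W_{\RN{3}}$ separately, and then sum. Recall that $\cD$ is the $E_k$-closure of $(W_{\RN{1}} \cup W_{\RN{2}} \cup W_{\RN{3}}) \cap \cC$, so the number of $E_k$-classes in $\cD$ is at most the number of $E_k$-classes that intersect $W_{\RN{1}} \cup W_{\RN{2}} \cup W_{\RN{3}}$, which in turn is at most the sum of the numbers of $E_k$-classes intersecting each $W_{\RN{i}}$.

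First I would bound $|W_{\RN{1}}|$ directly from its definition. For each $1$-type $\alpha$, the set $V_\alpha$ is a union of sets of the form ${\rm extr}_M(\cdot)$, but the point is that $W_{\RN{1}} = \bigcup_\alpha {\rm extr}_{2M}(V_\alpha)$ only retains the $2M$ smallest and $2M$ largest elements of each $V_\alpha$, so $|{\rm extr}_{2M}(V_\alpha)| \le 4M$. Hence $|W_{\RN{1}}| \le 4M \cdot |\AAA|$. In particular $W_{\RN{1}}$ intersects at most $4M \cdot |\AAA|$ many $E_k$-classes. Then, since each element has at most $M$ witnesses (one per Skolem conjunct), $|W_{\RN{2}}| = |\bigcup_{m \in [M]} \ff_m[W_{\RN{1}}]| \le M \cdot |W_{\RN{1}}| \le 4M^2 \cdot |\AAA|$, and likewise $|W_{\RN{3}}| \le M \cdot |W_{\RN{2}}| \le 4M^3 \cdot |\AAA|$. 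Each of $W_{\RN{2}}$ and $W_{\RN{3}}$ therefore meets at most that many $E_k$-classes.

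Summing the three bounds gives that $W_{\RN{1}} \cup W_{\RN{2}} \cup W_{\RN{3}}$ intersects at most $4M|\AAA| + 4M^2|\AAA| + 4M^3|\AAA| \le 12 M^3 |\AAA|$ many $E_k$-classes (using $M \ge 1$, so $M \le M^2 \le M^3$), and hence $\cD$, being their $E_k$-closure restricted to $\cC$, is a union of at most $12 \cdot M^3 \cdot |\AAA|$ many $E_k$-classes of $\cC/E^{\str{A}}_k$. Since these are all genuinely classes \emph{within} $\cC$ — each $\absclass{\str{A}}{k}{a}$ with $a \in \cC$ is an $E_k$-class contained in the $E_{k+1}$-class $\cC$, by nestedness — item~1 of Lemma~\ref{l:replacement} follows.

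This claim is almost entirely bookkeeping; there is no real obstacle. The only point requiring a moment's care is the inequality $4M|\AAA| + 4M^2|\AAA| + 4M^3|\AAA| \le 12 M^3 |\AAA|$, which holds because $M \ge 1$ forces $M, M^2 \le M^3$; and the observation that the $E_k$-closure of a subset of $\cC$ stays inside $\cC$ and does not increase the count of classes, which is immediate since taking a union of whole $E_k$-classes does not create new ones.
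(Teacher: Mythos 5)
Your proof is correct and follows essentially the same route as the paper's: both bound $|W_{\RN{1}}| \le 4M|\AAA|$ from the definition of ${\rm extr}_{2M}$, propagate the factor $M$ through $W_{\RN{2}}$ and $W_{\RN{3}}$ via the witness functions, and observe that the number of $E_k$-classes in $\cD$ is at most $|W_{\RN{1}} \cup W_{\RN{2}} \cup W_{\RN{3}}|$. The only cosmetic difference is that you sum $4M|\AAA| + 4M^2|\AAA| + 4M^3|\AAA|$ and then invoke $M \ge 1$, whereas the paper bounds each of the three sets by $4M^3|\AAA|$ before multiplying by three; the result is identical.
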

\begin{proof}
  We have that $|W_{\RN{1}}| \le 4 \cdot M \cdot |\AAA|$, $|W_{\RN{2}}| \le M \cdot |W_{\RN{1}}|$, and $|W_{\RN{3}}| \le M \cdot |W_{\RN{2}}|$.
  Therefore, each of the sets $W_{\RN{1}}$, $W_{\RN{2}}$ and $W_{\RN{3}}$ is of size at most $4 \cdot M^3 \cdot |\AAA|$.
  We conclude, as $|\cD| \le |W_{\RN{1}} \cup W_{\RN{2}} \cup W_{\RN{3}}| \le 12 \cdot M^3 \cdot |\AAA|$.
\end{proof}

To conclude this proof, it remains to establish also the second item of Lemma~\ref{l:replacement}:
the sentence $\phi$ has a model $\str{B}$ over the domain $B \eqdef (A \setminus \cC) \cup \cD$.
In this structure, the $1$-types, the linear order $<$, and the relations $E_k$ are induced from $\str{A}$, yet some $2$-types will be redefined in a specific way.

In particular, this redefinition will ensure the elements of $B \setminus (W_{\RN{1}} \cup W_{\RN{2}})$ to have $\cC$-witnesses in $W_{\RN{1}}$.

Let $N$ denote $|B \setminus (W_{\RN{1}} \cup W_{\RN{2}})|$.
We fix an enumeration $a_1,\dots,a_N$ of the set $B \setminus (W_{\RN{1}} \cup W_{\RN{2}})$.
We inductively construct a sequence of structures $\str{B}_0,\str{B}_1,\dots,\str{B}_N$: $\str{B}_0$ is the induced structure $\str{A} \restr B$, and, for each $n \in [0,N{-}1]$, $\str{B}_{n+1}$ is obtained from $\str{B}_n$ by modifying some $2$-types between $a_{n+1}$ and $W_{\RN{1}} \setminus \absclass{\str{A}}{k}{a_{n+1}}$ in the way described below.

Let $W_\cC(a_{n+1})$ denote the set of $\cC$-witnesses of $a_{n+1}$ suggested by the witness functions $\ff_m$:
\[ W_\cC(a_{n+1})\eqdef\{ \ff_m(a_{n+1}) \mid m \in [M] \} \cap (\cC \setminus \absclass{\str{A}}{k}{a_{n+1}}). \]

We set $b_1,\dots,b_\ell$ as the distinct elements of $W_\cC(a_{n+1})$, with $\ell \leq M$.
From Claim~\ref{l:configurations}, there are distinct elements $c_1,\dots,c_\ell$ from $W_{\RN{1}}$ realising the same $a_{n+1}$-configuration as $b_1,\dots,b_\ell$.
We define the structure $\str{B}_{n+1}$ almost as $\str{B}_{n}$, except that for each $c_j$ (which is in $W_{\RN{1}}\subseteq B$), the $2$-type between $a_{n+1}$ and $c_j$ is defined as the $2$-type between $a_{n+1}$ and $b_j$, i.e., $\tp^{\str{B}_{n+1}}[a_{n+1},c_j] \eqdef \tp^{\str{B}_n}[a_{n+1},b_j]$.

The following claim summarises the properties of $\str{B}_{n+1}$.

\begin{claim}\label{l:smallclaimfo2}
  Suppose that the numbers $n$ and $N$, the elements $a_1,\dots,a_N$ of $B \setminus (W_{\RN{1}} \cup W_{\RN{2}})$, and the structure $\str{B}_{n+1}$ are all defined as above.
  Then:
  \begin{enumerate}\itemsep0pt
    \item\label{smallclaimfo2item1} $\str{B}_{n+1}$ realises only the $1$-types and $2$-types from $\str{A}$.
    \item\label{smallclaimfo2item2} The linear order $<^{\str{B}_{n+1}}$ and the equivalence relations $E^{\str{B}_{n+1}}_j$, for every $j \in [K]$, are all inherited from $\str{A}~\restr~B$.
    \item\label{smallclaimfo2item3} For all distinct $a,b \in B$, $\type{\str{A}}{a,b} = \type{\str{B}_{n+1}}{a,b}$ unless $(a,b)$ or $(b,a)$ is in $(\{ a_1,\dots,a_{n+1} \} \times W_{\RN{1}}) \setminus E^{\str{A}}_k$.
    \item\label{smallclaimfo2item4} For each $i \in [n{+}1]$ and each $b \in W_\cC(a_i)$, there exists $c \in W_{\RN{1}}$ such that $\type{\str{A}}{a_i,b} = \type{\str{B}_{n+1}}{a_i,c}$.
  \end{enumerate}
\end{claim}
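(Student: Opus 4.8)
The plan is to prove Claim~\ref{l:smallclaimfo2} by induction on $n$, tracking the four invariants simultaneously. The base case $n = -1$ (i.e., $\str{B}_0 = \str{A}\restr B$) is immediate: item~\ref{smallclaimfo2item1} holds since restrictions only realise types already present; items~\ref{smallclaimfo2item2} and~\ref{smallclaimfo2item3} hold by definition of the restriction (no pairs are modified yet); and item~\ref{smallclaimfo2item4} is vacuous. For the inductive step, we assume the four properties hold for $\str{B}_n$ and verify them for $\str{B}_{n+1}$, which differs from $\str{B}_n$ only by the reassignments $\tp^{\str{B}_{n+1}}[a_{n+1},c_j] \eqdef \tp^{\str{B}_n}[a_{n+1},b_j]$ for $j \in [\ell]$, where $c_1,\dots,c_\ell \in W_{\RN{1}}$ realise the same $a_{n+1}$-configuration as $b_1,\dots,b_\ell = $ the elements of $W_\cC(a_{n+1})$.

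For item~\ref{smallclaimfo2item1}: the only new $2$-types introduced are $\tp^{\str{B}_n}[a_{n+1},b_j]$, which by the inductive hypothesis (item~\ref{smallclaimfo2item1} for $\str{B}_n$) are already types realised in $\str{A}$; the $1$-types of all elements are untouched. For item~\ref{smallclaimfo2item2}: we must check that overwriting $\tp[a_{n+1},c_j]$ with $\tp[a_{n+1},b_j]$ does not disturb $<$ or the $E_j$'s. This is exactly where condition~\ref{configurationitems2} and~\ref{configurationitems3} of ``same $a_{n+1}$-configuration'' are used: since $a_{n+1} \bowtie b_j \iff a_{n+1} \bowtie c_j$ for $\bowtie \in \{<,=,>\}$, the linear order restricted to the pair $(a_{n+1},c_j)$ is unchanged; and since $(a_{n+1},b_j) \in E_i^{\str{A}} \iff (a_{n+1},c_j) \in E_i^{\str{A}}$ for all $i$, the equivalence relations are unchanged as well. (Note $b_j \notin \absclass{\str{A}}{k}{a_{n+1}}$ and $c_j \notin \absclass{\str{A}}{k}{a_{n+1}}$, consistent with the fact that we only touch pairs outside $E_k$.) For item~\ref{smallclaimfo2item3}: the pairs modified in passing from $\str{B}_n$ to $\str{B}_{n+1}$ are exactly $(a_{n+1},c_j)$ and their reverses, all lying in $(\{a_{n+1}\}\times W_{\RN{1}})\setminus E_k^{\str{A}}$; combined with the inductive hypothesis for $\str{B}_n$, the set of all modified pairs (relative to $\str{A}$) is contained in $(\{a_1,\dots,a_{n+1}\}\times W_{\RN{1}})\setminus E_k^{\str{A}}$, as required.

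For item~\ref{smallclaimfo2item4}: for $i \le n$ the claim follows from the inductive hypothesis together with item~\ref{smallclaimfo2item3}, provided the witness $c$ found at an earlier stage was not overwritten — one must check that the pair $(a_i,c)$ is not among $(\{a_{n+1}\}\times W_{\RN{1}})$, which holds because $a_i \ne a_{n+1}$ (the enumeration is without repetition). For $i = n+1$: by construction $\tp^{\str{B}_{n+1}}[a_{n+1},c_j] = \tp^{\str{B}_n}[a_{n+1},b_j]$, and the latter equals $\tp^{\str{A}}[a_{n+1},b_j]$ by the inductive hypothesis; so each $b_j \in W_\cC(a_{n+1})$ has its witnessing $c_j \in W_{\RN{1}}$. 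The main subtlety — and the step to be most careful about — is the interaction in item~\ref{smallclaimfo2item4} for $i \le n$: one must ensure that witnesses assigned at earlier stages are never clobbered at later stages. This is guaranteed precisely because each stage $n{+}1$ only modifies pairs with first coordinate $a_{n+1}$, and the $a_i$ are pairwise distinct, so the "witness slots" for different $a_i$ live in disjoint rows of the type-assignment and do not collide. A secondary point worth spelling out is that Claim~\ref{l:configurations} is applicable at each stage: it requires $a_{n+1} \in A\setminus W_{\RN{1}}$, which holds because $a_{n+1} \in B\setminus(W_{\RN{1}}\cup W_{\RN{2}})$, and it requires the $b_j$ to be distinct elements of $\cC\setminus\absclass{\str{A}}{k}{a_{n+1}}$, which holds by the definition of $W_\cC(a_{n+1})$.
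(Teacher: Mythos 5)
Your inductive proof is correct and is, in essence, the argument the paper intends; the paper states Claim~\ref{l:smallclaimfo2} without proof, treating it as a routine summary of the $\str{B}_0,\dots,\str{B}_N$ construction, so spelling out the induction across the four invariants is exactly the right way to verify it.

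One small sharpening in your argument for item~\ref{smallclaimfo2item4} with $i \le n$: the justification ``$a_i \neq a_{n+1}$'' is not quite sufficient on its own. The modifications at stage $n{+}1$ are to unordered pairs $\{a_{n+1},c_j\}$ with $c_j \in W_{\RN{1}}$, so the pair $\{a_i,c\}$ could in principle be touched not only if $a_i = a_{n+1}$, but also if $c = a_{n+1}$ and $a_i$ happens to be one of the $c_j$'s. To rule out the second case one needs the additional fact that $a_i \in B \setminus (W_{\RN{1}} \cup W_{\RN{2}})$, hence $a_i \notin W_{\RN{1}}$ (equivalently, $a_{n+1} \notin W_{\RN{1}}$, hence $c \neq a_{n+1}$). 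You use this implicitly when you say the ``rows'' do not collide, but it is worth stating explicitly, since it is precisely the reason the enumeration is taken over $B \setminus (W_{\RN{1}} \cup W_{\RN{2}})$ rather than over all of $B$. Relatedly, when you appeal to the inductive hypothesis to get $\tp^{\str{B}_n}[a_{n+1},b_j] = \tp^{\str{A}}[a_{n+1},b_j]$, note that item~\ref{smallclaimfo2item3} applies only when $b_j \in B$; when $b_j \notin B$ (which can happen, since $a_{n+1}$ may lie in $W_{\RN{3}}$ and $\ff_m(a_{n+1})$ need not land in $\cD$), the expression should simply be read as $\tp^{\str{A}}[a_{n+1},b_j]$, a harmless abuse of notation inherited from the paper. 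Everything else — the use of conditions (ii) and (iii) of the configuration for item~\ref{smallclaimfo2item2}, the locality of the modified pairs for item~\ref{smallclaimfo2item3}, and the applicability of Claim~\ref{l:configurations} — is exactly right.
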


From the items of Claim~\ref{l:smallclaimfo2}, we conclude that the structure $\str{B}$, defined as the final structure $\str{B}_{N}$, is a model of $\varphi$ over the domain $(A \setminus \cC) \cup \cD$.
This finishes the proof of the second item of Lemma~\ref{l:replacement}.

\section{How to show Theorems \ref{t:three} and \ref{t:one}} \label{s:sketch}

Theorems~\ref{t:three} and~\ref{t:one} follow by similar arguments as Theorem~\ref{t:two}.
We sketch them here; details are in Appendix~\ref{s:BC}.

\smallskip
\noindent
\emph{Theorem~\ref{t:three}.}
We proceed via a reduction to \FOt$[<,\EQ]$.
Let $\str{A}_0$ be a finite $\FOt[\precEQ]$-model of $\varphi$.
We expand it to an $\FOt[<,\EQ]$-structure $\str{A}$ by:
(i) setting $E_k^{\str{A}} \eqdef {\preceq}_k^{\str{A}_0} \cap ({\preceq}_k^{\str{A}_0})^{-1}$ for all $k$;
and (ii) interpreting $<^{\str{A}}$ as a linear order derived from ${\preceq_1}^{\str{A}_0}$ by resolving ties within $E_1$-classes arbitrarily.
Applying Lemma~\ref{l:replacement} to $\str{A}$ yields a model $\str{B}$ over the domain $B = (A \setminus \cC) \cup \cD$ with $\cD \subseteq \cC$.
Item~\ref{smallclaimfo2item2} of Claim~\ref{l:smallclaimfo2} implies that $\preceq^{\str{B}}_k = \preceq_k^{\str{A} \ \restr\  B}$ for all $k$.
Since nested preorders are preserved under taking substructures, $\str{B}$ respects the semantics of $\preceq_k$-symbols, and thus it can be naturally transformed back into an $\FOt[\precEQ]$-model of $\varphi$.
Since the size bound analysis from the proof of Theorem~\ref{t:two} still applies, we get a small model and thereby establish Theorem~\ref{t:three}.

\smallskip
\noindent
\emph{Theorem~\ref{t:one}.}
The argument proceeds analogously to that for Lemma~\ref{l:replacement} and Theorem~\ref{t:two}, but now starting from a possibly infinite model $\str{A}$.
In the absence of the linear order, we may choose arbitrary realisations of $1$-types into $r$-extremal subsets (rather than the maximal and minimal ones).
The rest of the reasoning carries over directly, with the only modification being that, due to the possible infinity of $\str{A}$, it is sometimes necessary to work with natural limit structures.

\section{Proof of Theorem~\ref{t:four}}\label{s:D} 

In this section, we establish our second main result.

\fourthTheorem*

For this section, we fix a sentence $\phi$ of $\FOt[\succEQ]$, and assume that it is in Scott Normal Form, i.e., in the shape of \(\forall x,y.~\psi_0(x,y) \wedge \bigwedge_{m=1}^{M}{\forall x.~\exists y.~\psi_m(x,y)} \), where $M$ denotes the number of Skolem conjuncts.
As explained in Subsection~\ref{s:snf}, the problem of transforming any $\FOt$ sentence into Scott Normal Form, satisfiable over the same domains, is in polynomial-time. Hence, this assumption is without loss of generality.
By convention, $\phi$ uses the first $K$ special symbols $\preceq_k$ for $k \in [K]$, and the corresponding derived symbols $E_k$ and $\cS_k$; moreover we interpret the symbols $E_0$ and $E_{K+1}$ as the identity and universal relations, respectively.

\smallskip
\noindent{\bf Small model property.}
We begin with the first part of Theorem~\ref{t:four}, establishing that every finitely satisfiable sentence of $\FOt[\succEQ]$ has a model of doubly exponential size.  
The key ingredient is the following lemma, analogous to Lemma~\ref{l:replacement}; its proof is deferred to Appendix \ref{appendix:D}.

\begin{restatable}{lemma}{pumpinglemma}\label{l:pumping}
Let $\str{A}$ be a finite model of $\phi$.
Fix $k \in [0,K]$, and let $\cC$ be an $E_{k+1}$-class in $A / E^{\str{A}}_{k+1}$.
Then there exists a subset $\cD \subseteq \cC$ such that:
\begin{enumerate}\itemsep0pt
  \item $\cD$ is the union of $2^{2^{\cO(|\phi|)}}$ many $E_k$-classes from $\cC / E^{\str{A}}_k$.
  \item $\phi$ has a model $\str{B}$ with domain $B = (A \setminus \cC) \cup \cD$.
\end{enumerate}
\end{restatable}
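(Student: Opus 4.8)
The plan is to mimic the three-layer witness construction from the proof of Lemma~\ref{l:replacement}, but now in the presence of the successor predicates $\cS_k$, which are the source of the extra exponential blow-up. As before, fix a finite model $\str{A}$, a level $k \in [0,K]$, and an $E_{k+1}$-class $\cC$. The linear order $\le^{\str{A}}$ obtained from $\preceq_1^{\str{A}}$ by resolving ties within $E_1$-classes gives a total order on $A$, and each $E_j$-class ($j \le k+1$) is an \emph{interval} of this order. First I would isolate, inside $\cC$, a set $W_{\RN{1}}$ of ``boundary'' $E_k$-classes of $\cC$: because the $\cS_k$ relation links consecutive $E_k$-classes, any redefinition of $2$-types between an outside element $a$ and elements of $\cC$ must respect not only the $\bowtie$-position and the $E_j$-pattern (as in the $\FOt[<,\EQ]$ case) but also whether $a$ or its witnesses sit at a $\cS_j$-extremal position of an $E_j$-class, and, for $j=k$, which $E_k$-class of $\cC$ they lie in relative to the $\cS_k$-chain on $\cC/E_k$. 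This forces the notion of $a$-configuration to additionally record, for each $j \in [K]$, the $\cS_j$-relations among $a$ and the $b_i$, and the $\cS_k$-``rank modulo a threshold'' of each $b_i$ inside $\cC$.

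The key combinatorial point is a pumping argument on the $\cS_k$-chain of $E_k$-classes within $\cC$: if two $E_k$-classes $\cE, \cE'$ of $\cC$, sufficiently far apart in the $\cS_k$-order, realise the same ``profile'' --- the set of $1$-types occurring in them together with, for each such type, whether it has a representative that is $\cS_j$-minimal/maximal within its $E_j$-class for every $j$, plus the multiset of $2$-types to the (bounded) set of already-selected witness classes --- then one can excise the entire block strictly between $\cE$ and $\cE'$ and glue $\cE'$ right after $\cE$ via $\cS_k$, and all $\forall\exists$-witnesses and all universal conjuncts (including those mentioning $\cS_j$) are preserved. The number of distinct profiles is $2^{2^{\cO(|\phi|)}}$ (a profile is essentially a set of $1$-types decorated with polynomially many bits), which is exactly the claimed bound on the number of surviving $E_k$-classes. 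The three witness layers $W_{\RN{1}}, W_{\RN{2}} = \bigcup_m \ff_m[W_{\RN{1}}], W_{\RN{3}} = \bigcup_m \ff_m[W_{\RN{2}}]$ are then each of doubly exponential size, and $\cD$ is their $E_k$-closure, after which a pumping pass removes all redundant $E_k$-classes of $\cC$ down to one representative per profile, giving item~1.

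For item~2, the model $\str{B}$ on $B = (A \setminus \cC) \cup \cD$ is built almost exactly as in Lemma~\ref{l:replacement}: take $\str{A} \restr B$ and redefine, for each $a \in B \setminus (W_{\RN{1}} \cup W_{\RN{2}})$, the $2$-types between $a$ and a suitable configuration-matching tuple in $W_{\RN{1}}$, so that $a$ regains the $\cC$-witnesses prescribed by the $\ff_m$. The analogue of Claim~\ref{l:smallclaimfo2} must be strengthened to assert that $\preceq_j^{\str{B}} = \preceq_j^{\str{A}\restr B}$ and $\cS_j^{\str{B}} = \cS_j^{\str{A}\restr B}$ for all $j$: the first follows because configurations preserve the $\preceq_j$-order; the second is the delicate point, since deleting a block of $E_k$-classes inside $\cC$ changes which classes are $\cS_k$-adjacent, so one has to verify that the retained representatives were chosen to be $\cS_k$-consecutive after excision and that no redefined $2$-type ever asserts or denies an $\cS_j$-edge incorrectly --- this is guaranteed because configuration condition for $\cS_j$ is part of the matching and because $a \notin \cC$ means $a$ is never $\cS_k$-adjacent to anything in $\cC$ across the boundary in a way not already recorded. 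Finally the size bound for Theorem~\ref{t:four} follows by the same induction as for Theorem~\ref{t:two}: each of the $K+1$ levels multiplies the class size by $2^{2^{\cO(|\phi|)}}$, and $K = \cO(|\phi|)$, so $|A| \le (2^{2^{\cO(|\phi|)}})^{\cO(|\phi|)} = 2^{2^{\cO(|\phi|)}}$.

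The main obstacle I expect is making the pumping argument on the $\cS_k$-chain interact correctly with \emph{all} the coarser successors $\cS_{k+1},\dots,\cS_K$ simultaneously: excising a block of $E_k$-classes from $\cC$ must not disturb the $\cS_j$-structure for $j > k$ (since $\cC$ sits inside a single $E_j$-class for such $j$, the $\cS_j$-edges within that class and at its boundary have to be re-checked), and getting the profile definition to carry exactly the right bounded information --- enough to preserve every $\cS_j$-literal and every witness, but coarse enough to have only doubly-exponentially many values --- is the crux of the whole argument.
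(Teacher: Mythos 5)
Your high-level plan --- three witness layers plus a pumping argument on the $\cS_k$-chain of $E_k$-classes within $\cC$, matching "profiles'' and excising the block between two matching classes --- is indeed the shape of the paper's proof, but there are several genuine gaps in how you set it up.

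First, the paper separates $\cC$-witnesses into \emph{local} witnesses (lying in the $\cC$-neighbourhood $\cN_\cC(a)$, i.e.\ in $E_k$-classes that are $\cS_k$-adjacent to the class of $a$) and \emph{remote} witnesses, and handles them by disjoint mechanisms: a single global $W^{\mathrm{rem}}_{\RN{1}},W^{\mathrm{rem}}_{\RN{2}},W^{\mathrm{rem}}_{\RN{3}}$ for remote ones, and a per-adjacent-pair family $W^{\mathrm{loc}}[i,i{+}1]$ for local ones. Your proposal collapses this into one layered construction plus a ``rank modulo threshold'' annotation, which is not enough: remote witnesses are insensitive to where in the $\cS_k$-chain they sit and need to be globally reusable, whereas local witnesses live at a specific boundary and must be preserved class-by-class.

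Second --- and this is the crux --- your profile is attached to a \emph{single} $E_k$-class and is described as ``a set of $1$-types decorated with polynomially many bits'' plus multisets of $2$-types to already-selected witnesses. That is too weak. The paper's matching criterion is a \emph{canonical isomorphism between entire substructures} $\str{A}\restr W^{\mathrm{loc}}[p{-}1,p]$ and $\str{A}\restr W^{\mathrm{loc}}[q,q{+}1]$, where each $W^{\mathrm{loc}}[i,i{+}1]$ spans \emph{two adjacent} $E_k$-classes and the isomorphism is additionally required to commute with the witness functions $\ff_m$. Both features are essential: after excising $\cE_p,\dots,\cE_q$ the classes $\cE_{p-1}$ and $\cE_{q+1}$ become $\cS_k$-adjacent, so you must be able to \emph{glue} the left half of one block to the right half of the other, and this needs the full $2$-type pattern across an adjacent pair, not per-class information. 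The count still comes out to $2^{2^{\cO(|\phi|)}}$ because the blocks $W^{\mathrm{loc}}[i,i{+}1]$ have size $n=\cO(M^3 K |\AAA|)$, but the information content is the isomorphism type of a structure of that size, i.e.\ $2^{n^2(|\Rels|+M)}$, not a set of $1$-types plus polynomially many bits.

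Third, your justification that redefined $2$-types never fake an $\cS_j$-edge rests on the claim that ``$a\notin\cC$ means $a$ is never $\cS_k$-adjacent to anything in $\cC$ across the boundary,'' which is simply false: an element of the $E_k$-class just below $\cE_1$ or just above $\cE_L$ is $\cS_k$-related to elements of $\cC$. The correct reason the boundary is safe is that reducible pairs are required to satisfy $1<p\le q<L$, so the two extreme $E_k$-classes $\cE_1,\cE_L$ (which carry all $\cS_k$-links to the outside of $\cC$) are never excised; and additionally $W^{\mathrm{rem}}$ must be disjoint from the excised block. Finally, your worry about $\cS_j$ for $j>k$ is a non-issue here: $\cC$ is one whole $E_{k+1}$-class and stays nonempty, so the $E_j$-partition for $j\ge k{+}1$ and the $\cS_j$-adjacencies between those coarser classes are untouched by the excision. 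You correctly identified that the $\cS_j$-structure for $j\le k$ is the delicate part; in the paper this is what drives the stratified definition of $V_i$ via the extremal $E_{j-1}$-classes $\cE^{\min}_{i,j-1},\cE^{\max}_{i,j-1}$, which your proposal does not reproduce.
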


The argument for deriving the small model bound from Lemma~\ref{l:pumping} is parallel to that of Theorem~\ref{t:two} from Lemma~\ref{l:replacement}.

Let $\str{A}$ be a minimal-size finite model.
By Lemma~\ref{l:pumping}, for each $k \in [0,K]$, every $E_{k+1}$-class of $\str{A}$ is partitioned into at most $2^{2^{\cO(|\phi|)}}$ many $E_k$-classes.
Since each $E_0$-class is a singleton and $E_{K+1}^{\str{A}}$ has a single equivalence class, it follows that $\str{A}$ is indeed of doubly exponential size:
\[
|A| \leq \Big(2^{2^{\cO(|\phi|)}}\Big)^{K+1} = 2^{(K+1) \times 2^{\cO(|\phi|)}} = 2^{2^{\cO(|\phi|)}}.
\]

\smallskip
\noindent
{\bf Satisfiability-checking procedure.}
We now turn to the second part of Theorem~\ref{t:four}, establishing an \ExpSpace{} procedure deciding the finite satisfiability problem for \FOt$[\succEQ]$.
A preliminary version of this procedure, operating in 2-\ExpSpace{}, is given in Algorithm~\ref{algos}.

\SetKw{KwGuess}{guess}
\SetKw{KwReject}{reject}
\SetKw{KwAccept}{accept}
\SetCommentSty{emph}        % Italic
\begin{algorithm}
  \caption{Finite Satisfiability for \FOt$[\succEQ]$}\label{algos}
  \KwIn{A sentence $\phi$ in Scott Normal Form}
  \KwOut{Decides whether $\phi$ is finitely satisfiable}
  \KwGuess the domain size $N \in \N$\;
  \For{$i \leftarrow 1,2, \dots, N$}{
    \KwGuess the number $r_i \in [K{+}1]$\;
    \KwGuess the $1$-type $\alpha_i \in \AAA$\;
    \lIf{$\alpha_i \not\models \psi_0(x,x)$}{\KwReject}
    $W_i \leftarrow \{ m \in [M] \mid \alpha_{i}\models \psi_m(x,x) \}$\;
    \For{$j \in \{ 1, \dots, i{-}1 \}$}{
      \If{$q_j\le r_i$}{$p_j \leftarrow q_j$\; $q_j \leftarrow r_i$\;}
      \KwGuess the $2$-type $\beta_{j,i} \in \BBB[\alpha_j,\alpha_i,p_j,q_j]$\;
      \lIf{$\beta_{j,i} \not\models\psi_0(x,y) \land \psi_0(y,x)$}{\KwReject}
      $W_j \leftarrow W_j \cup \{ m \in [M] \mid \beta_{j,i}\models \psi_m(x,y) \}$\;
      $W_i \leftarrow W_i \cup \{ m \in [M] \mid \beta_{j,i}\models \psi_m(y,x) \}$\;
    }
    $p_i \leftarrow 1$; $q_i \leftarrow 1$\;
  }
  \lIf{$W_i = [M]$ for every $i \in [N]$}{\KwAccept}
  \lElse{\KwReject}
\end{algorithm}

We call a $1$-type or $2$-type \emph{proper} if it interprets all the special symbols $\preceq_k$, $E_k$, and $\cS_k$ in a manner consistent with the semantics of \FOt$[\succEQ]$.  
In other words, proper types are exactly those that can occur in actual models of \FOt$[\succEQ]$.  
In particular, they respect the natural compatibility conditions between equivalences and preorders, e.g., if $E_k(x,y)$ holds, then $x \preceq_k y$ and $y \preceq_k x$ must both hold as well.  
For $2$-types, we additionally require that they entail $x \preceq_k y$ for the relevant~$k$; this condition guarantees compatibility with the fixed ordering alignment assumed in the algorithm.
Denote by $\AAA$ and $\BBB$ the sets of proper, respectively, $1$-types and $2$-types over the signature of $\phi$.

In the procedure, we use a table $\BBB[*,*,*,*]$.
In this table, the entry $\BBB[\alpha_1,\alpha_2,p,q]$, for every $\alpha_1,\alpha_2 \in \AAA$ and $1 \le p \le q \le K{+}1$,
stores a set of $2$-types. We put a $2$-type $\beta \in \BBB$ into the set $\BBB[\alpha_1,\alpha_2,p,q]$ if the following conditions hold:
\begin{enumerate}[label = (\roman*)]
  \item $\type{\beta}{x} = \alpha_1$ and $\type{\beta}{y} = \alpha_2$.
  \item For each $k \in [K]$, $E_k(x,y) \in \beta$ iff $q \le k$.
  \item For each $k \in [K]$, $S_k(x,y) \in \beta$ iff $p \le k < q$.
\end{enumerate}

Algorithm~\ref{algos} attempts to verify the existence of a model $\str{A}$ for~$\phi$ over the domain $\{1,\dots,N\}$, for some $N \in \N$.
We assume that the preorders $\preceq^{\str{A}}_k$ are aligned with the natural order $\le$ on~$\N$ (i.e., $i{\prec_k}j \rightarrow i{<}j$).
In the $i$th iteration of the outer loop, the algorithm examines how the new element~$i$ interacts with all previous elements $1,\dots,i{-}1$, denoted by $j$ in the inner loop.
The variable $q_j$ stores the smallest index $k$ such that $j$ and $i$ belong to the same $E_k$-class; and $K{+}1$ is a sentinel value.  
If $p_j {<} q_j$, then $p_j$ records the smallest index $k$ for which $j$ and $i$ lie in $\preceq_k$-consecutive $E_k$-classes; if instead $p_j {=} q_j$, then no such consecutive $E_k$-classes exist.

To illustrate better the idea of Algorithm~\ref{algos}, we visualise the structure in the following schematic, each horizontal brace groups together the elements of one $E_k$-class, and $\preceq_k$ linearly orders the $E_k$-classes from left to right:
{\fontsize{7.9}{8}\selectfont
\begin{align*}
  \underbrace{
    \underbrace{
      \underbrace{
        \underbrace{~1~~~2~~~3~~~4~~~5~}_{E_1}
      }_{E_2}
      \underbrace{
        \underbrace{~6~~~7~}_{E_1}
        \underbrace{~8~~~9~~~10~}_{E_1}
      }_{E_2}
    }_{E_3}
    \underbrace{
      \underbrace{
        \underbrace{~11~~~12~}_{E_1}
        \underbrace{~13~~~14~~~15~}_{E_1}
        \underbrace{~16~~~17~~~18~}_{E_1}
      }_{E_2}
    }_{E_3}
  }_{E_4}
\end{align*}}

Consider now the $18$th iteration of the outer loop, that is, $i = 18$.
The values of the variables $p_j$ and $q_j$ for selected~$j$ are:
\(p_{5} = 3,\ q_{5} = 4;\ \)
\(p_{9} = 2,\ q_{9} = 4;\ \) 
\(p_{12} = 2,\ q_{12} = 2\).

Intuitively, these pairs $(p_j,q_j)$ capture the \emph{relative position} of~$j$ with respect to~$i$ in the hierarchy of equivalence classes, allowing the algorithm to determine which $2$-types between elements $j$ and $i$ are admissible.

The algorithm maintains, for each element $i \in [N]$, a set $W_i \subseteq [M]$ recording which Skolem conjuncts are already satisfied for~$i$, and accepts precisely when every element has all its witnesses secured, i.e., when $W_i = [M]$ for all $i$.

We formalise the above explanations in the following claim, establishing completeness of the procedure.
\begin{claim}
  Suppose that $\phi$ has a finite model $\str{A}$.
  Then there exists a sequence of guesses of Algorithm~\ref{algos} such that, after $i$ iterations, where $i \in [N]$, the following invariants hold:
\begin{itemize}
  \item For $j \in [i]$, $\type{\str{A}}{j}{=}\alpha_j$, and for $j{<}k{\le}i$, $\type{\str{A}}{j,k}{=}\beta_{j,k}$.
  \item If $i>1$, then, for $k \in [K]$, $(i{-}1,i) \in E^{\str{A}}_k$ iff $r_i \le k$. 
  \item For $j \in [i]$ and $k \in [K]$, $(j,i) \in E^{\str{A}}_k$ iff $q_j \le k$.
  \item For $j \in [i]$ and $k \in [K]$, $(j,i) \in \cS^{\str{A}}_{k}$ iff $p_j \le k < q_j$.
  \item For $j \in [i]$, $W_j$ indicates witnesses of $j$ located in $\str{A}~\restr~[i]$.
\end{itemize}
\end{claim}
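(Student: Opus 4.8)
The plan is to take a finite model $\str{A}$ of $\phi$, enumerate its domain so that it matches the combinatorial structure Algorithm~\ref{algos} implicitly builds, and then simply read off the guesses. First I would fix the enumeration. Since the preorders $\preceq_1^{\str{A}} \subseteq \dots \subseteq \preceq_K^{\str{A}}$ are nested, so are the induced equivalences $E_1^{\str{A}} \subseteq \dots \subseteq E_K^{\str{A}}$, and for each $k$ the preorder $\preceq_k^{\str{A}}$ induces a total order on the $E_k^{\str{A}}$-classes contained in a common $E_{k+1}^{\str{A}}$-class. Ordering the domain recursively — first the $E_K^{\str{A}}$-classes by $\preceq_K^{\str{A}}$, then inside each of them the $E_{K-1}^{\str{A}}$-classes by $\preceq_{K-1}^{\str{A}}$, and so on down to singletons — and identifying $A$ with $\{1,\dots,N\}$ along this order, one obtains two properties used throughout: (a) each $E_k^{\str{A}}$-class is an interval $\{a,\dots,b\}$, and (b) the enumeration is \emph{aligned}, i.e.\ $i \prec_k^{\str{A}} j$ implies $i < j$ for every $k$ (this follows because $i \prec_k^{\str{A}} j$ forces $i \prec_1^{\str{A}} j$ by nestedness).

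Next I would exhibit the accepting run: guess $N$; in iteration $i$, guess $r_i := \min\{k \in [K{+}1] : (i{-}1,i)\in E_k^{\str{A}}\}$ (with $r_1$ arbitrary), $\alpha_i := \type{\str{A}}{i}$, and, for $j < i$, $\beta_{j,i} := \type{\str{A}}{j,i}$. All reject tests pass because $\str{A}\models\phi$: in particular $\alpha_i\models\psi_0(x,x)$ and $\beta_{j,i}\models\psi_0(x,y)\wedge\psi_0(y,x)$ follow from $\str{A}\models\forall x,y.\,\psi_0$, and $\beta_{j,i}$ is a proper $2$-type entailing $x\preceq_k y$ for the relevant $k$ by alignment. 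It then remains to prove, by induction on $i$, that the algorithm's variables satisfy the stated invariants and that $\beta_{j,i}$ falls into the cell $\BBB[\alpha_j,\alpha_i,p_j,q_j]$ current at that point.

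The heart of the argument is the behaviour of $p_j$ and $q_j$. Unwinding the updates gives $q_j = \max(r_{j+1},\dots,r_i)$ after iteration $i$; by transitivity of $E_k^{\str{A}}$ together with the interval property (a), this equals $\min\{k:(j,i)\in E_k^{\str{A}}\}$, which is the claimed invariant. For successors, one checks — again using (a) and alignment — that $(j,i)\in\cS_k^{\str{A}}$ holds iff exactly one of $r_{j+1},\dots,r_i$ exceeds $k$, and that the update rule for $p_j$ records precisely this, so that $(j,i)\in\cS_k^{\str{A}}$ iff $p_j \le k < q_j$. Given these invariants, the conditions defining the cell $\BBB[\alpha_j,\alpha_i,p_j,q_j]$ are exactly those met by $\type{\str{A}}{j,i}$, so the guess is legal and the $\psi_0$-checks succeed. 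The witness-set invariant is then immediate: $W_j$ accumulates, after iteration $i$, the set $\{m:\str{A}\models\psi_m(j,b)\ \text{for some}\ b\in[i]\}$; since $\str{A}\models\bigwedge_m\forall x\exists y.\,\psi_m$, after iteration $N$ every $W_j=[M]$ and the algorithm accepts.

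The main obstacle is this successor bookkeeping. Membership of a pair in $E_k^{\str{A}}$ is ``local'' (a conjunction over consecutive pairs), but $\cS_k^{\str{A}}$ is not: whether $[j]_{E_k^{\str{A}}}$ and $[i]_{E_k^{\str{A}}}$ are $\preceq_k^{\str{A}}$-consecutive depends on whether \emph{some} element falls strictly between them, information the algorithm can only reconstruct from the intermediate values of $r$. Making this reconstruction exact is where finiteness of $\str{A}$ matters (every non-maximal $\preceq_k^{\str{A}}$-class has an immediate successor) and where the specific recursive enumeration is essential, since it guarantees that the classes separating $j$ from $i$ are faithfully witnessed by the values $r_{j+1},\dots,r_i$.
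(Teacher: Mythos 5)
Your overall strategy is the right one: fix an enumeration of $A$ under which each $E_k^{\str{A}}$-class is an interval and the preorders are aligned with $<$, read off $r_i := \min\{k : (i{-}1,i) \in E_k^{\str{A}}\}$, $\alpha_i := \type{\str{A}}{i}$ and $\beta_{j,i} := \type{\str{A}}{j,i}$, and verify the invariants by induction on $i$. The first, second, third and fifth invariants then go through, and your observation that $q_j = \max(r_{j+1},\dots,r_i) = \min\{k : (j,i) \in E_k^{\str{A}}\}$ is correct, as is your reformulation of the fourth invariant as ``$(j,i) \in \cS_k^{\str{A}}$ iff exactly one of $r_{j+1},\dots,r_i$ exceeds $k$.''

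The gap is precisely the step you dismiss with ``one checks \ldots that the update rule for $p_j$ records precisely this.'' It does not. The branch \texttt{if $q_j \le r_i$} fires only when $r_i$ matches or beats the running maximum, so the final value of $p_j$ is whatever $q_j$ held just before its last increase -- \emph{not} the second-largest of $r_{j+1},\dots,r_i$, which is what your reformulation requires. The two coincide only when the second-largest $r$-value occurs before the maximum. The paper's own running example already breaks this: for $j = 9$, $i = 18$ one has $r_{10},\dots,r_{18} = 1,4,1,2,1,1,2,1,1$, and the branch fires only at $i=11$ (where $q_9 = 1 \le 4$), setting $p_9 := 1$; it never fires again, so $p_9 = 1$ at the end. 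But $(9,18) \notin \cS_1^{\str{A}}$ (the $E_1$-classes $\{8,9,10\}$ and $\{16,17,18\}$ are not adjacent) while $(9,18) \in \cS_2^{\str{A}}$, so the invariant requires $p_9 = 2$ -- the value the paper's prose states. Consequently $\type{\str{A}}{9,18}$, which entails $\cS_2(x,y)$ but not $\cS_1(x,y)$, is not a member of $\BBB[\alpha_9,\alpha_{18},1,4]$, so the intended guess $\beta_{9,18} = \type{\str{A}}{9,18}$ is not available and the first invariant fails as well. To complete your induction (and to make the claim true) the update needs an additional branch, e.g.\ \texttt{else if $p_j \le r_i$ then $p_j \leftarrow r_i$}, maintaining $p_j$ as the running second-maximum; without it your ``one checks'' cannot be carried out.
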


Soundness follows as well.  
Given a transcription of an accepting run, we reconstruct a finite model of~$\phi$ as follows.  
Take the domain to be $\{1,\dots,N\}$, assign to each element~$i$ the $1$-type~$\alpha_i$, and for each ordered pair $(j,i)$ assign the $2$-type~$\beta_{j,i}$.
Since each pair is processed exactly once, and all $2$-types are determined consistently with both the preorder relations and their successors, no conflicts can arise.  
The resulting structure therefore satisfies~$\phi$.
\begin{corollary}
  Algorithm~\ref{algos} is sound and complete.
\end{corollary}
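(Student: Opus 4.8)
The plan is to prove the Corollary by treating the two implications separately, leaning on the preceding Claim for completeness and on the reconstruction sketched above for soundness.

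For completeness I would start from a finite model $\str{A}$ of $\phi$, extend $\preceq_1^{\str{A}}$ to a linear order, and re-enumerate the domain as $[N]$ along it; by nestedness every $\preceq_k^{\str{A}}$ is then aligned with $\le$, matching the convention built into Algorithm~\ref{algos}. The preceding Claim hands me a sequence of guesses realising its invariants after each iteration, and two observations finish this direction. First, the run never rejects: the checks $\alpha_i \models \psi_0(x,x)$ and $\beta_{j,i} \models \psi_0(x,y) \land \psi_0(y,x)$ follow from $\str{A} \models \forall x,y.~\psi_0$ once we know $\alpha_i = \type{\str{A}}{i}$ and $\beta_{j,i} = \type{\str{A}}{j,i}$. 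Second, after the final iteration the Claim's last invariant says each $W_i$ collects every $m \in [M]$ for which $i$ has a $\psi_m$-witness inside $\str{A}\restr[N] = \str{A}$, which is all of $[M]$ since $\str{A} \models \phi$; hence the run accepts.

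For soundness I would fix a transcript of an accepting run --- the size $N$, the guessed $1$-types $\alpha_1,\dots,\alpha_N$, and the guessed $2$-types $\beta_{j,i}$ for $j<i$ --- and build a structure $\str{A}$ over domain $[N]$ by putting $\type{\str{A}}{i} \eqdef \alpha_i$ and, for $j<i$, $\type{\str{A}}{j,i} \eqdef \beta_{j,i}$ (which also fixes $\type{\str{A}}{i,j}$ as the converse $2$-type). This is well-posed because $\beta_{j,i} \in \BBB[\alpha_j,\alpha_i,p_j,q_j]$ forces, by clause (i) of the table, the $1$-types of $x$ and $y$ in $\beta_{j,i}$ to equal the already-chosen $\alpha_j$ and $\alpha_i$, and each ordered pair is processed exactly once. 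Then I would verify, in turn: (a) the special symbols receive globally legal interpretations --- $\preceq_k^{\str{A}}$ a nested family of total preorders, $E_k^{\str{A}}$ its induced equivalence, $\cS_k^{\str{A}}$ its induced successor, and $E_0^{\str{A}}$, $E_{K+1}^{\str{A}}$ the identity and universal relations; (b) $\str{A} \models \forall x,y.~\psi_0$, which follows pointwise from the two non-rejection checks; and (c) $\str{A} \models \forall x.~\exists y.~\psi_m$ for every $m \in [M]$, since acceptance gives $m \in W_i$ for every $i$, and $m$ entered $W_i$ either from $\alpha_i \models \psi_m(x,x)$, or from $\beta_{j,i} \models \psi_m(y,x)$ for some $j<i$, or from $\beta_{i,j} \models \psi_m(x,y)$ for some $j>i$, each exhibiting a genuine $\psi_m$-witness for $i$. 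Together these make $\str{A}$ a finite model of $\phi$.

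The main obstacle is step (a): arguing that purely local data --- the proper $2$-types $\beta_{j,i}$ and the scalars $p_j,q_j$ --- pins down a single globally consistent hierarchy of nested total preorders, and in particular that the guessed $\cS_k$-atoms coincide with the successor relation of the emergent $\preceq_k$ rather than introducing spurious successor edges. I would handle it by an induction over the outer loop mirroring the invariants of the preceding Claim, but read in the reconstruction direction: since elements are introduced in an order simultaneously compatible with all the $\preceq_k$ (possible by nestedness), each $E_k$-class is an interval of $[N]$; the $p_j,q_j$ update provably preserves its intended reading --- $q_j$ the least level at which $j$ and $i$ merge, and $p_j$ (when $p_j<q_j$) the least level at which their classes are $\preceq$-consecutive --- from one iteration to the next; clauses (ii)--(iii) of the table then make each $\beta_{j,i}$ agree with this hierarchy, and the requirement that $2$-types entail $j \preceq_k i$ keeps the alignment with $\le$ intact.
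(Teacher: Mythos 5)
Your proposal is correct and follows essentially the same route as the paper: completeness via the invariants of the preceding claim, and soundness by reading the accepting transcript back into a structure over $\{1,\dots,N\}$ with the guessed $1$- and $2$-types. The only difference is one of care rather than substance --- you make explicit (and propose an induction for) the global-consistency point that the reconstructed $\preceq_k$ form nested total preorders whose induced successors match the guessed $\cS_k$-atoms, which the paper dispatches with the single remark that ``no conflicts can arise.''
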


Algorithm~\ref{algos} operates in space proportional to~$N$.
By the already established small model property, $N$ can be assumed doubly exponential in~$|\phi|$, yielding a $2$-\ExpSpace{} bound.

However, we note that the precise identities of the previously processed elements $j < i$ are irrelevant: all information needed about such an element~$j$ is captured by the tuple $\langle \alpha_j, W_j, p_j, q_j \rangle$.  
Consequently, after the $i$th iteration, the algorithm's configuration can be represented compactly as a \emph{counting function}, recording the number of such tuples:
\[
F_i \colon \AAA \times 2^{[M]} \times [K{+}1] \times [K{+}1] \to \N.
\]
It is straightforward to adapt the procedure to work directly with this representation.  
Storing $F_i$ explicitly as a table requires  
\(
\cO\!\left(|\AAA| \cdot 2^M \cdot K^2 \cdot \log N\right)
\)  
bits of space.  
Since $|\AAA|$ is $2^{\cO(|\phi|)}$, $M$ and $K$ are $\cO(|\phi|)$, and $N$ is doubly exponential in~$|\phi|$, the overall space usage is singly exponential in~$|\phi|$.

\begin{corollary}
  Algorithm~\ref{algos} can be implemented to operate in exponential space.
\end{corollary}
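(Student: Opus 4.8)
The plan is to observe that, although Algorithm~\ref{algos} as written scans an explicit list of $N$ elements and $N$ may be doubly exponential, the information it truly needs about the already-processed part of the domain has only exponential size, so the procedure can be re-implemented within that budget. The first step is to check that the identity of a previously-processed element $j<i$ never influences the rest of the run: in the $i$th iteration $j$ is consulted only to update the pair $(p_j,q_j)$ (which depends solely on its current value and on the freshly guessed $r_i$), to guess $\beta_{j,i}\in\BBB[\alpha_j,\alpha_i,p_j,q_j]$ (which depends only on $\alpha_j,p_j,q_j$ and on data of the new element), and to update $W_j$ and $W_i$ (which depend only on $\beta_{j,i}$). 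Hence everything the algorithm ever remembers about $j$ is the tuple $\langle\alpha_j,W_j,p_j,q_j\rangle\in\AAA\times 2^{[M]}\times[K{+}1]\times[K{+}1]$, and two elements carrying the same tuple are fully interchangeable in all future steps.

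Given this, the second step is to replace the list of $N$ tuples by the counting function $F_i\colon\AAA\times 2^{[M]}\times[K{+}1]\times[K{+}1]\to\N$ recording the multiplicity of each tuple, and to rephrase one outer-loop iteration as an update of $F$: guess $r_i$ and $\alpha_i$; for each source tuple $\langle\alpha,W,p,q\rangle$ in the support of $F$, first recompute $(p,q)\mapsto(p',q')$ by the algorithm's rule, then guess a partition of the $F(\alpha,W,p,q)$ corresponding elements among the admissible $2$-types $\beta\in\BBB[\alpha,\alpha_i,p',q']$ (rejecting on any $\beta$ that falsifies $\psi_0$), moving the appropriate counts to the new tuples $\langle\alpha,W\cup\{m:\beta\models\psi_m(x,y)\},p',q'\rangle$ and unioning $\{m:\beta\models\psi_m(y,x)\}$ into $W_i$; finally add one unit to $F$ at $\langle\alpha_i,W_i,1,1\rangle$. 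Acceptance is the test that, once $N$ elements have been processed, every tuple in the support of $F$ has $W$-component equal to $[M]$. Runs of this version and of Algorithm~\ref{algos} correspond by forgetting and restoring identities, so soundness and completeness are inherited from the preceding corollary.

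The third step is the space count. The domain of $F$ has cardinality $|\AAA|\cdot 2^{M}\cdot(K{+}1)^{2}$; since $|\AAA|=2^{\cO(|\phi|)}$ and $M,K=\cO(|\phi|)$, this is $2^{\cO(|\phi|)}$. Each entry is a natural number at most $N$, and by the doubly-exponential small-model bound already established for $\FOt[\succEQ]$ we may assume $N\le 2^{2^{\cO(|\phi|)}}$, so each entry fits in $2^{\cO(|\phi|)}$ bits; the remaining per-iteration scratch data ($r_i$, $\alpha_i$, $W_i$, the current source tuple, the guessed partition) is dominated by this. The whole configuration therefore fits in $2^{\cO(|\phi|)}$ bits, i.e.\ exponential space, and the nondeterminism costs nothing by Savitch's theorem, placing finite satisfiability for $\FOt[\succEQ]$ in \ExpSpace{}.

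I expect the only real friction to be the bookkeeping in the second step: making precise that the per-element choices of $\beta_{j,i}$ may be replaced by a single guessed partition of each group of equal-tuple elements among the admissible $2$-types without gaining or losing any runs, that distinct $2$-types leading to the same new tuple are merged by adding their counts, and that the $W_i$-variables are updated as the union over the $2$-types actually used. None of this is conceptually hard once the interchangeability observation of the first step is in place; the rest is the routine counting bound above.
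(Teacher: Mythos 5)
Your proposal is correct and follows essentially the same route as the paper: replace the explicit list of processed elements by the counting function $F_i\colon\AAA\times 2^{[M]}\times[K{+}1]\times[K{+}1]\to\N$, note that each element is fully described by its tuple $\langle\alpha_j,W_j,p_j,q_j\rangle$, and bound the table size by $2^{\cO(|\phi|)}$ entries of $\log N = 2^{\cO(|\phi|)}$ bits each. You merely spell out the bookkeeping (guessing a partition of each tuple-group among admissible $2$-types) that the paper dismisses as ``straightforward to adapt,'' which is a welcome addition rather than a deviation.
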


\smallskip
\noindent
{\bf Lower bound.}
The final part of the proof of Theorem~\ref{t:four}
is to show the matching \ExpSpace{}-hardness lower bound for the finite satisfiability problem for $\FOt[\succEQ]$.

Here we give a short proof sketch; full details are deferred to Appendix \ref{appendix:lower}.
We reduce from the corridor tiling problem, which is \ExpSpace{}-complete.
An instance consists of a set of colours \(\cC\), horizontal and vertical adjacency constraints, colours for the top-left and bottom-right cells (initial conditions), and a number \(n\) in unary.
The decision question is whether there exists a number \(m\) and a labelling with colours from $\cC$ of a \(2^n \times m\) grid that satisfies all constraints.

The reduction is realised by a conjunction of \FOt[\succEQ] sentences that enforce the structure to encode such a grid.
Introduce unary predicates \(B_j\) for \(0\le j<n\) together with unary predicates \(P_c\) for each colour \(c\in\cC\).
Interpretation: each domain element is a grid cell; the predicates \(B_j\) encode the horizontal coordinate in binary (yielding width \(2^n\)), and each \(E_1\)-equivalence class represents a full row.
Within a row, \(P_c\) marks the colour of the cell at the horizontal position given by the bits.
The successor relation \(\cS_1\) links consecutive \(E_1\)-classes, so rows form a vertical chain of length \(m\).
Expressing the horizontal and vertical adjacency constraints, as well as the initial conditions, in \FOt[\succEQ] is routine.

\begin{corollary}
  The finite satisfiability problem for \FOt$[\succEQ]$ is $\ExpSpace$-hard.
\end{corollary}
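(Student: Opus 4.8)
The plan is to reduce from the \emph{corridor tiling} problem, which is \ExpSpace-complete. Fix an instance: a finite colour set $\cC$, horizontal and vertical compatibility relations $H,V \subseteq \cC \times \cC$, distinguished colours $c_0,c_1 \in \cC$ for the top-left and bottom-right corners, and $n$ in unary; the question is whether some $2^n \times m$ grid ($m \geq 1$) can be tiled with colours from $\cC$ respecting $H$, $V$ and the corner conditions. I will build, in polynomial time, an \FOt$[\succEQ]$ sentence $\phi$ that is finitely satisfiable iff this instance is positive, using only the first special level ($\preceq_1$, $E_1$, $\cS_1$), the $n$ unary predicates $B_0,\dots,B_{n-1}$ encoding a column index in binary, and one unary predicate $P_c$ per colour $c \in \cC$.

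The intended finite models are precisely the grids: each element is a cell, each $E_1$-class is a row, $\preceq_1$ lists the rows in order, $\cS_1$ links $\preceq_1$-consecutive rows, the $B_j$ give the column of a cell, and exactly one $P_c$ holds at a cell. Accordingly $\phi$ is a conjunction of: (a) ``exactly one colour per cell'' (universal); (b) a universal conjunct forcing distinct cells of the same $E_1$-class to differ on some $B_j$ --- the quantifier-free formula asserting $\bigvee_{j}\big(B_j(x) \not\leftrightarrow B_j(y)\big)$ --- which gives column-uniqueness inside each row; (c) two $\forall\exists$-conjuncts saying that a cell with column $v$ has a same-row cell with column $v{+}1$ whenever $v < 2^n{-}1$ (additionally imposing $H$ on the two colours) and a same-row cell with column $v{-}1$ whenever $v > 0$ --- together with (b) these force every row (which is non-empty, being an equivalence class) to contain exactly one cell for each of the $2^n$ columns; (d) a $\forall\exists$-conjunct for the vertical constraint: a cell with column $v$ that has a $\prec_1$-successor must have an $\cS_1$-successor with the same $B$-value whose colour is $V$-compatible (by the uniqueness from (b), applied to the next row, this selects the correct cell); and (e) universal conjuncts for the corner conditions, using ``$x$ has no $\prec_1$-predecessor'' / ``$x$ has no $\prec_1$-successor'' to name the first / last row. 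The arithmetic on columns --- binary increment ``$y$ equals $x$ plus one as $n$-bit numbers'', $B$-equality, ``column $=0$'', ``column $=2^n{-}1$'' --- is expressible by quantifier-free \FOt{} formulas of size $\poly(n)$, so $|\phi| = \poly(n + |\cC|)$, and its transformation into Scott Normal Form is also polynomial.

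Correctness splits in the standard way. If the instance is tileable, reading a tiling off as above yields a finite model of $\phi$. Conversely, given any finite model $\str{A}$: since $\preceq_1$ is a total preorder on a finite set, its $E_1$-classes form a finite chain $R_1 \prec_1 \cdots \prec_1 R_m$ with $m \geq 1$; conjuncts (b) and (c) force each $R_i$ to have exactly one cell per column, so $|R_i| = 2^n$; assigning to grid position $(v,i)$ the colour of the column-$v$ cell of $R_i$ gives a tiling of $2^n \times m$ respecting $H$ by (c), respecting $V$ by (d) (finiteness ensures $R_i$ for $i<m$ has $R_{i+1}$ as its $\cS_1$-successor row), and satisfying the corner conditions by (e). Hence $\phi$ is finitely satisfiable iff the instance is positive, and since corridor tiling is \ExpSpace-complete and the reduction is polynomial, finite satisfiability for \FOt$[\succEQ]$ is \ExpSpace-hard. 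Note the argument is specific to \emph{finite} satisfiability: over an infinite chain of $E_1$-classes there is no last row, so conjunct (e) would be vacuous.

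I expect the delicate point to be conjuncts (b)--(c): one must arrange the interplay of the quantifier-free ``distinct cells of a row differ in their $B$-bits'' constraint with the increment/decrement witnesses so that it is genuinely provable that each row contains each of the $2^n$ columns exactly once --- this is what makes the extracted tiling well-defined and what forces models of the doubly-exponential size underlying \ExpSpace-hardness. A secondary subtlety is that $\cS_1(x,y)$ holds for $x$ together with \emph{every} cell of the immediately following row, so conjunct (d) must use $B$-equality to pin down the intended same-column neighbour; once row-uniqueness is established this is routine.
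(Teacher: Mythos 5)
Your reduction is correct and shares the paper's overall scheme: reduce from corridor tiling, encode the column index with $n$ bit predicates $B_j$, represent rows as $E_1$-classes, and use $\cS_1$ for row succession. The difference lies in how the grid structure inside a row is enforced. Your conjunct~(b) uses the equality symbol to state that distinct cells in the same row must differ on some $B_j$; combined with the increment and decrement witnesses in~(c) this forces exactly one cell per column per row. The paper avoids equality entirely: it tolerates duplicate column indices inside a row but adds a universal coherence conjunct making same-column cells of a row agree on their colour, and forces completeness via "every row has a column-$0$ cell" together with "cells not at column $2^n{-}1$ have a same-row successor." The payoff of that alternative encoding is that the paper's reduction lands in the constant-free, equality-free, monadic fragment with a single total preorder (stated as a proposition in the appendix), which is slightly stronger than the corollary you proved; for the corollary as stated, which allows equality, both encodings work. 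The remainder of your argument --- corner conditions via absence of $\cS_1$-predecessor/successor, the vertical constraint via a same-$B$-value $\cS_1$-neighbour (correctly pinning down the unique same-column cell in the next row once uniqueness is in place), and the remark that the argument is specific to finite satisfiability --- matches the paper's.
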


\section{Proof of Theorem~\ref{t:und}}\label{sec:undecidability}

In this section, we establish our undecidability result.

\undecTheorem*

To the end of this section, we prove Theorem~\ref{t:und}.

The idea demonstrated here is similar to Pratt-Hartmann's proof of the undecidability of $\FOt$ with counting and two equivalence relations~\cite{PH14}.
Hovewer, in the absence of the equality symbol, additional technical refinements are required.

The undecidability is proven via a reduction from the halting problem for \emph{two-counter machines}. Such a machine $\mathbf{M}$ consists of a finite set $S=\{s_0,\ldots, s_{F}\}$ of states, one being initial (say $s_0$) and one being final (say $s_F$); two counters $c_1$ and $c_2$ holding non-negative integers; and a set of transitions $\delta\subseteq S\times Op\times S$ between states and operations over the counters; these operations are: increment a counter, decrement a counter (if it is non-zero), or test if a counter is zero. A \emph{configuration} is a triple $\langle s,c_1,c_2 \rangle$ holding the state $s$ and the values of the two counters $c_1$ and $c_2$. A \emph{run} is a (potentially infinite) sequence of configurations which starts with the initial state $s_0$ and respects the transition function. The machine \emph{terminates} if it ever reaches the final state $s_F$. As two-counter machines can simulate Turing machines, the halting problem for two-counter machines is undecidable (more precisely, it is r.e.-complete).

We encode a run of some two-counter machine $\mathbf{M}$.
Each configuration will be contained within an equivalence class of the intersection relation $E_2 \cap F_2$. We write $G_2(x,y)$ for the formula $E_2(x,y)\wedge F_2(x,y)$. We partition the domain with two unary predicates $d_E$ and $d_F$, and axiomatise that they are universal in each $G_2$-class:
	\begin{align}
		&\forall x.~d_E(x) \leftrightarrow \neg d_F(x)\\	
		&\forall x,y.~G_2(x,y)\rightarrow \big(d_E(x)\leftrightarrow d_E(y)\big)
	\end{align}

    If a configuration has its elements satisfying $d_E$ (resp. $d_F$), we call it a \emph{$d_E$-configuration} (resp. \emph{$d_F$-configuration}). We specify that each $E_2$- and $F_2$-class has at most one configuration of each type:
	\begin{align}
      &\bigwedge_{\alpha \in \{E,F\}} \forall x,y.~\big(E_2(x,y){\wedge}d_\alpha(x){\wedge}d_\alpha(y)\big){\rightarrow} F_2(x,y)\\
      &\bigwedge_{\alpha \in \{E,F\}} \forall x,y.~\big(F_2(x,y){\wedge}d_\alpha(x){\wedge}d_\alpha(y)\big){\rightarrow} E_2(x,y)
	\end{align}
	
    We define now that each configuration is in precisely one state $s$ from the set of states $S$ of $\mathbf{M}$:
	\begin{align}
		&\forall x.~\bigvee_{s\in S}\forall y.~G_2(x,y)\rightarrow \big(s(y)\wedge\bigwedge_{s' \in S\colon s' \neq s}\neg s'(y)\big)
	\end{align}

\begin{figure}
	\begin{center}
		\begin{tikzpicture}[scale = 0.4]
			
			%\draw (0,0) -- (0,2) -- (2,2) -- (2,0) -- (0,0);
			
			\draw (0,0) -- (0,5) -- (5,5) -- (5,0) -- (0,0);
			
			\draw (-3,-1) -- (-3,4) -- (2,4) -- (2,-1) -- (-3,-1);
			
			\draw (3,-1) -- (3,4) -- (8,4) -- (8,-1) -- (3,-1);
			
			\draw (6,0) -- (6,5) -- (11,5) -- (11,0) -- (6,0);
			
			\draw (-6,0) -- (-6,5) -- (-1,5) -- (-1,0) -- (-6,0);
			
			\node at (-0.5, -1.5) {$F_2$};
			\node at (5.5, -1.5) {$F_2$};
			
			\node at (-3.5, 5.5) {$E_2$};
			\node at (2.5, 5.5) {$E_2$};
			\node at (8.5, 5.5) {$E_2$};
			
			\node at (-2, 3) {$d_F$};
			\node at (1, 3) {$d_E$};
			\node at (4, 3) {$d_F$};
			\node at (7, 3) {$d_E$};
			
			 \node[single arrow, draw=black, fill=white, 
			 minimum width = 10pt, single arrow head extend=3pt,
			 minimum height=10mm] at (-0.5, 1.5) {};
			 \node at (-0.5, 2.2) {$t$};
			 
			 \node[single arrow, draw=black, fill=white, 
			 minimum width = 10pt, single arrow head extend=3pt,
			 minimum height=10mm] at (2.5, 1.5) {};
			 \node at (2.5, 2.2) {$t$};
			 
			 \node[single arrow, draw=black, fill=white, 
			 minimum width = 10pt, single arrow head extend=3pt,
			 minimum height=10mm] at (5.5, 1.5) {};
			 \node at (5.5, 2.2) {$t$};
			
		\end{tikzpicture}
		\caption{A succession of configurations.}
		\label{fig:t_between_configs}
	\end{center}
\end{figure}
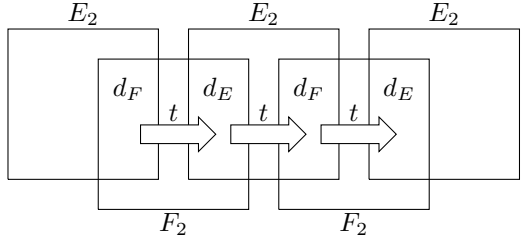

	We order configurations with a definable binary relation $t$: among an $E_2$-class (resp. $F_2$-class), it links elements from the $d_E$-configuration to the successive $d_F$-one (resp. from the $d_F$-one to the successive $d_E$-one), see Figure~\ref{fig:t_between_configs}:
	\begin{align*}
      &t(x,y) \eqdef \big(E_2(x,y)\wedge d_E(x)\wedge d_F(y)\big) \\
      &~~~~~~~~~~~~~\nonumber \vee\big(F_2(x,y)\wedge d_F(x)\wedge d_E(y)\big)
	\end{align*}

    We have discussed so far how to axiomatise the succession of configurations, we explain now how to implement the counters with the help of the two thinner equivalence relations $E_1$ and $F_1$.
    We write $G_1(x,y)$ for the formula $E_1(x,y) \vee F_1(x,y)$, defining the union of $E_1$ and $F_1$; a priori not an equivalence relation.
    We require first that inside any $G_2$-class, $E_1$ and $F_1$ define the same relation:
    \begin{align}
      \forall x,y.~\big(G_2(x,y){\wedge}G_1(x,y)\big) \rightarrow \big(E_1(x,y){\wedge}F_1(x,y)\big)
	\end{align}

    We introduce two unary predicates $c_1$ and $c_2$. Their intended meaning is as follows:
    If $\cE$ is a $G_2$-class (representing a configuration), then, inside it, the value of the counter $c_i$ is the number of equivalence classes of the relation $E_1 \cap \cE^2$ whose elements satisfy $c_i$. In particular, if no element of $\cE$ satisfies $c_i$, then the value of the counter $c_i$ equals zero.
	We permit elements to satisfy only a state predicate, without being marked by $c_1$ or $c_2$.  
	This ensures that configurations where both counters are zero can still be represented.

    We require first that $c_1$ and $c_2$ are mutually exclusive:
    \begin{align}
      \forall x.~\neg c_1(x) \vee \neg c_2(x)
    \end{align}
    We also state that $c_1$ and $c_2$ propagate within $G_1$-connected components (and hence they do not interact via $E_1$ or $F_1$):
	\begin{align}
      \bigwedge_{i=1}^{2}\forall x,y.~\big(G_1(x,y) \wedge c_i(x)\big) \rightarrow c_i(y)
	\end{align}

    Then, the change of the counters is held as follows.
    In a $d_E$-configuration, $E_1$ may extend to the successive $d_F$-configuration and $F_1$ may extend to the preceding $d_F$-configuration.
    And, in a $d_F$-configuration, the situation is symmetric.
    This makes it possible to transfer the values of counters to the neighbouring configurations.

    We define the formula $\varphi_{\exists\rightarrow}(x)$ stating that the element $x$ admits a $t$-successor linked via $G_1$: \[\varphi_{\exists\rightarrow}(x)\eqdef\exists y.~t(x,y)\wedge G_1(x,y).\]
Notice that if two elements of two distinct configurations are linked via $t$, then they are in particular linked via precisely one of $E_2$ and $F_2$, which does not leave any ambiguity whether they are linked via $E_1$ or $F_1$.

	We also define the symmetric formula $\varphi_{\exists\leftarrow}(x)$ stating that $x$ admits a $t$-predecessor linked via $G_1$.
	
	We define now, for each $i\in\{1,2\}$, a formula $\varphi_{\forall i\exists\rightarrow}(x)$ stating that every element $y$ of the $G_2$-class of $x$ which satisfies $c_i$ admits a $t$-successor via $G_1$: \[\varphi_{\forall i\exists\rightarrow}(x)\eqdef\forall y.~\big(G_2(x,y)\wedge c_i(y)\big)\rightarrow\varphi_{\exists\rightarrow}(y).\]
	
    Because $E_1$ and $F_1$ are equivalence relations, we can see that if $\varphi_{\forall i\exists\rightarrow}(x)$ is true, then the configuration succeeding that of $x$ has a value of $c_i$ greater than or equal to the value of $c_i$ in the configuration of $x$. In order to make sure that it cannot increase by more than one, we add a condition stating that each two elements of the same configuration that satisfy $c_i$ and have no predecessor via $G_1$ are necessarily related via $E_1$, and hence at most one equivalence relation is added:
	\begin{align}
      \bigwedge_{i=1}^{2}\forall x,y.~&\big(G_2(x,y) \wedge c_i(x) \wedge c_i(y) \nonumber\\
		&\wedge\neg\phi_{\exists\leftarrow}(x) \wedge \neg\phi_{\exists\leftarrow}(y)\big) 
	~\rightarrow E_1(x,y)
	\end{align}

	Symmetrically, we define $\varphi_{\forall i\exists\leftarrow}(x)$ stating that every element $y$ of the $G_2$-class of $x$ which satisfies $c_i$ admits a $t$-predecessor via $G_1$, and we also define the symmetric condition making sure that each counter decreases by at most one at every step:
	\begin{align}
      \bigwedge_{i=1}^{2}\forall x,y.~&\big(G_2(x,y) \wedge c_i(x) \wedge c_i(y) \nonumber\\
		&\wedge\neg\phi_{\exists\rightarrow}(x) \wedge \neg\phi_{\exists\rightarrow}(y)\big) 
		~\rightarrow E_1(x,y)
	\end{align}

    We define below four auxiliary formulas $\varphi_{i,\textrm{zero}}(x)$, $\varphi_{i,\textrm{eq}}(x)$, $\varphi_{i,\textrm{incr}}(x)$ and $\varphi_{i,\textrm{decr}}(x)$.

	First, the formula $\varphi_{i,\textrm{zero}}(x)$ states that no element of the $G_2$-class of $x$ satisfy $c_i$, and hence the counter $c_i$ is zero:
    \[\varphi_{i,\textrm{zero}}(x)\eqdef\forall y.~G_2(x,y)\rightarrow\neg c_i(y).\]

    Then, the formula $\varphi_{i,\textrm{eq}}(x)$ states that the counter $c_i$ stays the same from the configuration of $x$ to the next one:
    \[\varphi_{i,\textrm{eq}}(x)\eqdef \varphi_{\forall i\exists\rightarrow}(x)\wedge\forall y.~t(x,y)\rightarrow \varphi_{\forall i\exists\leftarrow}(y).\]

	Finally, the formulas $\varphi_{i,\textrm{incr}}(x)$ and $\varphi_{i,\textrm{decr}}(x)$ state that the counter $c_i$, respectively, increments and decrements from the configuration of $x$ to the successive one:
    \begin{align*}
      \varphi_{i,\textrm{incr}}(x)\eqdef \varphi_{\forall i\exists\rightarrow}(x)\wedge\forall y.~t(x,y)\rightarrow \neg \varphi_{\forall i\exists\leftarrow}(y) \\
      \varphi_{i,\textrm{decr}}(x)\eqdef \neg \varphi_{\forall i\exists\rightarrow}(x)\wedge\forall y.~t(x,y)\rightarrow \varphi_{\forall i\exists\leftarrow}(y)
    \end{align*}

    Having the above formulas, as well as the unary state predicates, we can easily axiomatise all the transitions of the two-counter machine $\mathbf{M}$ together with its initial and final configurations; we leave details to the reader.

    Let $\Theta_{\mathbf{M}}$ be a sentence expressing the existence of a run of $\mathbf{M}$ that reaches the final state $s_F$. It is immediate to see that this sentence $\Theta_{\mathbf{M}}$ is finitely satisfiable if and only if ${\mathbf{M}}$ terminates. This allows us to conclude the undecidability of finite satisfiability.
    For general (infinite) satisfiability, we consider a sentence $\Theta'_{\mathbf{M}}$ expressing the existence of an infinite run of $\mathbf{M}$ (i.e. never reaching the final state $s_F$). Hence, the general satisfiability problem is also undecidable.

\section{Conclusion}\label{s:conclusion}

We analysed four extensions of \FOt{} that enable decidable reasoning over nested equivalence relations: \FOt$[\EQ]$, \FOt$[<,\EQ]$, \FOt$[\precEQ]$, and \FOt$[\succEQ]$, establishing the precise complexity of the finite satisfiability problems and bounds on the size of minimal finite models.
Except \FOt$[\EQ]$ that enjoys the finite model property, understanding general (infinite) satisfiability remains open.
Finally, we demonstrated that extending \FOt{} with two independent families of nested equivalence relations leads to undecidability.

\section*{Acknowledgments}
The first and second authors were supported by the Polish National Science Center grant No.~2021/41/B/ST6/00996. The third author was supported by the ERC grant INFSYS, agreement No.~950398.
We thank the first anonymous reviewer for his helpful feedback.

\bibliographystyle{kr}
\bibliography{mybib}  

\appendix

\section{Additional Details on Theorems~\ref{t:three} and~\ref{t:one}}\label{s:BC}

In Section~\ref{s:A}, we established Theorem~\ref{t:two}, and in Section~\ref{s:sketch} we outlined the modifications required to adapt its proof to Theorems~\ref{t:three} and~\ref{t:one}.  
We provide now additional details on these modifications, supplementing the earlier sketch while avoiding a full repetition of the original argument.

Let us briefly recall the general strategy used in the proof of Theorem~\ref{t:two}.
Given a finitely satisfiable sentence~$\phi$, our goal is to show that $\phi$ admits a model of exponentially bounded size.
Let $\str{A} \models \phi$.
The central ingredient is Lemma~\ref{l:replacement}, which allows us to take a single equivalence class $\cC$ of~$\str{A}$ and replace it by a smaller subset $\cD \subseteq \cC$ so that the resulting structure with domain $(A \setminus \cC) \cup \cD$ is still a model of~$\phi$.
Applying Lemma~\ref{l:replacement} successively to each $E_k$-class, we shrink them one by one; a straightforward size analysis then yields a finite model of $\phi$ whose size is exponentially bounded.
In what follows, we explain how this method can be adapted to obtain Theorems~\ref{t:three} and~\ref{t:one}.

\smallskip
\noindent
\emph{Theorem~\ref{t:three}.}
We prove now Theorem~\ref{t:three}, stating that finitely satisfiable sentences of \FOt$[\precEQ]$ admit models of exponential size.
We proceed via a reduction to \FOt$[<,\EQ]$ equipped with an additional semantic constraint.

We observe that \FOt$[\precEQ]$-models correspond precisely to those \FOt$[<,\EQ]$-models that satisfy the following \emph{interval property}: for every $k \in \N$ and all $a,b,c \in A$ with $a <^{\str{A}} b <^{\str{A}} c$ and $(a,c) \in E^{\str{A}}_k$, it must hold that $b$ is $E^{\str{A}}_k$-related to both $a$ and $c$. Equivalently, each $E_k$-class forms an interval with respect to $<$.

The correspondence is explicit: given a structure $\str{A}$ interpreting $\preceq_1,\preceq_2,\dots$, choose any linear order $<^{\str{A}}$ consistent with $\preceq_1^{\str{A}}$ (i.e., $a \prec_1^{\str{A}} b \rightarrow a <^{\str{A}} b$).  
For each $k$, define $E^{\str{A}}_k$ as the equivalence relation induced by $\preceq_k^{\str{A}}$ (namely, $a E_k b$ iff both $a \preceq_k b$ and $b \preceq_k a$). The resulting expansion of $\str{A}$ satisfies the interval property by construction.  

In the class of models with the interval property, the total preorder symbol \(\preceq_k\) is axiomatisable in \FOt$[<,\EQ]$ using the formula \(x{\le}y \vee x E_k y\). We can thus rewrite any formula in \FOt$[\precEQ]$ to equivalent one in \FOt$[<,\EQ]$.

Conversely, any \FOt$[<,\EQ]$-structure with the interval property naturally induces a family of nested preorders, yielding an \FOt$[\precEQ]$-model.

To complete the reduction, it remains to ensure that the interval property
is preserved by the model transformation in Lemma~\ref{l:replacement}.
This is captured in the following claim.

\begin{claim}\label{cl:interval-preservation}
Let $\str{A}$ be an \FOt$[<,\EQ]$-structure satisfying the interval property, and let $\str{B}$ be obtained from $\str{A}$ via the construction in Lemma~\ref{l:replacement}.
Then $\str{B}$ also satisfies the interval property.
\end{claim}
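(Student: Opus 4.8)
The plan is to trace through the three-step construction of $\str{B}$ from Lemma~\ref{l:replacement} and check that the interval property is never violated. Recall that $\str{B}$ is built from $\str{A} \restr B$ where $B = (A \setminus \cC) \cup \cD$, with $\cD$ a union of $E_k$-classes contained in the chosen $E_{k+1}$-class $\cC$, and that the only modifications made are to certain $2$-types between elements $a_{n+1} \in B \setminus (W_{\RN{1}} \cup W_{\RN{2}})$ and elements of $W_{\RN{1}} \setminus \absclass{\str{A}}{k}{a_{n+1}}$. Crucially, by item~\ref{smallclaimfo2item2} of Claim~\ref{l:smallclaimfo2}, the linear order $<^{\str{B}}$ and \emph{all} equivalence relations $E^{\str{B}}_j$ are inherited directly from $\str{A} \restr B$; the $2$-type redefinitions only touch non-special binary symbols. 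So the interval property of $\str{B}$ is entirely a statement about $\str{A} \restr B$, and the redefined $2$-types are irrelevant to it.

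First I would reduce the claim to the following: if $\str{A}$ satisfies the interval property, then so does any restriction of $\str{A}$ to a subdomain $B$ that, for every $j$, is a union of $E_j$-classes of $\str{A}$ — or more precisely, a subdomain such that each $E^{\str{A}}_j$-class either lies entirely inside $B$ or entirely outside. This is exactly the shape of our $B$: outside of $\cC$ nothing is removed, and inside $\cC$ we keep a union of full $E_k$-classes $\cD$, hence a union of full $E_j$-classes for every $j \le k$; for $j > k$ the class $\cC$ itself is a union of $E_{k+1}, \dots, E_{K+1}$ classes and we either keep all of an $E_j$-class (if $j>k$ and it is disjoint from $\cC$, or it equals a superset containing $\cC\cap B$) — in any case, the relevant point is that $B$ is ``$E_j$-saturated modulo the part we removed''. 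I would phrase this cleanly: for each $j$, and each $E^{\str{B}}_j$-class $\cE$ (which equals $\cE' \cap B$ for some $E^{\str{A}}_j$-class $\cE'$), the removed part $\cE' \setminus B$ is either empty or all of $\cC\cap\cE'$ with $\cC\subseteq\cE'$.

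Then the verification is routine: take $a <^{\str{B}} b <^{\str{B}} c$ in $B$ with $(a,c) \in E^{\str{B}}_j$. Since $<^{\str{B}}$ and $E^{\str{B}}_j$ are inherited, we have $a <^{\str{A}} b <^{\str{A}} c$ and $(a,c) \in E^{\str{A}}_j$, so by the interval property of $\str{A}$, $b$ is $E^{\str{A}}_j$-related to both $a$ and $c$; since $a,b,c \in B$ this relation survives in $\str{B}$. That is literally all there is to it — the interval property is preserved under passing to any subdomain whatsoever, because it is a universally quantified (in fact $\Pi_1$) property over the inherited linear order and equivalences, and those are genuinely inherited. So the only thing to guard against is that $\str{B}$ might define $<$ or $E_j$ differently from $\str{A} \restr B$; but Claim~\ref{l:smallclaimfo2}\ref{smallclaimfo2item2} rules this out. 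The main (and really only) obstacle is thus bookkeeping: making sure one correctly invokes item~\ref{smallclaimfo2item2} of Claim~\ref{l:smallclaimfo2} for the final structure $\str{B} = \str{B}_N$ rather than for an intermediate $\str{B}_n$, and observing that the interval property, being downward-closed under substructures with inherited order and equivalences, needs no hypothesis about which elements were deleted. I expect the proof to be three or four lines.

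\begin{proof}[Proof of Claim~\ref{cl:interval-preservation}]
By construction, $\str{B}$ has domain $B = (A \setminus \cC) \cup \cD$ with $\cD \subseteq \cC$, and by item~\ref{smallclaimfo2item2} of Claim~\ref{l:smallclaimfo2} (applied to $\str{B} = \str{B}_N$) the linear order $<^{\str{B}}$ and every equivalence relation $E^{\str{B}}_j$ coincide with those of $\str{A} \restr B$. Now let $a,b,c \in B$ with $a <^{\str{B}} b <^{\str{B}} c$ and $(a,c) \in E^{\str{B}}_j$ for some $j$. Then $a <^{\str{A}} b <^{\str{A}} c$ and $(a,c) \in E^{\str{A}}_j$, so the interval property of $\str{A}$ yields $(a,b) \in E^{\str{A}}_j$ and $(b,c) \in E^{\str{A}}_j$. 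Since $a,b,c \in B$, these pairs also witness $(a,b) \in E^{\str{B}}_j$ and $(b,c) \in E^{\str{B}}_j$. Hence $\str{B}$ satisfies the interval property.
\end{proof}
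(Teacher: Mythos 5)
Your proof is correct and follows exactly the paper's argument: the interval property is a universal statement over $<$ and the $E_j$, hence preserved under passing to any substructure, and Item~\ref{smallclaimfo2item2} of Claim~\ref{l:smallclaimfo2} guarantees that $<^{\str{B}}$ and all $E_j^{\str{B}}$ are inherited from $\str{A}\restr B$. (The detour in your plan about $B$ being a union of full $E_j$-classes is unnecessary, as you yourself note before the final proof.)
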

\begin{proof}
In the construction of Lemma~\ref{l:replacement}, the new domain is 
$B = (A \setminus \cC) \cup \cD$, where $\cD \subseteq \cC$ consists of whole $E_k$-classes.
Since the interval property is preserved under taking substructures, the reduct $\str{A} \restr B$ inherits it from $\str{A}$.
Moreover, by Item~\ref{smallclaimfo2item2} of Claim~\ref{l:smallclaimfo2}, 
the interpretations of $<$ and of all $E_k$ in $\str{B}$ coincide with those in $\str{A} \restr B$.
Hence $\str{B}$ also has the interval property.
\end{proof}

This shows that the small model construction of Theorem~\ref{t:two} preserves the intended subclass of models with the interval property, thereby completing the proof of Theorem~\ref{t:three}.

\smallskip
\noindent
\emph{Theorem~\ref{t:one}.}
We now establish Theorem~\ref{t:one}, asserting that $\FOt[\EQ]$ enjoys the exponential model property: every satisfiable sentence admits a finite model of size bounded exponentially in the length of the sentence.
Our proof follows again the strategy used for $\FOt[<,\EQ]$ in Lemma~\ref{l:replacement} and Theorem~\ref{t:two}, with the key distinction that we no longer assume the initial model $\str{A}$ to be finite.

In Lemma~\ref{l:replacement}, finiteness of $\str{A}$ was used only in defining $r$-extremal subsets.
Recall that for a finite set $S$, ${\rm extr}_r(S)$ consists of the $r$ minimal and $r$ maximal elements of $S$ with respect to the linear order $<$, or is equal to $S$ if $|S|<2r$.
Without~$<$, we redefine ${\rm extr}_r(S)$ as $S$ if $|S|<2r$, and as any subset of $S$ of size $2r$ otherwise.
This modification removes any dependence on finiteness or ordering.

The remainder of the construction from Lemma~\ref{l:replacement} carries over almost verbatim to $\FOt[\EQ]$.
We proceed as in the original proof by defining the set $W_{\RN{1}}$, and analogously $W_{\RN{2}}$ and $W_{\RN{3}}$.
The replacement set $\cD$ for $\cC$ is then defined, as before, as the $E_k$-closure of $W_{\RN{1}} \cap W_{\RN{2}} \cap W_{\RN{3}}$ restricted to~$\cC$.
Claims~\ref{l:configurations} and~\ref{l:sizeanalysis} remain valid, with the only adjustment that in Item~\ref{configurationitems2} of the definition of $a$-configurations we now consider equality alone, omitting any reference to orders.

The only apparent obstacle is in the final part of the proof. As starting from an infinite sequence $a_1,a_2,\dots$ of elements of \(B \setminus (W_{\RN{1}} \cup W_{\RN{2}})\), constructing the sequence of structures
\(\str{B}_0, \str{B}_1, \str{B}_2, \dots\)
could in principle require infinitely many steps.
Recall that, for every $n\in\N$, $\str{B}_{n+1}$ is obtained from $\str{B}_n$ by modifying some $2$-types between $a_{n+1}$ and elements $b \in W_{\RN{1}} \setminus \absclass{\str{A}}{k}{a_{n+1}}$.
However, for any fixed $b$, the $2$-type $\tp^{\str{A}}[a_{n+1},b]$ is altered only at the step of the construction corresponding to $a_{n+1}$, and no later step revisits it.
Thus we may define the final structure $\str{B}$ as the natural limit structure of that sequence $\str{B}_0,\str{B}_1,\dots$: $\tp^{\str{B}}[a_{n+1},b]$ is set to the \emph{ultimate} $2$-type for that pair.
We conclude that Claim~\ref{l:smallclaimfo2} also remains valid, and therefore Lemma~\ref{l:replacement}.

Since the size bound analysis from the proof of Theorem~\ref{t:two} still applies, the small model is finite and of exponentially bounded size, completing the proof of Theorem~\ref{t:one}.

\section{Proof of Lemma~\ref{l:pumping}}\label{appendix:D}

In this appendix, we prove Lemma~\ref{l:pumping}.
The proof builds on the approach used in Lemma~\ref{l:replacement}, but its technical details are considerably more involved.

\pumpinglemma*

For this whole appendix, we fix a finitely satisfiable sentence $\phi$ of $\FOt[\succEQ]$.
As explained in Subsection~\ref{s:snf}, we may assume that $\phi$ is in Scott Normal Form, i.e.,
in the shape of \(\forall x,y.~\psi_0(x,y) \wedge \bigwedge_{m=1}^{M}{\forall x.~\exists y.~\psi_m(x,y)} \),
where $M$ denotes the number of Skolem conjuncts.
By convention, $\phi$ uses the first $K$ nested total preorder symbols $\preceq_1,\dots,\preceq_K$ and the corresponding derived symbols $\cS_k$ and $E_k$,
where $\cS_k$ is the induced successor relation of $\preceq_k$, and $E_k$ is the equivalence relation induced by $\preceq_k$-equivalent elements.
Interpretations of $E_1,\dots,E_K$ are therefore naturally nested.
We also include two auxiliary symbols $E_0$ and $E_{K+1}$ interpreted as the identity and universal relations, respectively.
Let $\AAA$ denote the set of all $1$-types over the signature of $\phi$.

Suppose that $\str{A}$ is a finite model of $\phi$, $k$ is a number between $0$ and $K$, and $\cC$ is an $E_{k+1}$-class from $A/E^{\str{A}}_{k+1}$.

We assume w.l.o.g.~that the domain $A$ is a finite subset of~$\N$ and that the total preorders $\preceq_k^{\str{A}}$ are consistent with the natural order $\le$ on $\N$ (i.e., $a \prec_k b \rightarrow a < b$).

We fix \emph{witness functions} $\ff_m\colon A \rightarrow A$, i.e., functions such that for each $m \in [M]$ and every $a \in A$, $\str{A}\models \psi_m(a,\ff_m(a))$.

Let us recall witness-related terminology of Section~\ref{s:A}:
if $m \in [M]$ and $a,b$ are elements of $A$, then we call $b$ a \emph{witness} of $a$ whenever $\str{A} \models \psi_m(a,b)$.
In addition, $b$ is an \emph{internal} witness if $(a,b) \in E^{\str{A}}_k$; an \emph{external} witness if $b \in A \setminus \cC$ and $(a,b) \not\in E^{\str{A}}_k$; and a \emph{$\cC$-witness} if $b \in \cC$ and $(a,b) \not \in E^{\str{A}}_k$.

We need also to introduce some new notions, addressing the presence of the induced successors $\cS_1,\cS_2,\dots$.

A \emph{$\cC$-neighbourhood} of an element $a \in A$ is the set
\[ \cN_{\cC}(a) := \big\{ b \in \cC ~\big|~ (a,b) \in E^{\str{A}}_k \cup \cS^{\str{A}}_k \text{ or } (b,a) \in \cS^{\str{A}}_k \big\}. \]
Let $b \in A$ be a $\cC$-witness of $a$. Then $b$ is a \emph{local} witness of $a$ if $b \in \cN_{\cC}(a)$; and otherwise, $b$ is a \emph{remote} witness.

We adopt the notion of \emph{configurations} introduced in Section~\ref{s:A}.
Let $\ell \in [M]$ and let $a \in A$. Assume that $\langle b_1, \dots, b_\ell \rangle$ and $\langle c_1, \dots, c_\ell \rangle$ are tuples of pairwise distinct elements of $A$.
We say that they realise the same \emph{$a$-configuration} if the following conditions hold:
\begin{enumerate}[label = (\roman*)] \itemsep0pt
  \item For every $i \in [\ell]$, $\type{\str{A}}{b_i} = \type{\str{A}}{c_i}$.
  \item For every $i \in [\ell]$, $a \bowtie b_i$ if and only if $a \bowtie c_i$, where $\bowtie$ is any of the symbols ``$<$'', ``$=$'' or ``$>$''.
  \item For every $i \in [\ell]$ and every $j \in [K]$, $(a,b_i)\in R^{\str{A}}_{j}$ if and only if $(a,c_i)\in R^{\str{A}}_{j}$, where $R_j$ is any of the symbols ``$E_j$'', ``$\cS_j$'' or ``$\cS^{-1}_j$''. Here, the last symbol $\cS^{-1}_j$ is the \emph{inverse} of $\cS_j$, i.e. the induced predecessor of $\preceq_j$.
\end{enumerate}

Finally, we will also use the notion of the \emph{$r$-extremal} subset of a finite subset $S\subseteq\N$, of $\ell$ elements $a_1< a_2<\ldots< a_{\ell}$: if $\ell\leq2\cdot r$, then this subset, denoted by ${\rm extr}_r(S)$, is $S$ itself; if $\ell> 2\cdot r$, then it is $\{ a_1,\dots,a_{r},a_{\ell-r+1},\dots,a_\ell \}\subset S$.

\smallskip
\noindent
{\bf Securing witnesses.}
As in the proof of Lemma~\ref{l:replacement}, we aim to select three small sets $W_{\RN{1}}, W_{\RN{2}}, W_{\RN{3}}$ providing witnesses for each other in a circular fashion.
However, the difference is that we now treat 
local and remote witnesses separately. 
For remote witnesses, we work with three \emph{global} sets, while for local witnesses 
we will select three sets \emph{within each} adjacent pair of $E_k$-classes inside~$\cC$.

\smallskip\noindent
\emph{Remote witnesses.}
We define 
\[
W^{\mathrm{rem}}_{\RN{1}} \ \eqdef\ \bigcup_{\alpha \in \AAA} \mathrm{extr}_{M}(\{ a \in \cC \mid \type{\str{A}}{a} = \alpha \}), 
\]
then
\[
W^{\mathrm{rem}}_{\RN{2}} \ \eqdef\ \bigcup_{m \in [M]} \ff_m\!\left[ W^{\mathrm{rem}}_{\RN{1}} \right],
\]
and finally
\[
W^{\mathrm{rem}}_{\RN{3}} \ \eqdef\ \bigcup_{m \in [M]} \ff_m\!\left[ W^{\mathrm{rem}}_{\RN{2}} \right].
\]
Set
\[
W^{\mathrm{rem}} \ \eqdef\ \left( W^{\mathrm{rem}}_{\RN{1}} \ \cup\ 
W^{\mathrm{rem}}_{\RN{2}} \ \cup\ W^{\mathrm{rem}}_{\RN{3}} \right) \cap \cC.
\]

In the next two claims we complete the analysis for remote witnesses: 
Claim~\ref{l:configurations2} shows that any remote $\cC$-witness configuration can be realised 
within $W^{\mathrm{rem}}_{\RN{1}}$, while Claim~\ref{l:siemaclaim1} bounds the total size of $W^{\mathrm{rem}}$.

\begin{claim}\label{l:configurations2}
Let $\ell \in [M]$ and let $a \in A \setminus W^{\mathrm{rem}}_{\RN{1}}$. 
Suppose that $b_1,\dots,b_\ell$ are distinct elements of 
$\cC \setminus \cN_{\cC}(a)$. 
Then there exist distinct elements $c_1,\dots,c_\ell$ in $W^{\mathrm{rem}}_{\RN{1}}$ 
such that the tuples $\langle b_1,\dots,b_\ell \rangle$ 
and $\langle c_1,\dots,c_\ell \rangle$ realise the same $a$-configuration.
\end{claim}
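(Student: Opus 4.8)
The plan is to mimic the proof of Claim~\ref{l:configurations} from Section~\ref{s:A}, adapted to the richer signature with successors. First I would reduce to a single $1$-type: partition the indices $[\ell]$ according to the $1$-type $\type{\str{A}}{b_i}$, and treat each block independently, since conditions (i)--(iii) of the $a$-configuration only ever compare $b_i$ with $c_i$ (never $b_i$ with $b_j$), so tuples assembled blockwise glue into a global solution. Hence assume all $b_i$ realise the same $1$-type $\alpha\in\AAA$.

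Next I would further reduce by the way each $b_i$ sits relative to $a$ with respect to $<$: since $\le^{\str{A}}$ is consistent with $\preceq_1$ and the $b_i$ are pairwise distinct, each $b_i$ satisfies exactly one of $b_i < a$, $b_i = a$, $b_i > a$ (and the case $b_i=a$ can be handled separately, or ruled out because $b_i\notin\cN_\cC(a)\ni a$ if $a\in\cC$ — in any event at most one such index). So assume, w.l.o.g., $b_i < a$ for all $i$. The crucial point that makes the $\cC$-neighbourhood the right notion: if $b\in\cC$ and $b\notin\cN_\cC(a)$, then $b$ is neither $E_k$-, $\cS_k$-, nor $\cS_k^{-1}$-related to $a$ for the relevant $k$; moreover, because the preorders are \emph{nested} and $\cC$ is a single $E_{k+1}$-class, any two elements of $\cC\setminus\cN_\cC(a)$ lying on the same side of $a$ have identical relationships to $a$ under every $R_j\in\{E_j,\cS_j,\cS_j^{-1}\}$ for all $j$. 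Indeed for $j\le k$ all three are false (that is exactly what $b\notin\cN_\cC(a)$ encodes, on that side), and for $j>k$ we have $E_{k+1}\subseteq E_j$ so $a E_j b$ holds and the successor relations are false. Thus I would set $S\eqdef\{\,c\in\cC \mid c<a,\ \type{\str{A}}{c}=\alpha\,\}$ and observe that $\{b_1,\dots,b_\ell\}\subseteq S\setminus\cN_\cC(a)$, and that \emph{any} $\ell$-tuple of distinct elements of $S\setminus\cN_\cC(a)$ realises the same $a$-configuration as $\langle b_1,\dots,b_\ell\rangle$.

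It then suffices to find $\ell\le M$ distinct elements of $S\setminus\cN_\cC(a)$ inside $W^{\mathrm{rem}}_{\RN{1}}$. I would take the $\ell$ smallest elements $c_1<\dots<c_\ell$ of $S\setminus\cN_\cC(a)$. The set $\cN_\cC(a)\cap S$ is small: it contains at most one full $E_k$-class ($\absclass{\str{A}}{k}{a}$, if $a\in\cC$) plus the two $\cS_k$-adjacent $E_k$-classes — but for the purpose of counting ``bad'' elements below a given threshold that realise $\alpha$, what matters is that only boundedly many of the smallest $\alpha$-realisations in $\cC$ can lie in $\cN_\cC(a)$. More carefully: the $c_i$ together with the bad elements of $\cN_\cC(a)\cap S$ that are $\le c_\ell$ and realise $\alpha$ form the initial segment of $\{a'\in\cC \mid a'<a, \type{\str{A}}{a'}=\alpha\}$; since $\cN_\cC(a)$ meets at most $\cO(1)$ many $E_k$-classes and $W^{\mathrm{rem}}_{\RN{1}}$ already collects the $M$ smallest $\alpha$-realisations of $\cC$, the $c_i$ land in $\mathrm{extr}_M(\{a\in\cC\mid\type{\str{A}}{a}=\alpha\})\subseteq W^{\mathrm{rem}}_{\RN{1}}$. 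The arithmetic here is exactly the ``$2M$ smallest, at most $M$ bad'' bookkeeping of Claim~\ref{l:configurations}, which is why $W^{\mathrm{rem}}_{\RN{1}}$ uses $\mathrm{extr}_M$ rather than $\mathrm{extr}_{2M}$ — each index $j\in[M]$ of a Skolem conjunct suggests at most one local witness sitting between $a$ and a given remote candidate, so $M$ slack suffices.

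The main obstacle I anticipate is pinning down precisely the constant $c$ such that ``$\cN_\cC(a)$ meets $\le c$ many $E_k$-classes'' and verifying that $\mathrm{extr}_M$ (not a larger extremal cut) really is enough once the $1$-type split, the side-of-$a$ split, and the successor-neighbourhood are all combined — i.e., checking that no remote candidate is accidentally excluded because it was displaced by neighbourhood elements. A secondary subtlety is the symmetric case $b_i>a$, which is handled by the mirror-image argument using the $M$ \emph{largest} $\alpha$-realisations, and reconciling both sides when the block contains elements on both sides of $a$; this is routine once one notes the two sides do not interfere. Modulo this bookkeeping, the proof is a direct transcription of Claim~\ref{l:configurations} with ``$\absclass{\str{A}}{k}{a}$'' replaced by ``$\cN_\cC(a)$'' and condition (iii) extended to the $\cS_j$ and $\cS_j^{-1}$ relations.
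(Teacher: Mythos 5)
Your proof skeleton is sound up to the final counting step: the reduction to a single $1$-type and a single side of $a$, and the observation that any $\ell$-tuple of distinct elements of $S \setminus \cN_\cC(a)$ realises the same $a$-configuration as $\langle b_1,\dots,b_\ell\rangle$, are both correct (the latter needs the chain $\prec_k \Rightarrow \prec_j$ for $j \le k$, which follows from nestedness and totality, so that an element witnessing $(a,b)\notin\cS_k$ also witnesses $(a,b)\notin\cS_j$).

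The gap is in the last paragraph: you claim that ``only boundedly many of the smallest $\alpha$-realisations in $\cC$ can lie in $\cN_\cC(a)$'' and then invoke a ``$2M$ smallest, at most $M$ bad'' count. This is wrong on two levels. First, $\cN_\cC(a)$ consists of at most three $E_k$-\emph{classes}, but those classes can each be arbitrarily large, so the number of bad \emph{elements} is not bounded, and the ``each Skolem conjunct contributes one bad element'' heuristic has no content here. Second, and more importantly, the reason $\mathrm{extr}_M$ suffices is not that the bad elements are few but that there are \emph{none at all below $c_\ell$}. Since $\prec_1$ refines $<$, every $E_k$-class is an interval in $<$, so $\cN_\cC(a)\cap\{c : c<a\}$ is an interval abutting $a$ from below (the $E_k$-class of $a$, truncated, together with the $\cS_k$-predecessor class); consequently $S\setminus\cN_\cC(a)$ is a downward-closed subset of $T \eqdef \{c\in\cC \mid \type{\str{A}}{c}=\alpha\}$. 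Since $S\setminus\cN_\cC(a)$ contains the $\ell$ distinct elements $b_1,\dots,b_\ell$, its $\ell$ smallest elements coincide with the $\ell$ smallest of $T$, and $\ell\le M$ places them in $\mathrm{extr}_M(T) \subseteq W^{\mathrm{rem}}_{\RN{1}}$ with zero slack. You did flag exactly this point as the ``main obstacle,'' but your attempted resolution goes the wrong way; the interval structure, not a cardinality bound on $\cN_\cC(a)$, is what makes $\mathrm{extr}_M$ (as opposed to a larger extremal cut) enough.
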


\begin{claim}\label{l:siemaclaim1}
The size of $W^{\mathrm{rem}}$ is at most $6 \cdot M^3 \cdot |\AAA|$.
\end{claim}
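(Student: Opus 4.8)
The plan is a straightforward counting argument, running exactly parallel to the size analysis of Claim~\ref{l:sizeanalysis}. First I would bound $|W^{\mathrm{rem}}_{\RN{1}}|$. The key definitional fact is that for any finite set $S \subseteq \N$ the set $\mathrm{extr}_M(S)$ has at most $2 \cdot M$ elements: either $|S| \le 2M$ and $\mathrm{extr}_M(S) = S$, or $|S| > 2M$ and $\mathrm{extr}_M(S)$ consists of exactly the $M$ smallest together with the $M$ largest elements of $S$. Since $W^{\mathrm{rem}}_{\RN{1}}$ is the union, over the $|\AAA|$ many $1$-types $\alpha$, of $\mathrm{extr}_M$ applied to $\{ a \in \cC \mid \type{\str{A}}{a} = \alpha \}$, this yields $|W^{\mathrm{rem}}_{\RN{1}}| \le 2 \cdot M \cdot |\AAA|$.

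Next I would use that each witness function $\ff_m \colon A \to A$ is a genuine function, so $|\ff_m[S]| \le |S|$ for every $S \subseteq A$; taking the union over the $M$ indices $m \in [M]$ multiplies this bound by at most $M$. Applying this to $W^{\mathrm{rem}}_{\RN{1}}$ gives $|W^{\mathrm{rem}}_{\RN{2}}| \le M \cdot |W^{\mathrm{rem}}_{\RN{1}}| \le 2 \cdot M^2 \cdot |\AAA|$, and iterating once more gives $|W^{\mathrm{rem}}_{\RN{3}}| \le M \cdot |W^{\mathrm{rem}}_{\RN{2}}| \le 2 \cdot M^3 \cdot |\AAA|$. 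Since $W^{\mathrm{rem}} \subseteq W^{\mathrm{rem}}_{\RN{1}} \cup W^{\mathrm{rem}}_{\RN{2}} \cup W^{\mathrm{rem}}_{\RN{3}}$, we conclude
\[
|W^{\mathrm{rem}}| \;\le\; 2 M |\AAA| + 2 M^2 |\AAA| + 2 M^3 |\AAA| \;\le\; 6 \cdot M^3 \cdot |\AAA|,
\]
where the last inequality uses $M \ge 1$ (the degenerate case $M = 0$ is trivial, since then all three sets are empty).

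There is no genuine obstacle in this claim: it is a purely arithmetic bound, and the only two points that require a moment's care are the definitional observation that $\mathrm{extr}_M$ never returns more than $2M$ elements and the fact that images under a single function do not increase cardinality. Everything beyond that is bookkeeping, identical in spirit to the size estimate already carried out for $\cD$ in Claim~\ref{l:sizeanalysis}.
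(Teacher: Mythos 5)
Your proof is correct and follows exactly the approach the paper uses (the paper does not spell out a proof for Claim~\ref{l:siemaclaim1}, but your argument mirrors the one it gives for the analogous Claim~\ref{l:sizeanalysis}): bound $|W^{\mathrm{rem}}_{\RN{1}}|$ by $2M|\AAA|$ via the $\mathrm{extr}_M$ cardinality bound, propagate a factor of $M$ through $W^{\mathrm{rem}}_{\RN{2}}$ and $W^{\mathrm{rem}}_{\RN{3}}$ using that each $\ff_m$ is a function, and sum. The arithmetic and the handling of $W^{\mathrm{rem}}$ as the intersection of the union with $\cC$ are all in order.
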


\smallskip
\noindent
\emph{Local witnesses.}
We now turn our focus to local witnesses.
Our goal is to associate with each pair of adjacent $E_k$-classes within $\cC$ a small set of elements sufficient to realise any possible configuration of local witnesses.
A subtlety arises because an element may require witnesses that are connected via $\cS_{j+1}, \cS_{j+2}, \dots$ but not via $\cS_{1}, \dots, \cS_{j}$, for some $j \in [k]$.
To handle this, we prepare the sets in a stratified manner, reflecting the levels $1$ through $k$ of the nested equivalence relations.

Let $\cE_1 \prec^{\str{A}}_k \dots \prec^{\str{A}}_k \cE_L$ be the sequence of all $E_k$-classes in $\cC / E^{\str{A}}_k$, ordered according to $\prec^{\str{A}}_k$ (i.e., $\cE \prec^{\str{A}}_k \cE'$ iff $a \prec^{\str{A}}_k b$ for each $a \in \cE$ and each $b \in \cE'$).  
Here, $L$ denotes the total number of such $E_k$-classes, i.e., $L = |\cC/E^{\str{A}}_k|$.  

For each $i \in [L]$ and each level $j \in [k{-}1]$, we denote by $\cE_{i,j}^{\min}$ and $\cE_{i,j}^{\max}$, respectively, the $\preceq^{\str{A}}_j$-minimal and $\preceq^{\str{A}}_j$-maximal $E_j$-classes within the quotient $\cE_i / E^{\str{A}}_j$.  
For convenience, we introduce sentinel sets $\cE^{\min}_{i,0} \eqdef \emptyset$ and $\cE^{\max}_{i,0} \eqdef \emptyset$, corresponding to the trivial level $j=0$.

For every $i \in [L]$, we choose $V_i \subseteq \cE_i$ as
\[\bigcup_{\alpha \in \AAA}\bigcup_{j=1}^{k}{{\rm extr}_M\big(\{ a \in \cE_i \setminus (\cE^{\rm min}_{i,j-1} \cup \cE^{\rm max}_{i,j-1}) \mid \type{\str{A}}{a} = \alpha \}\big)}.\]
That is, for each $1$-type $\alpha$ and each level $1 \leq j \leq k$, we take the $M$-extremal elements of those members of $\cE_i$ of type $\alpha$ that are outside the $(j{-}1)$-level minimal and maximal $E_{j-1}$-classes $\cE_{i,j-1}^{\min}$ and $\cE_{i,j-1}^{\max}$.

By construction, each set $V_i$ is a small subset of $\cE_i$ that will suffice to realise every local witness configuration for elements that lie inside the two adjacent $E_k$-classes.

\begin{claim}\label{l:configurations3}
Let $\ell \in [M]$, let $j\in[L]$, and let $a\in\cE_j$.  
Suppose $b_1,\dots,b_\ell$ are distinct elements of some neighbouring class $\cE_i$ with $|i-j|=1$.  
Then there exist distinct elements $c_1,\dots,c_\ell\in V_i$ such that the tuples $\langle b_1,\dots,b_\ell\rangle$ and $\langle c_1,\dots,c_\ell\rangle$ realise the same $a$-configuration.
\end{claim}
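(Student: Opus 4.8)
The plan is to mimic the proof of Claim~\ref{l:configurations} from Section~\ref{s:A}, now using the stratified sets $V_i$ instead of the global $W_{\RN{1}}$. Fix $\ell\in[M]$, $j\in[L]$, $a\in\cE_j$, a neighbouring class $\cE_i$ with $|i-j|=1$, and distinct $b_1,\dots,b_\ell\in\cE_i$. As in the earlier proof, the first move is to partition the $b_t$'s by their $1$-types and by their relation to $a$, treating each part independently and then gluing the resulting tuples; so we may assume all the $b_t$'s share a single $1$-type $\alpha$ and are related to $a$ in the same way (all $<a$, or all $=a$ — impossible here since $a\notin\cE_i$ — or all $>a$). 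In fact, since $a\in\cE_j$ and all $b_t\in\cE_i$ with $i\neq j$, all $b_t$ lie strictly on the same side of $a$; say $b_t<a$ for all $t$ (the other case is symmetric), which already fixes condition~(ii) of the $a$-configuration definition.

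Next I would identify the level at which the $E_k$-classes $\cE_i$ and $\cE_j$ first merge. Since $|i-j|=1$, the classes $\cE_i$ and $\cE_j$ are $\preceq_k$-consecutive; let $j^{\star}\in[k]$ be the least index with $(a,b_t)\in E^{\str A}_{j^\star}$ for the $b_t$'s — equivalently, the least level whose equivalence class contains both $\cE_i$ and $\cE_j$ — and note that for $p<j^{\star}$ we have $(a,b_t)\notin E^{\str A}_p$, while the pair $(\cE_i,\cE_j)$ being adjacent means $(a,b_t)\in\cS^{\str A}_p\cup(\cS^{\str A}_p)^{-1}$ exactly on the levels $p$ with (roughly) $p<j^{\star}$, consistently with the sign of $b_t-a$. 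The crucial point is that all elements of $\cE_i$ agree with one another on all of conditions (i)–(iii) relative to $a$ \emph{except} possibly on the internal $E_p,\cS_p,\cS_p^{-1}$ structure for $p\le k$; but for $p\ge j^{\star}$ the relations are constant on $\cE_i$, and for $p<j^{\star}$ the relevant data is determined by whether $b_t$ lies in the $\preceq_p$-minimal or $\preceq_p$-maximal $E_p$-class of $\cE_i$ (on the side facing $\cE_j$). Hence, if we pick the $c_t$'s among elements of $\cE_i$ of type $\alpha$ that avoid all the boundary classes $\cE^{\min}_{i,p-1}$ and $\cE^{\max}_{i,p-1}$ for $p\le k$, every such $c_t$ realises the same $a$-configuration as each $b_t$.

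It then remains to ensure that $\ell$ such ``interior'' elements can actually be found inside $V_i$. Here I would run the counting argument of Claim~\ref{l:configurations}: let $S$ be the set of elements of $\cE_i$ of type $\alpha$ lying on the correct side of $a$ and outside the boundary classes $\cE^{\min}_{i,p-1},\cE^{\max}_{i,p-1}$; any $\ell$ distinct elements of $S$ work. If $|S|\le M$ we are done since then $S\subseteq V_i$ by construction (the $j=k$ stratum of the definition of $V_i$ takes $M$-extremal elements of exactly the set of type-$\alpha$ elements outside $\cE^{\min}_{i,k-1}\cup\cE^{\max}_{i,k-1}$, and these boundary classes at level $k-1$ dominate all the lower-level ones). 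Otherwise, take the $\ell\le M$ elements of $S$ closest to $a$; among the $2M$-extremal type-$\alpha$ elements of $\cE_i$ there are at most $M$ ``bad'' ones on the wrong side or inside a boundary class, so these $\ell$ chosen elements land in the extremal set feeding into $V_i$.

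I expect the main obstacle to be bookkeeping the interaction between the side of $a$ (encoded by $<$), the successor relations $\cS_p$ and their inverses, and the nested boundary classes: one must check that excluding precisely the classes $\cE^{\min}_{i,p-1}$ and $\cE^{\max}_{i,p-1}$ (and no others) is exactly what is needed to make condition~(iii) hold for the replacements $c_t$, neither too weak (leaving a $c_t$ that is, say, a $\cS_p$-successor of $a$ while $b_t$ is not) nor too strong (excluding so much of $\cE_i$ that $V_i$ fails to contain enough elements). The sentinel convention $\cE^{\min}_{i,0}=\cE^{\max}_{i,0}=\emptyset$ handles the degenerate level $j=0$ cleanly, and the fact that $\cE_i,\cE_j$ are $\preceq_k$-adjacent (rather than arbitrary) is what keeps the case analysis finite; everything else is a routine adaptation of Claim~\ref{l:configurations}.
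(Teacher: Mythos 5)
There is a genuine gap, and it sits exactly at the point your own last paragraph flags as "bookkeeping": you only guard against the replacement $c_t$ being \emph{too close} to the boundary (a $c_t$ that is an $\cS_p$-successor of $a$ while $b_t$ is not), but not against the converse. Suppose $i=j+1$, so $a<b_t$ for every $b_t\in\cE_i$. For $p<k$ one has $(a,b_t)\in\cS^{\str A}_p$ iff $a$ lies in the $\preceq_p$-maximal $E_p$-class of $\cE_j$ \emph{and} $b_t\in\cE^{\min}_{i,p}$. Hence if some $b_t$ does lie in a boundary class $\cE^{\min}_{i,p}$ facing $a$ (so that $(a,b_t)\in\cS^{\str A}_p$), condition~(iii) of the $a$-configuration \emph{forces} $c_t\in\cE^{\min}_{i,p}$ as well. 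Your prescription --- pick all $c_t$ among type-$\alpha$ elements avoiding every $\cE^{\min}_{i,p-1}\cup\cE^{\max}_{i,p-1}$ --- therefore produces the wrong configuration for every such $b_t$: the replacement loses the $\cS_p$-link that the original witness had. (Your auxiliary index $j^\star$ does not help here; since $a$ and the $b_t$ lie in distinct $E_k$-classes of the same $E_{k+1}$-class, $j^\star$ is always $k+1$.)

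This is precisely why $V_i$ is defined as a union over strata $j'=1,\dots,k$ rather than as a single extremal set: the stratum $j'$ takes the $M$-extremal type-$\alpha$ elements \emph{outside} $\cE^{\min}_{i,j'-1}\cup\cE^{\max}_{i,j'-1}$, and its $M$ smallest members land inside $\cE^{\min}_{i,j'}$ (when enough type-$\alpha$ elements exist there), i.e.\ they reproduce the boundary profile of a $b_t$ whose least level of membership in the nested chain $\cE^{\min}_{i,1}\subseteq\cE^{\min}_{i,2}\subseteq\cdots$ is $j'$. The correct argument must therefore refine your partition of the $b_t$'s one step further --- by $1$-type \emph{and} by which boundary classes they inhabit on the side facing $a$ --- and, for each part, draw the replacements from the matching stratum of $V_i$ via the "at most $M$ bad elements among the extremal ones" count. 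Your proof uses only the top stratum $j'=k$ and so covers only the case where no $b_t$ is $\cS_p$-adjacent to $a$ for any $p<k$. (Minor additional slips: the reference to ${\rm extr}_{2M}$ --- the definition of $V_i$ uses ${\rm extr}_M$ --- and to elements "on the wrong side of $a$", which cannot occur since all of $\cE_i$ lies on one side of $a$.)
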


The following claim is an easy corollary of Claim~\ref{l:configurations3}:
\begin{claim}\label{l:elosameonetypes}
  For every $i \in [L]$, we have that
  \[ \{ \type{\str{A}}{a} \mid a \in V_i \} = \{ \type{\str{A}}{a} \mid a \in \cE_i \}. \]
\end{claim}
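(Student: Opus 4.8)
The plan is to derive Claim~\ref{l:elosameonetypes} directly from Claim~\ref{l:configurations3}, using the trivial inclusion $V_i \subseteq \cE_i$ for one direction and a single well-chosen application of the configuration lemma for the other. The inclusion $\{\type{\str{A}}{a} \mid a \in V_i\} \subseteq \{\type{\str{A}}{a} \mid a \in \cE_i\}$ is immediate since $V_i$ is by construction a subset of $\cE_i$, so every $1$-type realised in $V_i$ is realised in $\cE_i$.

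For the reverse inclusion, fix an arbitrary $\alpha \in \{\type{\str{A}}{a} \mid a \in \cE_i\}$; I want to exhibit some element of $V_i$ realising $\alpha$. Pick a witness $b_1 \in \cE_i$ with $\type{\str{A}}{b_1} = \alpha$. To invoke Claim~\ref{l:configurations3}, I need an element $a$ in a class $\cE_j$ adjacent to $\cE_i$ (i.e.\ $|i-j|=1$) to play the role of the ``reference'' element. Since $\cC$ is a single $E_{k+1}$-class and $E_k$ is strictly finer (when $k < K+1$, which holds as $k \in [0,K]$), there is at least one neighbouring class: if $L \geq 2$ take $\cE_j$ with $j = i+1$ or $j = i-1$ as available; the degenerate case $L = 1$ must be handled separately (see below). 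With such an $a \in \cE_j$ chosen, applying Claim~\ref{l:configurations3} with $\ell = 1$ and the singleton tuple $\langle b_1 \rangle$ yields an element $c_1 \in V_i$ realising the same $a$-configuration as $b_1$. Condition~(i) of the definition of $a$-configuration gives $\type{\str{A}}{c_1} = \type{\str{A}}{b_1} = \alpha$, so $\alpha$ is realised in $V_i$, as required.

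The one point needing care is the degenerate case $L = 1$, where $\cC / E^{\str{A}}_k$ consists of a single $E_k$-class $\cE_1 = \cC$ with no neighbour, so Claim~\ref{l:configurations3} is vacuous. Here I would argue separately: inspecting the definition of $V_i$, the level $j = k$ contributes, for each $1$-type $\alpha$, the set $\mathrm{extr}_M(\{a \in \cE_i \setminus (\cE^{\min}_{i,k-1} \cup \cE^{\max}_{i,k-1}) \mid \type{\str{A}}{a} = \alpha\})$. When $k = 0$ the formula for $V_i$ is empty, but then $E_0$ is the identity and each $E_k$-class is a singleton, so $V_i = \cE_i$ trivially and there is nothing to prove; more pertinently, the cases $L=1$ with $k \geq 1$ can be folded into the main construction by noting that whenever $\{a \in \cE_i \mid \type{\str{A}}{a} = \alpha\}$ is nonempty, the $\mathrm{extr}_M$ operation over any of the levels $j$ for which the set $\cE_i \setminus (\cE^{\min}_{i,j-1} \cup \cE^{\max}_{i,j-1})$ still contains an element of type $\alpha$ returns a nonempty set, and for $j=1$ (with $\cE^{\min}_{i,0} = \cE^{\max}_{i,0} = \emptyset$) this set is exactly $\{a \in \cE_i \mid \type{\str{A}}{a} = \alpha\}$, which is nonempty by assumption. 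Hence $V_i$ already contains a witness for $\alpha$ outright.

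I expect the main (indeed only) obstacle to be purely bookkeeping: making sure the appeal to Claim~\ref{l:configurations3} is legitimate, i.e.\ that a neighbouring class genuinely exists in the non-degenerate case, and cleanly dispatching the $L = 1$ boundary case by direct inspection of the definition of $V_i$ rather than through the configuration machinery. No estimates or new combinatorial ideas are needed; this is an easy corollary exactly as the paper announces.
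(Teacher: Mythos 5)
Your argument is sound in its main thrust and aligns with the paper's remark that the claim follows from Claim~\ref{l:configurations3}, but the structure is backwards: the ``degenerate'' argument you reserve for $L=1$ is actually the cleanest proof and covers \emph{all} $L\ge 1$ uniformly (whenever $k\ge 1$). Since $\cE^{\min}_{i,0}=\cE^{\max}_{i,0}=\emptyset$, the $j=1$ term in the definition of $V_i$ is exactly $\bigcup_{\alpha\in\AAA}{\rm extr}_M\bigl(\{a\in\cE_i\mid\type{\str{A}}{a}=\alpha\}\bigr)$. Hence, for every $1$-type $\alpha$ realised in $\cE_i$, the (nonempty) ${\rm extr}_M$ of its realisations already contributes elements of type~$\alpha$ to $V_i$, with no appeal to any neighbouring class. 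Invoking Claim~\ref{l:configurations3} for the generic case adds nothing and needlessly imports the $L\ge 2$ side-condition (and the existence of a reference element $a\in\cE_j$ with $|i-j|=1$) that you then have to discharge separately.

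Separately, your treatment of $k=0$ is wrong. You write ``the formula for $V_i$ is empty, but then $E_0$ is the identity and each $E_k$-class is a singleton, so $V_i=\cE_i$ trivially.'' The first and last clauses contradict each other: with the definitions as written, the inner union $\bigcup_{j=1}^{0}$ is empty, so $V_i=\emptyset$, whereas $\cE_i$ is a nonempty singleton, and the two type-sets differ (one empty, the other a singleton). So the claim, read literally, actually \emph{fails} for $k=0$---plausibly an oversight in the paper's definition of $V_i$ rather than anything you introduced---but you should not paper over it by asserting $V_i=\cE_i$. Either restrict to $k\ge 1$ or note explicitly that the $k=0$ case requires a convention (e.g.\ $V_i\eqdef\cE_i$ when $k=0$); once that is settled, the one-line argument above dispatches the claim for all remaining cases.
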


We now define a local witnessing set for each adjacent pair of $E_k$-classes $\cE_i,\cE_{i+1}$, where $i\in[L-1]$.  
First set
\[
W^{\mathrm{loc}}_{\RN{1}}[i,i{+}1] \eqdef V_i \cup V_{i+1}.
\]
Then define
\[
W^{\mathrm{loc}}_{\RN{2}}[i,i{+}1] \eqdef \bigcup_{m\in[M]} \ff_m\big(W^{\mathrm{loc}}_{\RN{1}}[i,i{+}1]\big),
\]
and
\[
W^{\mathrm{loc}}_{\RN{3}}[i,i{+}1] \eqdef \bigcup_{m\in[M]} \ff_m\big(W^{\mathrm{loc}}_{\RN{2}}[i,i{+}1]\big).
\]
Finally choose $W^{\mathrm{loc}}[i,i{+}1]$ as
\[
\bigl(W^{\mathrm{loc}}_{\RN{1}}[i,i{+}1] \cup 
W^{\mathrm{loc}}_{\RN{2}}[i,i{+}1] \cup W^{\mathrm{loc}}_{\RN{3}}[i,i{+}1]\bigr)
\cap (\cE_i\cup\cE_{i+1}).
\]

\begin{claim}\label{l:siemaclaim2}
For each $i\in[L-1]$, the set $W^{\mathrm{loc}}[i,i{+}1]$ has size at most \(12\cdot M^{3}\cdot K\cdot |\AAA|\).
\end{claim}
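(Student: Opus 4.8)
The plan is to bound $|W^{\mathrm{loc}}[i,i{+}1]|$ by tracking the sizes of the three constituent sets, exactly as in the proof of Claim~\ref{l:sizeanalysis}, but paying attention to the extra level parameter $j$ and the extra $\cS$-related realisations built into the $V_i$'s. First I would bound $|V_i|$: by construction $V_i$ is a union, over the $|\AAA|$ many $1$-types $\alpha$ and over the $k$ levels $j\in[k]$, of sets of the form ${\rm extr}_M(\cdot)$, each of size at most $2M$. Hence $|V_i|\le 2\cdot M\cdot k\cdot|\AAA|\le 2\cdot M\cdot K\cdot|\AAA|$, and therefore $|W^{\mathrm{loc}}_{\RN{1}}[i,i{+}1]| = |V_i\cup V_{i+1}| \le 4\cdot M\cdot K\cdot|\AAA|$.

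Next I would propagate through the witness functions. Since $W^{\mathrm{loc}}_{\RN{2}}[i,i{+}1]=\bigcup_{m\in[M]}\ff_m\big(W^{\mathrm{loc}}_{\RN{1}}[i,i{+}1]\big)$ is a union of $M$ images of a function, each of cardinality at most $|W^{\mathrm{loc}}_{\RN{1}}[i,i{+}1]|$, we get $|W^{\mathrm{loc}}_{\RN{2}}[i,i{+}1]|\le M\cdot|W^{\mathrm{loc}}_{\RN{1}}[i,i{+}1]|\le 4\cdot M^2\cdot K\cdot|\AAA|$. The same reasoning applied once more gives $|W^{\mathrm{loc}}_{\RN{3}}[i,i{+}1]|\le M\cdot|W^{\mathrm{loc}}_{\RN{2}}[i,i{+}1]|\le 4\cdot M^3\cdot K\cdot|\AAA|$. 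Finally, $W^{\mathrm{loc}}[i,i{+}1]$ is a subset of the union of these three sets, so
\[
|W^{\mathrm{loc}}[i,i{+}1]| \ \le\ |W^{\mathrm{loc}}_{\RN{1}}[i,i{+}1]| + |W^{\mathrm{loc}}_{\RN{2}}[i,i{+}1]| + |W^{\mathrm{loc}}_{\RN{3}}[i,i{+}1]| \ \le\ 12\cdot M^3\cdot K\cdot|\AAA|,
\]
using that each of the first two summands is dominated by the third. This is exactly the claimed bound, so there is no real obstacle here — it is a routine counting argument, structurally identical to Claim~\ref{l:sizeanalysis}.

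The only point that deserves a moment's care is the bound on $|V_i|$: one must be sure that iterating the outer union over \emph{both} the $|\AAA|$ types and the $k$ levels, with each ${\rm extr}_M$ contributing at most $2M$ elements, really yields the factor $K$ (and not $K^2$ or anything worse), and that the sentinel sets $\cE^{\min}_{i,0}=\cE^{\max}_{i,0}=\emptyset$ at level $j=0$ do not affect the count. Since the level index $j$ ranges over $[k]\subseteq[K]$, the factor is at most $K$, as wanted. I expect the whole claim to be dispatched in a few lines.
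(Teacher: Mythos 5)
Your proof is correct and is exactly the routine counting argument one would carry out, mirroring the paper's proof of Claim~\ref{l:sizeanalysis}: the only new feature is the extra factor of $k\le K$ arising from the inner union over levels $j\in[k]$ in the definition of $V_i$, which you identify and handle correctly, and the final sum of the three bounds is dominated by $3\cdot 4\cdot M^3\cdot K\cdot|\AAA|=12\cdot M^3\cdot K\cdot|\AAA|$ since $M\ge 1$. The paper leaves this claim without an explicit proof, but the intended argument is the one you gave.
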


\smallskip
\noindent
\textbf{Reducible pairs.}
Having secured both local and remote $\cC$-witnesses in the designated sets, 
we now show that if $L$---the number of $E_k$-classes within $\cC$---exceeds a certain threshold, then certain $E_k$-classes of $\cC$ can be removed without affecting the satisfaction of~$\phi$.

Let $p$ and $q$ be integers with $1 < p \le q < L$.  
We say that an isomorphism between substructures
\[
\rho \colon \str{A}~\restr~W^{\mathrm{loc}}[p{-}1,p] \longrightarrow \str{A}~\restr~W^{\mathrm{loc}}[q,q{+}1]
\]
is \emph{canonical} if:
\begin{enumerate}\itemsep0pt
  \item $\rho$ maps $V_{p{-}1}$ to $V_{q}$ and $V_{p}$ to $V_{q{+}1}$;
  \item for each $m \in [M]$ and every $a, b \in W^{\mathrm{loc}}[p{-}1,p]$,  
  we have that $b = f_m(a)$ iff $\rho(b) = f_m(\rho(a))$.
\end{enumerate}

This setup of canonical isomorphisms is tailored for the subsequent reduction step:  
If \(W^{\mathrm{loc}}[p{-}1,p]\) and \(W^{\mathrm{loc}}[q,q{+}1]\) are canonically isomorphic,  
then, after removing all \(E_k\)-classes \(\cE_p, \dots, \cE_q\),  
the classes \(\cE_{p-1}\) and \(\cE_{q+1}\) become consecutive.  
In this case, the canonical isomorphism allows us to naturally \emph{glue} together  
the local witnessing configurations of \([p{-}1,p]\) and \([q,q{+}1]\)  
to obtain a new local witnessing block \([p{-}1,q{+}1]\).  

With the above intuitions in mind, 
we introduce the notion of reducible pairs pinpointing
segments of consecutive $E_k$-classes that may be safely removed
while still preserving the satisfaction of~$\phi$.
Formally, a pair \((p,q)\) of integers with $1 < p \le q < L$ is \emph{reducible} if:
\begin{itemize}\itemsep0pt
  \item\label{defreduciblepair1} For every $i \in [p{-}1,\,q{+}1]$, we have $\cE_i \cap W^{\mathrm{rem}} = \emptyset$.
  \item\label{defreduciblepair2} 
    The substructures of $\str{A}$ induced by the sets $W^{\mathrm{loc}}[p{-}1,p]$ and $W^{\mathrm{loc}}[q,q{+}1]$ are canonically isomorphic.
\end{itemize}

\begin{lemma}\label{l:reducespan}
  Let $\str{A}$ be a finite model of $\phi$. Fix $k \in [0,K]$,  
  and let $\cC \in A / E^{\str{A}}_{k+1}$.
  Suppose that $\cE_1 \prec^{\str{A}}_k \dots \prec^{\str{A}}_k \cE_L$ are precisely the $E_k$-classes of $\cC / E^{\str{A}}_k$.
  Assume that the sets $W^{\mathrm{rem}}$ and
  $W^{\mathrm{loc}}[1,2],\dots,W^{\mathrm{loc}}[L{-}1,L]$
  have been defined as above.
  If $(p,q)$ is a reducible pair for some $1 < p \le q < L$,  
  then there exists a model $\str{B}$ of $\phi$ with domain
  \[
    B \;=\; A \setminus \bigcup_{i \in [p,q]} \cE_i.
  \]
\end{lemma}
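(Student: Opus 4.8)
\textbf{Proof plan for Lemma~\ref{l:reducespan}.}
The plan is to build $\str{B}$ as the substructure of $\str{A}$ induced by $B = A \setminus \bigcup_{i\in[p,q]}\cE_i$, with a controlled redefinition of some $2$-types so that every element still has all its witnesses. First I would check that the special symbols behave well: since $B$ is a union of full $E_k$-classes (we delete whole classes $\cE_p,\dots,\cE_q$), the preorders $\preceq_j^{\str{A}}$ restricted to $B$ are still total preorders, the $E_j$ and $\cS_j$ relations are coherently induced, and nestedness is preserved. The only genuine change is that $\cE_{p-1}$ and $\cE_{q+1}$, which were separated in $\str{A}$, become $\cS_k$-consecutive in $\str{B}\restr B$; more generally, for each level $j\le k$, the $\preceq_j$-successor/predecessor structure across the former gap has to be rewired. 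This is exactly what the canonical isomorphism $\rho\colon \str{A}\restr W^{\mathrm{loc}}[p{-}1,p]\to \str{A}\restr W^{\mathrm{loc}}[q,q{+}1]$ is set up to handle: elements of $\cE_{p-1}$ (more precisely their representatives in $W^{\mathrm{loc}}[p{-}1,p]$) that in $\str{A}$ were related by $\cS_j^{-1}$ to witnesses inside $\cE_p$ will, in $\str{B}$, need to find analogous witnesses in $\cE_{q+1}$, and $\rho$ identifies the right targets.

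Next I would classify witnesses exactly as in the sketch preceding Lemma~\ref{l:pumping}: for each $a\in B$ and each $m\in[M]$, the witness $\ff_m(a)$ is either internal ($E_k$-related to $a$), external (outside $\cC$), a remote $\cC$-witness, or a local $\cC$-witness. Internal and external witnesses survive untouched since we keep $a$'s $E_k$-class and the whole of $A\setminus\cC$ intact. For remote $\cC$-witnesses of elements in the surviving part of $\cC$: by the reducibility hypothesis each deleted class $\cE_i$, $i\in[p,q]$, is disjoint from $W^{\mathrm{rem}}$, hence no surviving element used a deleted class as a remote witness via $\ff_m$; and if some $a\notin W^{\mathrm{rem}}_{\RN{1}}$ needs a remote $\cC$-witness at all, Claim~\ref{l:configurations2} relocates the whole configuration into $W^{\mathrm{rem}}_{\RN{1}}\subseteq B$ by redefining the relevant $2$-types, just as in Lemma~\ref{l:replacement}. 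The circular chain $W^{\mathrm{rem}}_{\RN{1}}\!\to\!W^{\mathrm{rem}}_{\RN{2}}\!\to\!W^{\mathrm{rem}}_{\RN{3}}\!\to\!W^{\mathrm{rem}}_{\RN{1}}$ guarantees termination of this reassignment.

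The core of the argument is the local witnesses, and this is where the canonical isomorphism does the work. The idea is to \emph{glue}: form a new local witnessing block $W^{\mathrm{loc}}[p{-}1,q{+}1]$ for the now-adjacent pair $\cE_{p-1},\cE_{q+1}$ by taking $V_{p-1}\cup V_{q+1}$ together with their $\ff_m$-images, and transport via $\rho$ the $2$-type information that previously described how elements of $\cE_{p-1}$ interacted with their local witnesses in $\cE_p$. Concretely, for $a$ in (a representative set of) $\cE_{p-1}$ whose local $\cC$-witnesses $b_1,\dots,b_\ell$ lay in $\cE_p\cup\cE_{p-1}$, I would use Claim~\ref{l:configurations3} to find $c_1,\dots,c_\ell\in V_{p-1}\cup V_{q+1}$ realising the same $a$-configuration (using $\rho$ to map the $\cE_p$-part of the configuration into $\cE_{q+1}$), then redefine $\tp^{\str{B}}[a,c_i]\eqdef \tp^{\str{A}}[a,b_i]$; symmetrically for elements of $\cE_{q+1}$ looking backwards into $\cE_q$. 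The second clause of canonicity ($b=\ff_m(a)\iff\rho(b)=\ff_m(\rho(a))$) ensures these two redefinitions are mutually consistent, so no $2$-type is assigned twice with conflicting values. One then runs the same bounded-round witness-reassignment as in Claim~\ref{l:smallclaimfo2}, iterating over the finitely many elements of $B\setminus(W^{\mathrm{loc}}_{\RN{1}}\cup W^{\mathrm{loc}}_{\RN{2}})$-type blocks, and verifies the analogue of Claim~\ref{l:smallclaimfo2}: only the $1$-types and $2$-types of $\str{A}$ are realised, the special relations are those of $\str{A}\restr B$ (with the single rewiring across the gap), each surviving element retains witnesses for every Skolem conjunct, and hence $\str{B}\models\phi$.

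The main obstacle I expect is the bookkeeping around the induced successors at all levels $j\le k$ simultaneously. Deleting a long run of $E_k$-classes can collapse not only $\cS_k$-adjacency but also, within $\cE_{p-1}$ and $\cE_{q+1}$, the way their internal $E_j$-substructure connects to the outside via $\cS_{j+1},\dots,\cS_k$; this is precisely why $V_i$ was built in the stratified way (excluding the $\preceq_j$-minimal and maximal sub-blocks $\cE^{\min}_{i,j-1},\cE^{\max}_{i,j-1}$ at each level). I would need to argue carefully that after gluing, every element's $a$-configuration with respect to the $\cS_j$ and $\cS_j^{-1}$ predicates is still faithfully realised — in particular that an element which in $\str{A}$ had \emph{no} $\cS_k$-predecessor inside $\cC$ (because it sat in $\cE_1$ or at a class boundary) does not spuriously acquire one, and vice versa. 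Checking that the finitely many boundary cases are covered by the sentinel sets $\cE^{\min}_{i,0}=\cE^{\max}_{i,0}=\emptyset$ and by the $M$-extremal choices is the delicate part; the rest is a faithful adaptation of the proof of Lemma~\ref{l:replacement}.
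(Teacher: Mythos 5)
Your plan tracks the paper's proof closely and correctly for the first two stages: start from the induced substructure $\str{A}\restr B$, note that the special relations stay coherent because whole $E_k$-classes are deleted, classify witnesses as internal/external/remote/local, use Claim~\ref{l:configurations2} to relocate remote $\cC$-witnesses into $W^{\mathrm{rem}}_{\RN{1}}$, and use the canonical isomorphism $\rho$ together with Claim~\ref{l:configurations3} to relocate the local $\cC$-witnesses of elements of $\cE_{p-1}$ that originally pointed into $\cE_p$ (into $V_{q+1}$; not $V_{p-1}\cup V_{q+1}$ — witnesses in $\cE_{p-1}$ are internal and need no replacement, and those in $\cE_{p-2}$ survive as is), and symmetrically for $\cE_{q+1}$.

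However, there is a concrete step you leave out, and it is not merely ``delicate bookkeeping'' — it is a load-bearing part of the construction. After the witness reassignments, there remain many pairs $(a,b)\in\cE_{p-1}\times\cE_{q+1}$ whose $2$-type has not been touched and is therefore still $\type{\str{A}}{a,b}$. But in $\str{A}$ the classes $\cE_{p-1}$ and $\cE_{q+1}$ were far apart, so that $2$-type asserts $\neg\cS_k(a,b)$ (and, for several $j<k$, $\neg\cS_j(a,b)$ as well), contradicting the geometry of $\str{B}$, where these two classes are now $\cS_k$-consecutive. Worse, even if one simply flipped the $\cS_j$-bits, one would have no guarantee that the resulting $2$-type satisfies the universal conjunct $\psi_0$. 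The paper resolves this by a dedicated final pass: for every such unmodified pair $(a,b)$, it uses Claim~\ref{l:elosameonetypes} and the canonical isomorphism between $\str{A}\restr V_p$ and $\str{A}\restr V_{q+1}$ to find $b'\in\cE_p$ with $\type{\str{A}}{b'}=\type{\str{A}}{b}$, and then sets $\type{\str{B}}{a,b}\eqdef\type{\str{A}}{a,b'}$. Since $(a,b')$ were $\cS_k$-adjacent in $\str{A}$, this $2$-type is already a genuine $\str{A}$-realised $2$-type compatible with the new adjacency and with $\psi_0$. Your proposal acknowledges the gap intuitively (``the single rewiring across the gap'', ``spuriously acquire a predecessor'') but frames it as a verification to be checked about witness configurations; the actual missing content is a repair applied to \emph{all} cross-gap pairs, not only those touched by the witnessing machinery, and the mechanism (type-matched copy from $\cE_p$ via Claim~\ref{l:elosameonetypes}) is what you would need to supply to close the argument.
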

\begin{proof}[Proof]
Let initially $\str{B}$ denote the structure $\str{A}\restr B$.
In this structure $\str{B}$, $\cE_{p-1}$ and $\cE_{q+1}$ are now adjacent $E_k$-classes.
Since $(p,q)$ is assumed to form a reducible pair, the sets $W^{\rm loc}[p{-}1,p]$ and $W^{\rm loc}[q,q{+}1]$ are canonically isomorphic. 
Therefore, it makes sense to speak about $W^{\rm loc}[p{-}1,q{+}1]$, defined in a natural way:
\[
\big(W^{\rm loc}[p{-}1,p] \cap \cE_{p{-}1}\big) \cup \big(W^{\rm loc}[q,q{+}1] \cap \cE_{q{+}1}\big).
\]
We also adjust the structure $\str{B}$ so that $\str{B}~\restr~W^{\rm loc}[p{-}1,q{+}1]$ is isomorphic to $\str{A}~\restr~W^{\rm loc}[p{-}1,p]$ in an obvious way.

Now, we adjust some $2$-types in the so-modified structure $\str{B}$ so that it provides replacements for all witnesses originally provided in $\bigcup_{i \in [p,q]}\cE_{i}$.
This is similar to the final part of the proof of Lemma~\ref{l:replacement} (i.e., the part in which we construct a sequence of structures $\str{B}_0,\str{B}_1,\dots,\str{B}_N$).

For each element of $a \in B \setminus (W^{\rm rem}_{\RN{1}} \cup W^{\rm rem}_{\RN{2}})$,
we search for its remote witnesses in the set $W^{\rm rem}_{\RN{1}}$ using Claim~\ref{l:configurations2}.
%Set $b \eqdef \ff_m(a)$ to be a remote witness for $a$, and assume that $b \in A \setminus B$, i.e. $b$ is now missing in $\str{B}$.
Let $b_1,\dots,b_\ell$ be the sequence of all remote $\cC$-witnesses of $a$, where $\ell \le M$.
Using Claim~\ref{l:configurations2}, we find new elements $c_1,\dots,c_\ell \in W^{\rm rem}_{\RN{1}}$ realising the same $a$-configuration.
We set $\type{\str{B}}{a,c_i} \eqdef \type{\str{A}}{a,b_i}$ for every $i \in [\ell]$.

Having remote witnesses secured, we now search for the replacements of local witnesses for elements in $\cE_{p-1}\cup\cE_{q+1}$. 

Fix an element $a \in \cE_{p-1}$ whose local $\cC$-witnesses are possibly missing in $\str{B}$, i.e., $a$ is from the set
\[ \cE_{p-1} \setminus \big(W^{\rm loc}_{\RN{1}}[p{-}1,p] \cup W^{\rm loc}_{\RN{2}}[p{-}1,p]\big).\]
Let $b_1,\dots,b_\ell$ be the sequence of all local $\cC$-witness of $a$, where $\ell \le M$, that were originally contained in the set $\cE_{p}$.
That is, $b_1,\dots,b_\ell$ are the distinct elements of the set
\[ \big\{ \ff_m(a) \mid m \in [M] \big\} \cap \cE_{p}. \]
Using Claim~\ref{l:configurations3}, we find new elements $c_1,\dots,c_\ell \in V_{q+1}$ realising the same $a$-configuration as $b_1,\dots,b_\ell$.
We set $\type{\str{B}}{a,c_i} \eqdef \type{\str{A}}{a,b_i}$ for every $i \in [\ell]$.

Notice that local $\cC$-witnesses of $a$ that were originally in $\cE_{p-2}$ remain present also in $\str{B}$.

In a symmetric way, we provide local $\cC$-witnesses for elements in the set
\[ \cE_{q{+}1} \setminus \big(W^{\rm loc}_{\RN{1}}[q,q{+}1] \cup W^{\rm loc}_{\RN{2}}[q,q{+}1]\big).\]

Therefore, at this moment, every element of $\str{B}$ has all of its local and remote $\cC$-witnesses secured as well as internal and external witnesses.

Finally, we adjust the $2$-types between $\cE_{p-1}$ and $\cE_{q+1}$ that were not modified in any of the previous steps.
This is necessary, as these $2$-types were originally connecting elements that were not $\cS_k$-successive, yet now they are.

Fix $a \in \cE_{p-1}$ and $b \in \cE_{q+1}$ such that the $2$-type of $(a,b)$ is unmodified, i.e., as in $\str{A}$.
Using Claim~\ref{l:elosameonetypes} and the fact that $\str{A}~\restr~V_{p}$ and $\str{A}~\restr~V_{q+1}$ are isomorphic, we can find $b' \in \cE_{p}$ of the same $1$-type as $b$.
We set $\type{\str{B}}{a,b} \eqdef \type{\str{A}}{a,b'}$.

The construction of the model $\str{B}$ for $\phi$ is finished.
\end{proof}

We now show that long enough sequences of $E_k$-classes
must necessarily contain a reducible pair.
\begin{lemma}\label{l:thresholdspan}
  Let $\str{A}$ be a finite model of $\phi$, let $k \in [0,K]$, and let $\cC \in A/E^{\str{A}}_{k+1}$.
  Suppose that $\cE_1 \prec^{\str{A}}_k \dots \prec^{\str{A}}_k \cE_L$ are precisely the $E_k$-classes of $\cC / E^{\str{A}}_k$.
  Assume that the sets $W^{\mathrm{rem}}$ and $W^{\mathrm{loc}}[1,2],\dots,W^{\mathrm{loc}}[L{-}1,L]$ are defined as above.
  If
  \[
    L \;\ge\; c \cdot |W^{\mathrm{rem}}| \cdot 2^{n^2 \times (|\Rels|+M)},
  \]
  where $c >0$ is a constant independent from $\varphi$, and
	\[n = \max_{i \in [L{-}1]} \big|W^{\mathrm{loc}}[i,i{+}1]\big|\]
  and \(\Rels \text{ is the set of relation symbols of }\phi\),
  then there exists a reducible pair $(p,q)$ with $1 < p \le q < L$.
\end{lemma}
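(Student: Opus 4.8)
The statement is a pigeonhole argument: once $L$ is large enough, two of the "local blocks" must be both (a) disjoint from $W^{\mathrm{rem}}$ in the relevant range and (b) canonically isomorphic, which gives a reducible pair. The plan is to first discard the $E_k$-classes that are "contaminated" by $W^{\mathrm{rem}}$, and then apply pigeonhole on the isomorphism type (enriched with witness-function data) of the surviving local blocks.

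\textbf{Step 1: isolate contaminated indices.} Call an index $i \in [L]$ \emph{bad} if $\cE_i \cap W^{\mathrm{rem}} \neq \emptyset$. Since $W^{\mathrm{rem}}$ is a finite set and distinct $\cE_i$ are disjoint, there are at most $|W^{\mathrm{rem}}|$ bad indices. Consequently, there is a contiguous run of \emph{good} indices of length at least $\big(L - |W^{\mathrm{rem}}|\big)/\big(|W^{\mathrm{rem}}|+1\big)$; more crudely, among the indices in $[2,L-1]$ there are at least $L - 2 - |W^{\mathrm{rem}}|$ good ones. For the first bullet of reducibility we actually need the whole window $[p-1,q+1]$ to be good, so I would work instead with the set $I$ of indices $i \in [2,L-1]$ such that $i-1,i,i+1$ are all good; its size is at least $L - O(|W^{\mathrm{rem}}|)$.

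\textbf{Step 2: bound the number of canonical-isomorphism types of local blocks.} For each $i \in [L-1]$, associate to the block $[i,i+1]$ its "enriched type": the isomorphism type of the structure $\str{A}\restr W^{\mathrm{loc}}[i,i{+}1]$, together with the partition of its elements into $V_i$ versus $V_{i+1}$, together with the restriction of the witness functions $\ff_1,\dots,\ff_M$ to this set (recorded as a relation on it, i.e.\ which pairs $(a,b)$ satisfy $b=\ff_m(a)$). Two blocks $[p{-}1,p]$ and $[q,q{+}1]$ whose enriched types agree — after aligning $V_{p-1}\mapsto V_q$, $V_p \mapsto V_{q+1}$ — are canonically isomorphic in the sense defined before the lemma. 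Now I bound the number of such enriched types: each block has at most $n$ elements, each element carries a $1$-type (one of $2^{|\Rels|}$ choices, but it suffices to bound by the data already in $\Rels$), each ordered pair carries a $2$-type plus the bit "$\in V$?" and the $M$ bits "$b=\ff_m(a)$?", so the total information is $O(n^2)$ pairs each with $O(|\Rels|+M)$ bits, giving at most $2^{O(n^2 \times (|\Rels|+M))}$ enriched types. (Here one uses that a $2$-type over the signature of $\phi$ is determined by $O(|\Rels|)$ bits, since $\phi$ has only unary and binary symbols.)

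\textbf{Step 3: pigeonhole and conclude.} If $|I| > 2^{O(n^2 \times(|\Rels|+M))}$, two indices $p, q \in I$ with $p \le q$ — and in fact we may pick them with $p < q$ by taking the two with the same enriched type and relabelling, noting we need $p \le q$ which is immediate — have blocks $[p{-}1,p]$ and $[q,q{+}1]$ of the same enriched type, hence canonically isomorphic; and since $p,q \in I$ every index in $\{p-1,p,p+1\}\cup\{q-1,q,q+1\}$ is good, so in particular $\cE_i \cap W^{\mathrm{rem}}=\emptyset$ for all $i \in [p-1,q+1]$ (note $p-1,\dots,q+1$ all lie between two good indices; if $q-p$ is large this needs $I$ to be chosen among \emph{good} indices, which it is by construction since good-ness of $i$ alone suffices here — so I should define bad at the level of single classes and take $p,q$ with every index in $[p-1,q+1]$ good, which is guaranteed because all of $[p-1,q+1]\subseteq$ the range we kept). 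Thus $(p,q)$ is a reducible pair. Choosing the constant $c$ so that $c\cdot|W^{\mathrm{rem}}| \cdot 2^{n^2\times(|\Rels|+M)}$ dominates $|W^{\mathrm{rem}}| + 2^{O(n^2\times(|\Rels|+M))} + 2$ finishes the proof.

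\textbf{Main obstacle.} The only delicate point is the counting in Step 2 — making sure the "enriched type" captures exactly enough to imply canonicity (it must record the $V_i$/$V_{i+1}$ split and the witness-function graph, not just the bare induced substructure), and confirming that these together with a $2^{O(n^2\times(|\Rels|+M))}$ bound suffice; everything else is bookkeeping with bad indices, for which one must be slightly careful that the \emph{entire} window $[p-1,q+1]$ avoids $W^{\mathrm{rem}}$, not just its endpoints. This is handled by removing the (few) contaminated $E_k$-classes first and applying pigeonhole only inside a clean stretch.
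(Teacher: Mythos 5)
Your proposal follows the same approach as the paper: a pigeonhole argument on the ``enriched types'' (induced substructure, $V_i$/$V_{i+1}$ partition, and witness-function graph) of the local blocks $W^{\mathrm{loc}}[i,i{+}1]$, carried out within a region of $\cC$ whose classes are disjoint from $W^{\mathrm{rem}}$. The counting in Step~2 matches the paper's bound $2^{n^2(|\Rels|+M)}$ on enriched types, and you correctly note that the enriched type must record the $V_i$/$V_{i+1}$ split and the witness-function restriction, not just the bare substructure — otherwise matching types would not yield a \emph{canonical} isomorphism.

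One part of your writeup is muddled, though: the set $I$ you define in Step~1 (indices $i$ with $i-1,i,i+1$ all good) does \emph{not} guarantee that all of $[p-1,q+1]$ is good once you pick $p,q$ with $q-p$ possibly large, since it says nothing about indices strictly between $p+1$ and $q-1$. Step~3 wavers on this and the sentence ``which is guaranteed because all of $[p-1,q+1]\subseteq$ the range we kept'' does not hold for $I$ as defined. The correct mechanism is the one you already state earlier in Step~1: since there are at most $|W^{\mathrm{rem}}|$ bad indices, the good indices split into at most $|W^{\mathrm{rem}}|+1$ maximal contiguous runs, and one run has length $\ge (L-|W^{\mathrm{rem}}|)/(|W^{\mathrm{rem}}|+1)$; applying pigeonhole \emph{inside a single such run} produces $p-1 < q$ with matching enriched types and $[p-1,q+1]$ entirely within the run, hence disjoint from $W^{\mathrm{rem}}$. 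Once you commit to that version and drop $I$, the argument is the same as the paper's.
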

\begin{proof}
This lemma is essentially the ``pumping lemma'' for regular languages.

First we notice that $2^{n^2 \cdot |\Rels|}$ bounds the number of distinct structures over a domain of cardinality $n$;
recall that $\phi$ is in Scott Normal Form, and thus it does not use constants and mentions only relation symbols of arity $1$ and $2$.
Likewise, $2^{n^2 \cdot M}$ bounds the number of possible configurations of witnessing functions.

Therefore, if $L$ exceeds the threshold $2^{n^2 \times (|\Rels| + M)}$, necessary canonically isomorphic structures $\str{A} \restr W^{\rm loc}[i,i{+}1]$ exist for at least two distinct positions $i \in [L-1]$, say, $i \eqdef p{-}1$ and $i \eqdef q$. Assume that $1 < p \le q < L$.

Moreover, if this threshold is additionally multiplied by $c \cdot |W^{\rm rem}|$, for a large enough constant $c > 0$, we can choose $p$ and $q$ so that $\bigcup_{i\in[p-1,q+1]}{\cE_i}$ is disjoint from $W^{\rm rem}$.

Consequently, $(p,q)$ is a reducible pair.
\end{proof}

Having Lemmas~\ref{l:reducespan} and~\ref{l:thresholdspan}, the proof of Lemma~\ref{l:pumping} is nearly complete;  
it remains only to establish the upper bound on the set $\cD$ stated in its first item.

From Lemma~\ref{l:reducespan}, it follows that $\phi$ has a model $\str{B}$ with domain $(A \setminus \cC) \cup \cD$, in which the number of $E_k$-classes in the set $\cD \subseteq \cC$, denoted here by $L$, is below the threshold from Lemma~\ref{l:thresholdspan}.

Let the numbers $c$, $n$, and the set $\Rels$ be as in the statement of Lemma~\ref{l:thresholdspan}.  
Recall that $\AAA$ is the set of all $1$-types over the signature of $\phi$.
By Claim~\ref{l:siemaclaim2}, we have that:
\begin{align*}
  2^{n^2 \cdot (|\Rels|+M)} 
    &\le 2^{(12 \cdot M^3 \cdot K \cdot |\AAA|)^2 \cdot (|\Rels|+M)} \\
    &= 2^{\cO(M^7 \cdot K^2 \cdot |\AAA|^2 \cdot |\Rels|)}.
\end{align*}
Let us denote the exponent $\cO(M^7 \cdot K^2 \cdot |\AAA|^2 \cdot |\Rels|)$ by $d$.  
From Claim~\ref{l:siemaclaim1}, it follows that
\[
  |W^{\mathrm{rem}}| = 6 \cdot M^3 \cdot |\AAA| 
    = 2^{\cO(\log(M) + \log(|\AAA|))},
\]
which is asymptotically negligible compared to $2^d$.  
Consequently, we conclude with
\[
L \le c \cdot |W^{\mathrm{rem}}| \cdot 2^{d} = 2^{\cO(M^7 \cdot K^3 \cdot |\AAA|^2 \cdot |\Rels|)} = 2^{2^{\cO(|\phi|)}},
\]
as $M$, $K$ and $|\Rels|$ are all $\cO(|\phi|)$, and $|\AAA|$ is $2^{\cO(|\phi|)}$.

\section{Details on the \ExpSpace{} Lower Bound} \label{appendix:lower}

In Section~\ref{s:D}, we sketched the proof of the \ExpSpace{} lower bound for the finite satisfiability problem for \FOt$[\succEQ]$.  
Here, we provide some additional details.  
In fact, the result holds already for a highly restricted fragment of \FOt$[\succEQ]$:

\begin{proposition}
  The finite satisfiability problem is \ExpSpace{}-hard
  for the constant-free, equality-free, monadic fragment of \FOt{}
  extended with a single total preorder relation~$\preceq_1$ and its induced successor~$\cS_1$.
\end{proposition}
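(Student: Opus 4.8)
The plan is to reduce from the \emph{exponential-width corridor tiling problem}, which is \ExpSpace-complete: an instance gives a set of colours $\cC$, horizontal and vertical adjacency relations, an initial colour $c_0$ and a final colour $c_1$, and a number $n$ in unary, and asks whether some $2^n \times m$ grid admits a correct tiling. From such an instance I would build, in polynomial time, a sentence $\Psi$ over the monadic, constant-free, equality-free fragment of \FOt{} with $\preceq_1$ and $\cS_1$ that is finitely satisfiable iff the instance is positive. Two facts about the fragment are used throughout: the induced equivalence $E_1$, with $E_1(x,y)\equiv x\preceq_1 y\wedge y\preceq_1 x$, is \FOt{}-definable from $\preceq_1$; and $\cS_1(x,y)$ holds precisely when $x$ lies in the $E_1$-class immediately preceding that of $y$, so that $\cS_1$ links every element of one class to every element of the next, while $\mathrm{first}(x)\eqdef\neg\exists y.\,\cS_1(y,x)$ and $\mathrm{last}(x)\eqdef\neg\exists y.\,\cS_1(x,y)$ pick out the cells of the first and last class.

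The encoding is as announced in Section~\ref{s:D}: domain elements are grid cells, each $E_1$-class is a row, unary predicates $B_0,\dots,B_{n-1}$ store a cell's horizontal coordinate in binary, and unary predicates $P_c$ ($c\in\cC$) record its colour, $\Psi$ asserting that each cell carries exactly one $P_c$. I would use the quantifier-free, $\cO(n^2)$-size formulas $\mathrm{sameaddr}(x,y)\eqdef\bigwedge_{j<n}(B_j(x)\leftrightarrow B_j(y))$, $\mathrm{succaddr}(x,y)$ (the usual binary-increment formula saying ``address of $y$ = address of $x$ $+1$''), $\mathrm{zero}(x)\eqdef\bigwedge_{j<n}\neg B_j(x)$, $\mathrm{max}(x)\eqdef\bigwedge_{j<n}B_j(x)$, and the Boolean combinations $\mathrm{horok}(x,y)$ and $\mathrm{vertok}(x,y)$ over the $P_c$ expressing horizontal and vertical colour compatibility. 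Then $\Psi$ is the conjunction of: (i) each row contains an address-$0$ cell, $\forall x.\,\exists y.\,(E_1(x,y)\wedge\mathrm{zero}(y))$; (ii) every non-maximal-address cell has the next address realised in its row, $\forall x.\,\neg\mathrm{max}(x)\rightarrow\exists y.\,(E_1(x,y)\wedge\mathrm{succaddr}(x,y))$; (iii) horizontal consistency, $\forall x,y.\,(E_1(x,y)\wedge\mathrm{succaddr}(x,y))\rightarrow\mathrm{horok}(x,y)$; (iv) vertical consistency, $\forall x,y.\,(\cS_1(x,y)\wedge\mathrm{sameaddr}(x,y))\rightarrow\mathrm{vertok}(x,y)$; (v) the boundary conditions, $\forall x.\,(\mathrm{first}(x)\wedge\mathrm{zero}(x))\rightarrow P_{c_0}(x)$ and $\forall x.\,(\mathrm{last}(x)\wedge\mathrm{max}(x))\rightarrow P_{c_1}(x)$; and (vi) the technicality $\exists x.\,\neg\mathrm{first}(x)$ of there being at least two rows. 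By inspection $\Psi$ is monadic, constant-free, equality-free, uses only $x$ and $y$, and has polynomial size.

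For soundness I would take the canonical model whose cells are the pairs $(a,i)$ with $0\le a<2^n$ and $1\le i\le m$, whose $E_1$-classes are the rows, with $\preceq_1$ ordering the rows and $\cS_1$ linking consecutive ones, and verify (i)--(vi) directly from correctness of the tiling. For completeness, given any finite $\str{A}\models\Psi$, conjuncts (i) and (ii) force every $E_1$-class to realise all addresses $0,\dots,2^n-1$; numbering the classes $R_1\prec_1\cdots\prec_1 R_m$ with $m\ge 2$ by (vi), I pick for each $i$ and $a$ some cell $x_{i,a}\in R_i$ of address $a$ and set $T(a,i)$ to its unique colour. Conjuncts (iii)--(v), applied to these representatives (using that $R_i,R_{i+1}$ consecutive gives $\cS_1(x_{i,a},x_{i+1,a})$, and that $x_{i,a},x_{i,a+1}\in R_i$ gives $E_1(x_{i,a},x_{i,a+1})$), certify that $T$ is a correct tiling; the arbitrary choice of representatives is harmless precisely because (iii)--(v) are universal.

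The main obstacle is the two-variable restriction: inside an $E_1$-class the elements are pairwise $\preceq_1$-incomparable, so one cannot walk along a row with a successor relation and count off the positions $0,1,\dots,2^n-1$. The resolution is the pair of $\forall\exists$-conjuncts (i)+(ii), which propagate realisability of address $0$ up to address $2^n-1$ within every class using only the binary bits, combined with the observation that --- since every tiling constraint is a $\forall$-sentence --- a row may contain several cells of the same address, or several colours among them, without any harm, so no ``one cell per row and address'' property needs to be (indeed, can be) expressed. The only remaining points needing care are the routine definability of $\mathrm{succaddr}$ as a polynomial-size formula and the fact that $E_1$ and the first/last-row predicates are \FOt{}-expressible from $\preceq_1$ and $\cS_1$. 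The matching \ExpSpace{} upper bound is the relevant special case of Theorem~\ref{t:four}.
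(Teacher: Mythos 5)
Your proposal is correct and follows essentially the same route as the paper's proof in Appendix~\ref{appendix:lower}: the identical reduction from exponential-width corridor tiling, with cells as domain elements, rows as $E_1$-classes, binary addresses via the unary bits $B_j$, and the same grid-formation, boundary, horizontal and vertical conjuncts. The only (immaterial) differences are that you drop the paper's ``same address within a row implies same colour'' conjunct --- which your universality argument for the choice of representatives correctly shows to be unnecessary --- and add a harmless ``at least two rows'' conjunct.
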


The reduction is from the \emph{corridor tiling problem}, which is known to be \ExpSpace{}-complete.  
An instance of this problem is a tuple $\cT = \langle \cC, c_{0}, c_{1}, \cH, \cV, n \rangle$, where $\cC$ is a non-empty finite set of \emph{colours},  
$c_{0} \in \cC$ is the \emph{initial} colour, $c_{1} \in \cC$ is the \emph{final} colour, $\cH, \cV \subseteq \cC \times \cC$ are the sets of \emph{horizontal} and \emph{vertical} constraints, and $n$ is a natural number given in unary, respectively.  
Then the instance $\cT$ is \emph{solvable} if there exists $m \in \N$ and a \emph{tiling} function $f\colon \cG \to \cC$ of a grid $\cG \eqdef [0,2^n{-}1] \times [0,m{-}1]$ such that:
\begin{itemize}\itemsep=0pt
  \item the initial and final conditions hold: $f(0,0) = c_0$ and $f(2^n{-}1,m{-}1) = c_1$;
  \item the horizontal constraints hold: for all $i \in [0,2^n{-}2], j \in [0,m{-}1]$, we have $\langle f(i,j), f(i{+}1,j) \rangle \in \cH$;
  \item the vertical constraints hold: for all $i \in [0,2^n{-}1], j \in [0,m{-}2]$, we have $\langle f(i,j), f(i,j{+}1) \rangle \in \cV$.
\end{itemize}
Finally, the problem is to decide if the instance is solvable.

\smallskip
In the following, given a corridor tiling instance $\cT$,  
we construct a formula $\Theta$ that is (finitely) satisfiable if and only if $\cT$ is solvable. %admits a solution.

We introduce $n$ unary predicates $B_j$ for $0 \le j < n$,  
together with a unary predicate $P_c$ for each colour $c \in \cC$.  

The predicates $B_j$ serve as bits, allowing us to associate with each element 
a number between $0$ and $2^{n}-1$ in the natural binary encoding.  
We write $\phi_{\mathrm{eq}}(x,y)$ and $\phi_{\mathrm{succ}}(x,y)$ for standard (polynomial-size in $n$) formulas asserting that 
the numbers assigned to $x$ and $y$ are equal, respectively successive.  
Similarly, $\phi_{\mathrm{all\text{-}off}}(x)$ and $\phi_{\mathrm{all\text{-}on}}(x)$ are abbreviations for formulas 
stating that all the $B_j$ are false (so $x$ encodes $0$), and, respectively,  all the $B_j$ are true 
(so $x$ encodes $2^n{-}1$).

The intended encoding is as follows.  
Domain elements correspond to individual grid cells.
The predicates $B_j$ determine the horizontal coordinate,  
while each $E_1$-equivalence class represents an entire row.  
Within a row, the predicate $P_c$ marks the colour assigned to the corresponding grid cell.  
The successor relation $\cS_1$ connects consecutive $E_1$-classes (i.e., $\preceq_1$-equivalent elements),  
thereby linking each row to its immediate vertical successor.  
In this way, the structure naturally represents a grid of width $2^n$ and height $m$.
The reduction is enforced by a collection of formulas ensuring that the structure encodes a valid tiling.  

\smallskip\noindent
\emph{Grid formation.}  
Every row must contain the leftmost element (position~$0$), and all elements except the last must have a successor,  
ensuring that each row forms a complete horizontal sequence of width $2^n$:

\[
  \forall x.~\exists y.~\phi_{\mathrm{all\text{-}off}}(y) \wedge x E_1 y
\]

\[
  \forall x.~\neg \phi_{\mathrm{all\text{-}on}}(x) \rightarrow \exists y.~\phi_{\mathrm{succ}}(x,y) \wedge x E_1 y
\]

\smallskip\noindent
\emph{Boundary conditions.}  
The leftmost cell of the first row is coloured with $c_0$, and  
the rightmost cell of the last row is coloured with $c_1$:

\[
  \forall x.~\big(\phi_{\mathrm{all\text{-}off}}(x) \wedge \neg \exists y.~\cS_1(y,x)\big) \rightarrow P_{c_0}(x)
\]

\[
  \forall x.~\big(\phi_{\mathrm{all\text{-}on}}(x) \wedge \neg \exists y.~\cS_1(x,y)\big) \rightarrow P_{c_1}(x)
\]

\smallskip\noindent
\emph{Horizontal consistency.}  
Within a row, horizontally aligned positions must agree on their colour,  
and successive horizontal positions must satisfy the horizontal constraints~$\cH$:

\[
  \forall x,y.~\big(x E_1 y \wedge \phi_{\mathrm{eq}}(x,y)\big) \rightarrow \bigwedge_{c \in \cC} \big(P_c(x) \leftrightarrow P_c(y)\big)
\]

\[
  \forall x,y.~\big(x E_1 y \wedge \phi_{\mathrm{succ}}(x,y)\big) \rightarrow \bigvee_{(c,c') \in \cH} \big(P_c(x) \land P_{c'}(y)\big)
\]

\smallskip\noindent
\emph{Vertical consistency.}  
Finally, consecutive rows are linked via $\cS_1$,  
and corresponding positions must respect the vertical constraints~$\cV$:

\[
  \forall x,y.~\big(\cS_1(x,y) \wedge \phi_{\mathrm{eq}}(x,y)\big) \rightarrow \bigvee_{(c,c') \in \cV} \big(P_c(x) \land P_{c'}(y)\big)
\]

Together, these formulas force any finite model of $\Theta$ to encode exactly a valid tiling of a $2^n \times m$ grid, for some $m \in \N$,
satisfying both boundary and propagation conditions.  
This completes the reduction.

\end{document}